\newenvironment{proof}[1][Proof]{\textit{#1:} }{ $\square$}
\newcommand{\figscaleee}{0.38}
\providecommand{\keywords}[1]
{
 \small 
 \textbf{\textit{Keywords---}} #1
 }
\begin{document}
	\newcommand{\be}{\begin{equation}}
	\newcommand{\ee}{\end{equation}}
	\newtheorem{corollary}{Corollary}[section]
	\newtheorem{remark}{Remark}[section]
	\newtheorem{definition}{Definition}[section]
	\newtheorem{theorem}{Theorem}[section]
	\newtheorem{proposition}{Proposition}[section]
	\newtheorem{lemma}{Lemma}[section]
	\newtheorem{help1}{Example}[section]

\title{The Lefever-Lejeune nonlinear lattice: convergence dynamics and the structure of equilibrium states}

\author{N. I. Karachalios, P. Kyriazopoulos \footnote{ Corresponding Author.}  and K. Vetas}
\affiliation{Department of Mathematics, University of the Aegean, Karlovassi, 83200  Samos, Greece}

\begin{abstract}
We consider the Lefever-Lejeune nonlinear lattice, a spatially discrete propagation-inhibition model describing the growth of vegetation densities in dry-lands. We analytically identify parametric regimes distinguishing between decay (associated with spatial extinction of vegetation patches) and potentially non-trivial time-asymptotics.  To gain insight on the convergence dynamics, a stability analysis of spatially uniform states is performed, revealing the existence of a threshold for the discretization parameter which depends on the lattice parameters, below which their destabilization occurs and spatially non-uniform equilibrium states may emerge. Direct numerical simulations justified that the analytical stability criteria and parametric thresholds effectively describe the above transition dynamics and revealed the rich structure of the equilibrium set. Connections with the continuous sibling Lefever-Lejeune partial differential equation are also discussed.
\end{abstract}

\keywords{
Lefever-Lejeune, lattice dynamics, decay estimates, pattern formation, dryland vegetation
}

\maketitle

\section{Introduction}

Nonlinear spatially discrete and continuous in time systems, namely, nonlinear lattices, play a significant role in the understanding of physical systems. Fascinating nonlinear phenomena, as the energy equipartition in nonlinear systems and energy localization, have been effectively described by nonlinear lattices as the discrete Klein-Gordon and  the discrete Nonlinear Schr\"{o}dinger equations and their various extended variants.
These fundamental phenomena have been proved to be the underlying mechanisms for the emergence of stationary and travelling inherently discrete localized waveforms in solids, condensed matter, optical fibers and wave guides, even explaining the self-trapping of vibrational energy in proteins and DNA double strand denaturation, \cite{Eil,Braun2004, Panos_book,reviewsC,peyrard}.  

Spatial discretizations of reaction-diffusion systems define another important class of nonlinear lattices, which may exhibit a rich  structure of their equilibrium set, pattern formation and invasion dynamics, even spatiotemporal chaos. These systems have been used as effective models to describe a multitude of phenomena, from phase separation in binary alloys and glasses,  the excitation of travelling pulses in myelinated nerve axons, to pattern formation in cellular networks.  Characteristic examples are the discrete Allen-Cahn type equations with monostable or bistable nonlinearities, the discrete Cahn-Hiliard, and the discrete Swift-Hohenberg equations, \cite{Cook1969, Zinner1992, Zinner1993,Cahn1995,Chow1996,Chow1995MalParI,Chow1995MalParII,Chen2002,Carpio2003,KJ1987,Abell2000,Kusdiantara2017}.

In the present paper we consider a strongly nonlinear lattice, claiming its relevance with yet another exciting theme: the nonlinear physics of ecosystems, and particularly, the vegetation pattern formation process \cite{EhudBook}. The model is the discrete Lefever-Lejeune equation (DLL):
\begin{eqnarray}
\label{eq1}
\dot{U}_n+\frac{\gamma_1}{h^4}U_n\Delta_d^2U_n+ \frac{\gamma_2}{h^2}U_n\Delta_dU_n-\frac{\gamma_3}{h^2}\Delta_dU_n -f(U_n)=0,
\end{eqnarray}
endowed with the initial condition 
\begin{eqnarray}
\label{inc}
U_n(0)=U^0_n,
\end{eqnarray}
and supplemented with suitable boundary conditions, that will be discussed below. In Eq.~(\ref{eq1}),  $U_n(t)$ is the unknown function occupying the lattice site $n$, the parameters $\gamma_i>0$, $i=1,2,3$, while
$h>0$ stands for the lattice spacing.  The nonlinearity 
\begin{eqnarray}
\label{eq2}
f(U)=\alpha U+\beta U^2-U^3,\;\;\alpha,\beta\in\mathbb{R}.
\end{eqnarray} 
The linear operator 
\begin{eqnarray}
\label{eq3}
\left\{\Delta_dU\right\}_{n\in\mathbb{Z}}=U_{n+1}-2U_n+U_{n-1},
\end{eqnarray}
is the one-dimensional discrete Laplacian. Then, $\Delta^2_dU:=\Delta_d[\Delta_dU]$, defines the associated {\em discrete biharmonic operator}, i.e.,
\begin{eqnarray}
\label{eq4}
\left\{\Delta_d^2U\right\}_{n\in\mathbb{Z}}=U_{n+2}-4U_{n+1}+6U_n-4U_{n-1}+U_{n-2}.
\end{eqnarray} 
In the above discrete set-up, the lattice \eqref{eq1} can be viewed as a  discretization of the Lefever-Lejeune (LL) partial differential equation (PDE),  which in a non-dimensional form reads as
\begin{eqnarray}
\label{eq5}
U_t+\gamma_1UU_{xxxx}+\gamma_2UU_{xx}-\gamma_3U_{xx}-f(U)=0.
\end{eqnarray}
To value the lattice \eqref{eq1}, let us recall some information on the continuous LL model. Equation \eqref{eq5} is a spatially continuous propagation-inhibition model
describing the growth of vegetation density in dry-lands, and is the formal continuum limit of the lattice \eqref{eq1}, as $h\rightarrow 0$. In such resource poor environments, spatial patterns of vegetation are observed,
and LL attempts to explain their formation attributing it to a short-range cooperative and long-range competitive spatial mechanism. It should be remarked, that the  original LL model \cite{LL1997}, is a spatially non-local integral-differential equation which involves a continuous redistribution-kernel convoluted with
a density dependent nonlinearity, encapsulating the dispersal and spatial interactions of individuals. Although the kernel-based models \cite{Borgogno2009} are considered  as more 
realistic, since they capture accurately enough global spatial-interactions involving kernel shapes that are common in nature, the Lefever-Lejeune PDE is a 
biharmonic approximation which is commonly preferred. 
The reason is two-fold: it successfully derives the qualitative behavior of plant 
community systems, and offers a simpler template for numerical investigations and mathematical analysis. 
In the differential form, the short-range corporative interplay among plants is expressed by a linear diffusion term and the non-linear local growth term, 
while non-linear biharmonic and Laplacian diffusion term with negative coefficient imprint the long-range competition for resources.

Numerical and analytical studies in one and two spatial dimensions, 
have revealed the pattern forming potential of the spatially continuous LL equation \cite{LL1997}.
Besides the existence of Turing periodic patterns, 
the LL produces localized solutions such as isolated spots of vegetation, or groups of spots confined 
by the homogeneous zero vegetation \cite{Tlidi2008}. Furthermore, the self-replication capabilities of the equation have been investigated, 
showing that for a particular regime of the aridity parameter in two spatial dimensions, a single circular vegetation patch can destabilize leading to
an elliptical deformation, followed by patch multiplication \cite{Bordeu2016}. This patch splitting phenomenon evolves in time
until the system reaches a self-organized hexagonal pattern.

In the above physical context, reaction-diffusion lattice  systems can be used to model vegetation patterns dynamics, in a situation where space is viewed
as a collection of patches (or cells). Each patch hosts a portion of the population, and so, the entire population can be represented by a lattice vector. According to that point of view, an ODE equation considered on a patch involves apart of the coupling among vector components, dispersal, migration or other effects.  One of the first works where such a model has been used, particularly a first order quasilinear lattice of the form of \eqref{eq1} with $\gamma_1=0$ and $f(U)=U(1-U)$, is \cite{Ares2003}. Therein, numerical results on pattern formation where produced in a random version of the model for the action of diffusion (assuming a random walk pattern).

Regarding such a lattice approach, to highlight the connections and differences between the DLL and its continuous counterpart, it is important to recall the physical meaning of the parameters involved. The parameter $\alpha =1-\mu$, where $\mu$ represents the mortality to growth rate ratio. 
Therefore, $\mu$ and in turn, $\alpha$,  can be interpreted as a measure of the environment's aridity which characterizes the productivity of the system.
In the local nonlinearity $f$,  the parameter $\beta=\Lambda-1$, where $\Lambda$ represents the cooperation effect influencing the local reproduction. 
This effect is considered weak for  $\beta \leq 0$ ($\Lambda\leq 1$) and strong for $\beta>0$ ($\Lambda>1$). 
The higher order derivative nonlinearities and the linear diffusion terms of strengths $\gamma_i$, are  modelling  the long and  short-range interaction effects, respectively,  between the vegetation  patches. Particularly, $\gamma_3=\frac{1}{2}l^2$,
with $l$ representing the ratio of facilitative to inhibitory interaction zones. The latter are the spatial distances beyond which cooperation and competition becomes negligible.
%

Evidently, with the presence of  the discreteness effect, the DLL \eqref{eq1} seems to be particularly relevant to describe the interaction mechanisms between the vegetation densities of the patches. In the discrete system \eqref{eq1}, the strength of spatial interactions  depends explicitly on the distance between the patches $h$, which appears in the coefficients  of the discrete spatial-operators. For instance, one of the most interesting effects to highlight, which can't be modelled by the continuous counterpart, is the non-homogenous dependence of  the spatial interactions.
It becomes clear that as $h$ decreases, the strength of long-range (next neighbor) competition  increases much faster than the short-range (next neighbor) facilitation. 
Similarly, for $h\ll1$, both spatial interactions are weakened. Due to such an interplay between discreteness and nonlinearity incorporating the above dependencies, we may expect novel effects in the dynamics. 

Up to our knowledge, the present paper is a first attempt to consider the DLL equation, investigating in a rather systematical manner, well-posedness and dynamical features of the model.  The presentation of the results has as follows. In the analytical considerations, the  DLL equation \eqref{eq1} will be supplemented with vanishing boundary conditions in an infinite lattice, which corresponds to the case of an infinite dimensional dynamical system, and with Dirichlet and periodic boundary conditions,  cases which correspond to  a finite dimensional one.  Some extra care is given in Section II, to discuss the system's set-up with the above conditions, due to the presence of the higher order discrete operators; we also present some of their properties, while further details are given in the Appendix \ref{App}. The local existence in the sequence phase spaces $\ell^2$ and $\ell^1$ is established with fixed point arguments. The latter case in the non-reflexive Banach space $\ell^1$  highlights an unusual feature of the local in time semiflow, to be strongly continuous. This feature is a result of the Schur property of $\ell^1$, where weak convergence coincides with strong (in norm) convergence. 

In Section III, we prove uniform bounds for the global existence of solutions. We distinguish between two cases, both being physically significant. In the first case, we identify several parametric regimes for the parameters $\alpha$ and $\beta$, for the global stability of the trivial steady-state. This case is of particular importance since it can be associated with parametric conditions for the spatial extinction of vegetation densities, and consequently, with the emergence of desertification \cite{sys0,hiD2}. The second one, which is the generic case of uniform in time bounds is associated with the existence of an attracting set, for all the cases of the boundary conditions considered, and the potential convergence to non-trivial equilibrium states. Yet the Schur property, may imply some interesting observations revolving around the uniform compactness of a restricted semiflow on $\ell^1$. It should be warned that the derivation of the above  estimates is heavily depending on the properties of the discrete phase spaces, and it is an interesting question in what extend they are valid in the case of the LL-PDE--see Remark \ref{DtC}.
 
To elucidate further the dynamics, in section IV-A, we perform a detailed linear stability analysis of spatially uniform states, revealing the crucial role of discreteness in their  destabilization, and the  potential emergence of spatially non-uniform states.  Remarkably, we found a critical threshold for the discretization parameter $h$, depending on the amplitude of the unstable homogenous state and the other parameters of the lattice, below which the above destabilization occurs.  A first interesting outcome of the linear stability analysis is the full coincidence of the derived criteria for the stability of the trivial solution, with those derived by the analytical energy estimates for its global stability; in some cases, such a coincidence is in excellent quantitative agreement. A second important outcome is that for initial conditions which are harmonic perturbations of homogeneous states, we were able to derive analytical conditions depending on the lattice parameters, which dictate the period and shape of the potential resulting steady-state.

The analytical arguments are corroborated with direct numerical simulations, whose results are presented in Section IV-B.  First, although we are unable to prove that the system possesses a gradient structure, in all the numerical experiments we identify convergence to an equilibrium. Investigating numerically the analytical stability criteria, we found that generically, they effectively describe the qualitative behaviour of the DLL system regarding the existence of thresholds distinguishing the convergence  to  spatially homogeneous, from  the convergence to spatially  non-homogeneous states and pattern formation.  Furthermore, in the latter case, the numerical simulations revealed that varying the discretization parameter, convergence towards geometrically distinct profiles occurs, and that the equilibrium set even in the one-dimensional model, possesses a richer structure than its continuous limit; a wealth of steady-states is included, from spatially-periodic, mosaics of periodic states and localized states in a periodic background, to localized single or multi-spike equilibria. Although a detailed bifurcation analysis is not within the scopes of the present work, the numerical results  glimpse on the role of the productivity gradient and other parameters of the lattice for a dynamical transition between such different classes of equilibrium states.

The last section summarizes our results and comment on some potential future ideas for extending them to other relevant nonlinear lattice  models for dryland vegetation. 
\section{Functional set-up and local existence of solutions}
This section is devoted to the functional set up of the problem according to the implemented boundary conditions, the description of various properties of the involved linear and nonlinear operators, concluding with the discussion of local existence of solutions. Some interesting mathematical implications appear when the problem assumes initial data in a non-reflexive Banach space.
\subsection{Boundary conditions and phase spaces}
Equation (\ref{eq1}), will be supplemented  with either vanishing conditions in the case of the infinite lattice, and periodic or Dirichlet boundary conditions, which give rise to a finite dimensional system. The former case involves the standard infinite dimensional sequence spaces, while the latter cases are associated with their relevant finite dimensional subspaces.  In each case of boundary conditions, we will recall the definition and properties of the relevant phase spaces for each case of boundary conditions, with additional details given in the Appendix \ref{App}, as well as, properties of the discrete Laplacian and biharmonic operator in these functional set-ups.
\paragraph{Vanishing boundary conditions in an infinite lattice.} In the case of the infinite lattice with vanishing boundary conditions 
\begin{eqnarray}
\label{vanv}
\lim_{|n|\rightarrow\infty}U_n=0,
\end{eqnarray}
the problem will be considered in the standard infinite dimensional sequence spaces, $\ell^p$, $1\leq p\leq\infty$, for which the definition and their key inclusion properties are included in the Appendix \ref{App}. The discrete Laplacian as a linear operator  $\Delta_{d}: {\ell}^2 \to {\ell}^2$, is selfadjoint, as it satisfies 
\begin{eqnarray}
\label{lp6}
(\Delta_{d}U,U)_{\ell^2}&=&-\sum_{n\in\mathbb{Z}}|U_{n+1}-U_n|^2\leq 0,\\
\label{lp7}
(\Delta_{d}U,W)_{\ell^2}&=&-\sum_{n\in\mathbb{Z}}(U_{n+1}-U_n)(W_{n+1}-W_n)=(U,\Delta_{d}W)_{\ell^2}, \quad U,\;W\in {\ell}^2,
\end{eqnarray}
while for its conitnuity
\begin{eqnarray}
\label{lp8}
||\Delta_d U||_{\ell^2}^2\leq 4||U||_{\ell^2}^2,
\end{eqnarray}
see \cite{DNLS2005}. Generically, the operator $\Delta_d:\ell^p\rightarrow \ell^p$, for $1\leq p\leq\infty$ is continuous, that is there exists a constant $C>0$, such that
\begin{eqnarray}
\label{genlp8}
||\Delta_d U||_{\ell^p}\leq C||U||_{\ell^p},\;\;\mbox{for all}\;\; U\in\ell^p.
\end{eqnarray}

The discrete biharmonic operator $\Delta_d^2:\ell^2\rightarrow\ell^2$ is also continuous, satisfying due to \eqref{lp8}, the inequality:
\begin{eqnarray}
\label{lp9}
||\Delta_d^2U||_{\ell_2}^2=||\Delta_d[\Delta_dU]||_{\ell_2}^2\leq 4||\Delta_dU||^2_{\ell^2}\leq 16||U||^2_{\ell^2}.
\end{eqnarray}
Note that for \eqref{lp9}, in order to apply \eqref{lp8}, we have used that $\Delta_dU\in \ell^2$ in the case of the infinite lattice, which is a direct consequence of the fact that $U\in\ell^2$.  Similarly, when $1\leq p\leq\infty$, we have that 
\begin{eqnarray}
\label{genlp9}
||\Delta_d^2 U||_{\ell^p}\leq C||U||_{\ell^p},\;\;\mbox{for all}\;\; U\in\ell^p.
\end{eqnarray}

\paragraph{Dirichlet boundary conditions.} To define a finite dimensional lattice dynamical system from Eq.~\eqref{eq1}, we assume that an arbitrary number of $N+1$ nodes are occupying equidistantly the interval $\Omega=[-L,L]$, with lattice spacing $h=2L/N$. Accordingly, the discrete spatial coordinate is $x_n=-L+nh$, for $n= 0,1,2,\ldots,N$, and in Eq.~\eqref{eq1}, the function $U_n(t)=U(x_n,t)$.  In the case of Dirichlet boundary conditions some extra care should be paid due to the presence of the discrete biharmonic operator $\Delta_d^2$. We believe that it is useful to discuss the set-up herein for completeness as well as for future considerations.  This set-up is motivated by the analogies with the continuous counterpart \eqref{eq5}, where the homogeneous Dirichlet boundary conditions of the first kind $U(-L)=U(L)=0$ and $U_{x}(-L)=U_{x}(L)=0$, as well as, of the second kind $U(-L)=U(L)=0$ and $U_{xx}(-L)=U_{xx}(L)=0$, should be imposed to construct a well-defined system.  Then, the discrete Dirichlet boundary conditions of the second kind are analogously defined as follows:
\begin{align}
\label{D1a}
U_0&=U_{N}=0,\\
\label{D2a}
\Delta_d U_n&=U_{n+1}-2U_n+U_{n-1}=0\;\text{for} \;n=0\;\mbox{and}\;n=N.
\end{align}
\begin{figure}[tbh!]
	\label{fig:D1}
	\includegraphics{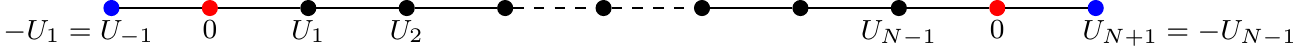}	
	\caption{Implementation of the Dirichlet boundary conditions of the second kind in the lattice, where the Dirichlet Laplacian satisfies $\Delta_d=0$  and gives rise to the antisymmetric conditions \eqref{D1}-\eqref{D2}.}
\end{figure}
The first condition \eqref{D1a} implies initially that   $U\in\ell^2_0$, the finite dimensional subspace of $\ell^2$ (see Appendix \ref{App}). However, the second condition \eqref{D2a} imposes conditions on the extra points $U_{-1}$ and $U_{N+1}$;  $U_{-1}$ is supposed to be positioned at $x_{-1}=L-h$ and $U_{N+1}$ at  $x_{N+1}=L+h$, i.e., outside $[-L,L]$. For instance, \eqref{D2a} implies the antisymmetric conditions:
\begin{eqnarray}
\label{D1}
&&\text{for}\quad n=0,\;\; U_{-1}=-U_{1},\;\;\mbox{and}\\
\label{D2}
&&\text{for}\quad n=N,\;\; U_{N+1}=-U_{N-1}.
\end{eqnarray}
The above implementation is illustrated in the cartoon of Fig. 1. Since we have $N-1$ equations at the interior points $x_1\dots x_{n-1}$, two options arise for the exterior points, as suggested by the conditions \eqref{D1}-\eqref{D2}, in order to construct a well-defined system in the presence of the discrete biharmonic operator $\Delta_d^2$. 

The first one may not pre-assume for \eqref{D1}-\eqref{D2}, that they are locked to zeros.  With such an option, we may proceed to the well-definition of $\Delta_d^2$:  in the neighbouring points $U_1, U_2$, of the boundary point $U_0$, it takes the values:
\begin{align}
\Delta_d^2 U_1&= U_{3}-4U_{2}+6U_1-4U_{0}+U_{-1}=U_{3}-4U_{2}+5U_1,\\
\Delta_d^2 U_2&= U_{4}-4U_{3}+6U_2-4U_{1}+U_{0}=U_{4}-4U_{3}+6U_2-4U_1.
\end{align}
In the interior points $U_n$ for  $n=3,\dots, N-3$, we have by its definition \eqref{eq4}:
\begin{equation}
\Delta_d^2 U_n=U_{n+2}-4U_{n+1}+6U_n-4U_{n-1}+U_{n-2}.
\end{equation}
In the neighbouring points $U_{N-2}, U_{N-1}$ of $U_N$, it takes the values: 
\begin{align}
\Delta_d^2 U_{N-2}& = U_{N-4}-4U_{N-3}+6U_{N-2}-4U_{N-1}+U_{N} = U_{N-4}-4U_{N-3}+6U_{N-2}-4U_{N-1}\\
\Delta_d^2 U_{N-1} &= U_{N-3}-4U_{N-2}+6U_{N-1}-4U_{N}+U_{N+1} = U_{N-3}-4U_{N-2}+5U_{N-1}.
\end{align}
In the first option, the operators $\Delta_d$, and $\Delta_d^2$ have the matrix formulation:
\[
\Delta_d =
\begin{pmatrix} 
-2 & 1           &    & &  &    \\ 
1  &  -2        &1& &&   \\  
&  1         &-2&1 & &   \\  
&   &\ddots&\ddots &\ddots& \\  
&   && 1&-2 & 1\\  
&         &    &&1 &  -2 
\end{pmatrix}_{(N-1)\times(N-1)},\,
\Delta^2_ d=
\begin{pmatrix} 
5& -4  &  1 & &  &   & &&\\ 
-4  &  6 &-4&1 &   &   &&&\\  
1  &  -4 &6&-4 &1   &   &&&\\  
&\ddots &\ddots  &\ddots&  \ddots&\ddots&& &\\  
&   &1 &  -4 &6&-4 &1    &&\\  
&   &   &\ddots&\ddots &\ddots&\ddots &\ddots&\\  
&&   &   &1&-4 &6 & -4&1\\       
&&   &   && 1&-4 & 6&-4\\  
&&   &          &    &&1 &  -4 &5
\end{pmatrix}_{(N-1)\times(N-1)}\,,
\]
where the empty entries are zeros.  

The second, alternative option, is to pre-assume zero values for the antisymmetric conditions \eqref{D1}-\eqref{D2}. 
This option naturally corresponds to the Dirichlet conditions of the first kind. It is graphically illustrated in the cartoon of Fig. 2.
\begin{figure}[tbh!]
	\label{fig:D2}
	\includegraphics{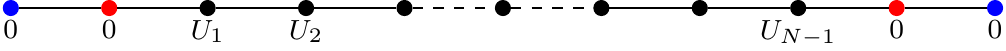}	
	\caption{Implementation of the Dirichlet boundary conditions of the second kind when the antisymmetric conditions \eqref{D1}-\eqref{D2} are locked to zeros, and actually, gives rise to the Dirichlet boundary conditions of the first kind.}
\end{figure}
In this second option, the operator $\Delta_d$ preserves its previous matrix format, while for $\Delta_d^2$  the matrix form becomes: 
\[
\Delta^2_ d=
\begin{pmatrix} 
6 & -4  &  1 & &  &   & &&\\ 
-4  &  6 &-4&1 &   &   &&&\\  
1  &  -4 &6&-4 &1   &   &&&\\  
&\ddots &\ddots  &\ddots&  \ddots&\ddots&& &\\  
&   &1 &  -4 &6&-4 &1    &&\\  
&   &   &\ddots&\ddots &\ddots&\ddots &\ddots&\\  
&&   &   &1&-4 &6 & -4&1\\       
&&   &   && 1&-4 & 6&-4\\  
&&   &          &    &&1 &  -4 &6
\end{pmatrix}_{(N-1)\times(N-1)}
\]
Note that the $(N-1)\times(N-1)$ matrix operators act on a vector $U=(0,U_1,U_2,\dots,U_{N-2},U_{N-1},0)\in\ell^2_0$. Furthermore, it is important to stress the following:  obviously the relations \eqref{lp6}-\eqref{lp7}, are valid in the case of $\Delta_d:\ell^2_{0}\rightarrow\ell^2$, as well as the continuity property \eqref{lp8}, for both of the above options (see also Lemma \ref{LA1} in Appendix \ref{App}). Additionally,  the discrete Poincar\'{e} inequality \cite{DKG2019}, holds:
\begin{eqnarray}
\label{crucequiv}
\mu_1\sum_{n=0}^{K+1}|U_n|^2\leq
\frac{1}{h^2}(-\Delta_dU,U)_{\ell^2}\leq \frac{4}{h^2} \sum_{n=0}^{K+1}|U_n|^2,\;\; \mu_1=\frac{4}{h^2}\sin^2\left(\frac{\pi h}{2l}\right)=\frac{4}{h^2}\sin^2\left(\frac{\pi h}{4L}\right),
\end{eqnarray}
where $l=2L$ the length of the symmetric interval $[-L,L]$.
\paragraph{Periodic boundary conditions.} For the continuous model \eqref{eq5}, the periodic boundary conditions should be imposed for the function $U(x,t)$ and its derivatives up to the third order, that is, $\partial^j_xU(x,-L)=\partial^j_xU(x, L)$, for $j=1,2,3$.  
In analogy to the continuous case the discrete periodic boundary conditions take the form:
\begin{eqnarray}
\label{sysper}
U_0&=&U_{N},\nonumber\\
U_{-1}-U_1&=&U_{N-1}-U_{N+1},\nonumber\\ 
U_{-1}-2U_0+U_{1}&=&U_{N-1}-2U_{N}+U_{N+1},\\
U_{2}-2U_{1}+2 U_{-1}-U_{-2}& =&U_{N+2}- 2U_{N+1}+2U_{N-1}-U_{N-2}.\nonumber
\end{eqnarray}
For the derivation of the system \eqref{sysper},  we have used the central difference approximation for derivatives up to the third order. From the system \eqref{sysper}, we deduce that $U_{i}=U_{N+i}$ for $i=-2,-1,0,1,2$, and eventually that $U\in\ell^2_{\mathrm{per}}$, the space of periodic sequences of period $N$ defined in Appendix \ref{App}.

Relations \eqref{lp6}-\eqref{genlp8} are also valid in the case of $\Delta_d:\ell^2_{\mathrm{per}}\rightarrow\ell^2$, see \cite{CNS2019} and Lemma \ref{LA1} in Appendix \ref{App}. It should be also remarked that the inequality \eqref{genlp9}
is also valid, see Lemma \ref{LA2} in Appendix \ref{App}. 
\subsection{Local existence and properties of the solution operators}
\paragraph{Local existence of solutions for $\ell^p$-type initial data, $1\leq p\leq\infty$.} For local existence of solutions we may apply the generalized Picard-Lindel\"{o}f Theorem \cite[Theorem 3.A, pg. 78]{zei85a}. We need to show that the nonlinear operator 
\begin{eqnarray}
\label{eq9}
\left\{\mathcal{G}[U]\right\}_{n\in\mathbb{Z}}:=\mathcal{G}[U_n]=-\frac{\gamma_1}{h^4}U_n\Delta_d^2U_n- \frac{\gamma_2}{h^2}U_n\Delta_dU_n+\frac{\gamma_3}{h^2}\Delta_dU_n+f(U_n),
\end{eqnarray}
is locally Lipschitz, as stated in the following lemma.
\begin{lemma}
	\label{L1}
Let $Z$ be either, $\ell^p$, $\ell^p_{\mathrm{per}}$ or $\ell^p_0$, for $1\leq p\leq\infty$. The operator $\mathcal{G}:Z\to\ell^p$ is bounded on bounded sets of $Z$, and locally Lipschitz continuous.
\end{lemma}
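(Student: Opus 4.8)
The plan is to reduce the whole estimate to two ingredients already available: the continuity of the discrete operators, inequalities \eqref{genlp8} and \eqref{genlp9}, together with the elementary chain of inclusions recorded in Appendix \ref{App}. The crucial remark is that for every $1\leq p\leq\infty$ one has the embedding $\|U\|_{\ell^\infty}\leq\|U\|_{\ell^p}$, so that pointwise multiplication is a bounded bilinear map from $\ell^\infty\times\ell^p$ into $\ell^p$: for any sequences $U,V$,
\begin{equation}
\label{multbound}
\|UV\|_{\ell^p}\leq\|U\|_{\ell^\infty}\|V\|_{\ell^p}\leq\|U\|_{\ell^p}\|V\|_{\ell^p}.
\end{equation}
Since $\ell^p_0$ and $\ell^p_{\mathrm{per}}$ are finite-dimensional subspaces of $\ell^p$ on which \eqref{genlp8}--\eqref{genlp9} remain valid (Lemmas \ref{LA1}--\ref{LA2}), it suffices to run the argument for $Z=\ell^p$; the finite-dimensional cases then follow verbatim.

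For boundedness on bounded sets I would estimate $\mathcal{G}[U]$ term by term. Applying \eqref{multbound} with $V=\Delta_d^2U$ and then \eqref{genlp9} gives $\|U\Delta_d^2U\|_{\ell^p}\leq C\|U\|_{\ell^p}^2$; likewise $\|U\Delta_dU\|_{\ell^p}\leq C\|U\|_{\ell^p}^2$ by \eqref{multbound} and \eqref{genlp8}, while the linear term obeys $\|\Delta_dU\|_{\ell^p}\leq C\|U\|_{\ell^p}$. For the reaction term, \eqref{multbound} yields $\|\beta U^2\|_{\ell^p}\leq|\beta|\,\|U\|_{\ell^p}^2$ and $\|U^3\|_{\ell^p}\leq\|U\|_{\ell^p}^3$, so that collecting powers produces a bound of the form $\|\mathcal{G}[U]\|_{\ell^p}\leq C_1\|U\|_{\ell^p}+C_2\|U\|_{\ell^p}^2+C_3\|U\|_{\ell^p}^3$, with constants depending only on $\gamma_i,\alpha,\beta,h$, which is plainly finite on any ball of $Z$.

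For the local Lipschitz property the same ideas apply after a standard splitting. For the diffusion terms that are quadratic in $U$ I would use linearity of $\Delta_d^2$ to write
\begin{equation}
\label{lipbihar}
U\Delta_d^2U-V\Delta_d^2V=U\,\Delta_d^2(U-V)+(U-V)\,\Delta_d^2V,
\end{equation}
and then bound each summand by \eqref{multbound} and \eqref{genlp9}, obtaining $\|U\Delta_d^2U-V\Delta_d^2V\|_{\ell^p}\leq C(\|U\|_{\ell^p}+\|V\|_{\ell^p})\|U-V\|_{\ell^p}$; the term $U\Delta_dU$ is treated identically using \eqref{genlp8}. For the polynomial part I would factor $U^2-V^2=(U+V)(U-V)$ and $U^3-V^3=(U^2+UV+V^2)(U-V)$ and estimate the first factor in $\ell^\infty$ via \eqref{multbound}. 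On a ball $\{\|U\|_{\ell^p},\|V\|_{\ell^p}\leq R\}$ every prefactor is then controlled by a single constant $L(R)$, giving $\|\mathcal{G}[U]-\mathcal{G}[V]\|_{\ell^p}\leq L(R)\|U-V\|_{\ell^p}$, which is the assertion.

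The only genuine subtlety, and the reason the statement holds uniformly in $p$ including the non-reflexive endpoints $p=1$ and $p=\infty$, is precisely that the products $U_n\Delta_d^2U_n$ and $U_n^3$ are not closed operations on $\ell^p$ unless one factor is placed in $\ell^\infty$; the embedding $\ell^p\hookrightarrow\ell^\infty$ is exactly what makes \eqref{multbound}, and hence the whole argument, work without any appeal to Hilbert-space structure. I expect isolating this point cleanly to be the main thing to get right, the remaining manipulations being routine.
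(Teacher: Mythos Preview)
Your argument is correct and follows essentially the same route as the paper: both rely on the embedding $\ell^p\hookrightarrow\ell^\infty$ (equivalently, \eqref{multbound}) together with the continuity estimates \eqref{genlp8}--\eqref{genlp9}, and the Lipschitz splitting \eqref{lipbihar} is exactly the add-and-subtract step the paper uses in \eqref{nonlipl2}. The paper only writes out the biharmonic term for $Z=\ell^p$ and declares the rest ``almost identical,'' whereas you treat all terms and all cases of $Z$ explicitly, but the underlying mechanism is the same.
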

\begin{proof}
For brevity, we shall only present the proof for the higher order term $\left\{\mathcal{L}_1[U]\right\}_{n\in\mathbb{Z}}=U_n\Delta_d^2U_n$ on $Z=\ell^p$,  for some $p\in [1,\infty)$, since for the other terms of $\mathcal{L}$ and the other cases of $Z$, the arguments are almost identical. Let $U\in \bar{B}(0,R)\subset\ell^p$ a closed ball of $\ell^p$ centered at $0$ and of radius $R$. Then, 
\begin{eqnarray}
||\mathcal{L}_1[U]||^p_{\ell^p}=\sum_{n\in\mathbb{Z}^N}|U_n|^p\,|\Delta_d^2U_n|^p\leq ||U||_{\ell^{\infty}}^p||\Delta_d^2U||_{\ell_p}^p\leq C||U||^{p+1}_{\ell^p}\leq CR^{p+1},
\end{eqnarray}
fore some generic constant $C>0$. Note that we have used the inclusion \eqref{lp7}, for $\ell^p\subset\ell^{\infty}$ and inequality \eqref{genlp9}. For the Lipschitz continuity, for $U,W\in  \bar{B}(0,R)$, 
\begin{eqnarray}
\label{nonlipl2}
||\mathcal{L}_1[U]-\mathcal{L}_1[W]||^p_{\ell^p}&=&\sum_{n\in\mathbb{Z}}|U_n\Delta^2_dU_n-W_n\Delta^2_dW_n+U_n\Delta^2_dW_n-U_n\Delta^2_dW_n|^p\nonumber\\
&\leq& C||U||_{\ell^{\infty}}^p||\Delta_d^2(U-W)||_{\ell^p}^p+C||\Delta_d^2W||_{\ell^{\infty}}^p||U-W||_{\ell^p}^p\nonumber \\
&\leq&
C||U||_{\ell^p}^p||\Delta_d^2(U-W)||_{\ell^p}^p+C||\Delta_d^2W||_{\ell^p}^p||U-W||_{\ell^p}^p\nonumber\\
&\leq&
C||U||_{\ell^p}^p||(U-W)||_{\ell^p}^p+C||W||_{\ell^p}^p||(U-W)||_{\ell^p}^p\leq CR^p||U-W||_{\ell^p}^p,
\end{eqnarray}
for which, we have used again the inequality  \eqref{genlp9}, together with the inclusion $\ell^p\subset\ell^{\infty}$.
\end{proof}

We proceed by rewriting the system \eqref{eq1} as 
\begin{eqnarray*}
\label{semifl1}
\dot{U}=\mathcal{G}[U],
\end{eqnarray*}
and we will implement the aforementioned  Picard-Lindel\"{o}f Theorem on the integral formula
\begin{eqnarray}
\label{mildsl1}
U(t)=U^0+\int_{0}^{t}\mathcal{G}[U(s)]ds.
\end{eqnarray}
Then, as a consequence of Lemma \ref{L1}, we have the following local existence  result.
\begin{theorem} 
\label{thloc}	
Assume that $\gamma_i>0$, $i=1,2,3$ and $\alpha,\beta\in\mathbb{R}$, and let $U^0\in Z$, arbitrary. There exists some $T^*(U^0)>0$ such that the initial value problem \eqref{eq1}-\eqref{inc}, has a unique solution $U\in C^1([0,T],Z)$  for all $0<T<T^*(U^0)$. In addition, the following alternatives hold: Either $T^*(U^0)=\infty$ (global existence) or $T^*(U^0)<\infty$ and $\lim_{t\uparrow T^*(U^0)}||U(t)||_{\ell^2}=\infty$ (collapse). Furthermore the solution $U$ depends continuously on the initial condition $U^0\in Z$, with respect to the norm of $C([0,T],Z)$.
\end{theorem}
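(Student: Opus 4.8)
The plan is to realize Theorem~\ref{thloc} as a direct application of the Banach fixed point theorem to the integral reformulation \eqref{mildsl1}, exactly as the cited generalized Picard--Lindel\"{o}f theorem prescribes, with Lemma~\ref{L1} supplying the two hypotheses (boundedness on bounded sets and local Lipschitz continuity of $\mathcal{G}$) that the abstract result requires. First I would fix $U^0\in Z$ and, for a radius $r>0$ to be chosen, introduce the complete metric space $X_{T,r}=\{U\in C([0,T],Z):\sup_{t\in[0,T]}\|U(t)-U^0\|_{Z}\le r\}$ together with the map $\mathcal{T}[U](t)=U^0+\int_0^t\mathcal{G}[U(s)]\,ds$. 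Since $U\in C([0,T],Z)$ and $\mathcal{G}$ is continuous, $s\mapsto\mathcal{G}[U(s)]$ is a continuous $Z$-valued map and the integral is a well-defined Riemann integral in the Banach space $Z$; here one notes that $\mathcal{G}$ preserves $Z$ (the discrete operators act within $\ell^p_0$, $\ell^p_{\mathrm{per}}$, or $\ell^p$, respectively), so $\mathcal{T}$ maps into $C([0,T],Z)$. Writing $R=\|U^0\|_{Z}+r$, Lemma~\ref{L1} provides a bound $M(R)$ for $\|\mathcal{G}[U]\|_{\ell^p}$ and a Lipschitz constant $L(R)$ on $\bar B(0,R)$; then $\mathcal{T}$ is a self-map of $X_{T,r}$ as soon as $T\,M(R)\le r$ and a contraction as soon as $T\,L(R)<1$. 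For such $T=T(R)$ the fixed point is the unique mild solution, and since its right-hand side $s\mapsto\mathcal{G}[U(s)]$ is continuous, the fundamental theorem of calculus upgrades $U$ to a classical $C^1([0,T],Z)$ solution of \eqref{eq1}--\eqref{inc}.

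Next I would construct the maximal solution by gluing the unique local pieces and setting $T^*(U^0)$ equal to the supremum of existence times; uniqueness guarantees this gluing is consistent. The dichotomy follows from the crucial observation that the guaranteed existence length $\tau(R)\simeq\min\{r/M(R),\,1/(2L(R))\}$ depends on the data only through the norm bound $R$. Hence, if $T^*(U^0)<\infty$ while $\limsup_{t\uparrow T^*}\|U(t)\|_{Z}$ were finite, I could pick $t_0<T^*$ with $\|U(t_0)\|_{Z}\le R_0$ and restart the problem at $t_0$, obtaining a solution on $[t_0,t_0+\tau(R_0)]$ with $t_0+\tau(R_0)>T^*$, contradicting maximality. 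Therefore $\|U(t)\|_{Z}\to\infty$ as $t\uparrow T^*$. To match the $\ell^2$-norm formulation of the statement, I would invoke norm equivalence in the finite-dimensional cases $\ell^p_0,\ell^p_{\mathrm{per}}$, and for the infinite lattice $\ell^p$ use the embedding relations between $\ell^p$ and $\ell^2$ to transfer the blow-up to the $\ell^2$ norm.

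For continuous dependence, I would take two data $U^0,V^0$ in a common ball, consider their solutions $U,V$ on a common interval $[0,T]$, subtract the integral identities \eqref{mildsl1}, and estimate in the $Z$-norm using the Lipschitz bound of Lemma~\ref{L1}, obtaining $\|U(t)-V(t)\|_{Z}\le\|U^0-V^0\|_{Z}+L\int_0^t\|U(s)-V(s)\|_{Z}\,ds$. Gr\"{o}nwall's inequality then yields $\sup_{t\in[0,T]}\|U(t)-V(t)\|_{Z}\le e^{LT}\|U^0-V^0\|_{Z}$, which is precisely continuity with respect to the $C([0,T],Z)$ norm.

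The argument is largely mechanical once Lemma~\ref{L1} is in hand, so I expect no single hard computation; the only steps that genuinely need care are the continuation argument, where one must make explicit that the local existence time is controlled solely by a norm bound so that bounded orbits can be extended, and the transcription of the blow-up into the $\ell^2$ norm uniformly across all the listed phase spaces. I emphasize that the non-reflexive settings $Z=\ell^1$ and $Z=\ell^\infty$ create no obstruction here, since the Picard--Lindel\"{o}f scheme relies only on completeness of $Z$; the subtler consequences of the Schur property of $\ell^1$ pertain to the qualitative features of the semiflow discussed later, not to the present local well-posedness.
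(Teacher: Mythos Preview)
Your proposal is correct and follows exactly the approach the paper takes: the paper simply records that the initial value problem is equivalent to the integral equation \eqref{mildsl1} and invokes the generalized Picard--Lindel\"{o}f theorem of \cite{zei85a}, with Lemma~\ref{L1} supplying the requisite boundedness and local Lipschitz continuity of $\mathcal{G}$. You have in fact written out considerably more detail than the paper does (the fixed-point setup, the continuation/dichotomy argument, and the Gr\"{o}nwall step for continuous dependence), all of which are the standard ingredients hidden inside the cited abstract theorem; your caveat about transcribing the blow-up into the $\ell^2$ norm across all choices of $Z$ is well placed, since the paper does not comment on this point.
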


For all $U^0\in\ell^p$ and $t\in [0,T^*(U^0))$, we may define the map
\begin{eqnarray}
\label{wds1}
&&\phi_t:\ell^p\rightarrow\ell^p,\;\;1\leq p\leq\infty,\nonumber\\
&&\;\;\;\;\;\;U^0\rightarrow \phi_t(U^0)=U(t).
\end{eqnarray}
Due to Theorem \ref{thloc}, we have that $\phi_t(U^0)\in  C^1([0,T^*(U^0)),\ell^p)$. 
\paragraph{Strong continuity of the semiflow in $\ell^1$.} 
When $p=1$, the semiflow $\phi_t$ defined in \eqref{wds1} possesses the following remarkable property.
\begin{theorem}
	\label{rth}
	Let as assume a sequence $U^{m,0}$, $m\in\mathbb{N}$, converging weakly to the initial condition $U^0$ in $\ell^1$, that is
	\begin{eqnarray}
	\label{weak1}
	U^{m,0}\rightharpoonup U^0\;\;\mbox{in $\ell^1$},\;\;\mbox{as $m\rightarrow\infty$}. 
	\end{eqnarray}
Then if $T<T^*(U^0)$, we have the strong convergence $\phi_t(U^{m,0})\rightarrow \phi_t(U^0)$ in 	 $C([0,T],\ell^1)$.
\end{theorem}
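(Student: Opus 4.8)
The plan is to exploit the Schur property of $\ell^1$ in order to reduce the statement to the continuous dependence on initial data already guaranteed by Theorem \ref{thloc}. Recall that $\ell^1$ enjoys the Schur property: every weakly convergent sequence is norm convergent. Hence the first and conceptually decisive step is to observe that the hypothesis \eqref{weak1}, namely $U^{m,0}\rightharpoonup U^0$ in $\ell^1$, automatically upgrades to the strong convergence
\be
\|U^{m,0}-U^0\|_{\ell^1}\to 0,\qquad m\to\infty.
\ee
This is the only place where the non-reflexive structure of $\ell^1$ is genuinely used, and it is precisely what makes the claimed \emph{strong} continuity of the semiflow possible despite starting only from weak convergence.

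Once strong convergence of the initial data is secured, I would quantify the continuous dependence through the local Lipschitz property of $\mathcal{G}$ established in Lemma \ref{L1}, via a Gronwall-type estimate. Fix $T<T^*(U^0)$ and set $K:=\sup_{t\in[0,T]}\|\phi_t(U^0)\|_{\ell^1}<\infty$, which is finite since $t\mapsto\phi_t(U^0)$ belongs to $C^1([0,T],\ell^1)$. As long as both solutions exist and stay in the ball $\bar{B}(0,K+1)\subset\ell^1$, the mild formulation \eqref{mildsl1} together with the Lipschitz bound of Lemma \ref{L1} on that ball yields, for the difference $V^m(t):=\phi_t(U^{m,0})-\phi_t(U^0)$,
\be
\|V^m(t)\|_{\ell^1}\le \|U^{m,0}-U^0\|_{\ell^1}+L\int_0^t\|V^m(s)\|_{\ell^1}\,ds,
\ee
with $L$ the Lipschitz constant of $\mathcal{G}$ on $\bar{B}(0,K+1)$; Gronwall's inequality then gives $\|V^m(t)\|_{\ell^1}\le \|U^{m,0}-U^0\|_{\ell^1}\,\e^{LT}$.

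The main obstacle — the step requiring real care rather than a citation — is to guarantee that each perturbed solution $\phi_t(U^{m,0})$ actually survives on the whole interval $[0,T]$, so that the Gronwall estimate may be applied up to $t=T$. I would handle this by a continuation/bootstrap argument. Let $T_m\in\bigl(0,\min\{T,T^*(U^{m,0})\}\bigr]$ be maximal such that $\|\phi_t(U^{m,0})\|_{\ell^1}\le K+1$ holds on $[0,T_m]$. On this interval the estimate above gives
\be
\|\phi_t(U^{m,0})\|_{\ell^1}\le K+\|U^{m,0}-U^0\|_{\ell^1}\,\e^{LT},
\ee
so choosing $m$ so large that $\|U^{m,0}-U^0\|_{\ell^1}\,\e^{LT}<1$ produces the strict bound $\|\phi_t(U^{m,0})\|_{\ell^1}<K+1$ on $[0,T_m]$. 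Strictness prevents $T_m$ from being a point where the threshold $K+1$ is reached, so $T_m=\min\{T,T^*(U^{m,0})\}$; moreover, since $\|\cdot\|_{\ell^2}\le\|\cdot\|_{\ell^1}$, the $\ell^2$-norm stays bounded on $[0,T_m]$, which by the blow-up alternative of Theorem \ref{thloc} excludes $T^*(U^{m,0})\le T$. Hence $T_m=T$ for all large $m$, and the Gronwall estimate finally delivers
\be
\sup_{t\in[0,T]}\|\phi_t(U^{m,0})-\phi_t(U^0)\|_{\ell^1}\le \|U^{m,0}-U^0\|_{\ell^1}\,\e^{LT}\longrightarrow 0,
\ee
which is exactly the asserted strong convergence $\phi_t(U^{m,0})\to\phi_t(U^0)$ in $C([0,T],\ell^1)$.
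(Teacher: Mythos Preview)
Your proof is correct and follows essentially the same approach as the paper's: invoke the Schur property of $\ell^1$ to upgrade weak convergence of the data to norm convergence, then combine the integral formula \eqref{mildsl1}, the local Lipschitz bound of Lemma~\ref{L1}, and Gronwall's inequality to obtain the uniform-in-$t$ estimate. The one notable difference is that you supply a genuine continuation/bootstrap argument guaranteeing that $\phi_t(U^{m,0})$ persists on the full interval $[0,T]$ for large $m$, whereas the paper handles this step by a rather brief appeal to a ``suitably constructed'' $T^*$ from Theorem~\ref{thloc}; your treatment of this point is more careful.
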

\begin{proof}
From the Theorem of Banach and Steinhaus  \cite[Proposition 21.23(b), pg. 258]{zei85b}, 
\begin{eqnarray}
\label{BST}
\mbox{$U^{m,0}$ is bounded and}\;\;||U^0||\leq\liminf_{m\rightarrow\infty}||U^{m,0}||_{\ell^1}.
\end{eqnarray}
Thus, from \eqref{BST}, if $||U^{0,m}||\leq M$, then $||U^0||_{\ell^1}\leq M$, and the solutions $\phi_t(U^{m,0})=U^m(t)$ and $\phi_t(U^0)=U(t)$, are well defined for all $t\in [0, T^*)$, where $T^*$ is suitably constructed from Theorem \ref{thloc}.  Now we may consider an arbitrary interval $[0,T]$, with $T<T^*$.  By the integral formula \eqref{mildsl1} and Lemma \ref{L1},  we have that
\begin{eqnarray*}
||\phi_t(U^{m,0})-\phi_t(U^0)||_{\ell^1}&\leq& ||U^{m,0}-U^0||_{\ell^1}+\int_0^t||\mathcal{G}[\phi_s(U^{m,0})]-\mathcal{G}[\phi_s(U^0)]||_{\ell^1}ds\\
&\leq&||U^{m,0}-U^0||_{\ell^1}+C(M)\int_0^t||\phi_s(U^{m,0})-\phi_s(U^0)||_{\ell^1}ds,
\end{eqnarray*}	
for some $C(M)>0$, and all $t\in [0,T]$. Then, applying Gronwall's lemma in the above inequality, we get that
\begin{eqnarray}
\label{Gron}
||\phi_t(U^{m,0})-\phi_t(U^0)||_{\ell^1}\leq ||U^{m,0}-U^0||_{\ell^1}\mathrm{e}^{TC(L)},\;\;\forall t\in [0, T].
\end{eqnarray}
This is the point where an important and non-trivial feature of $\ell^1$, namely the {\em Schur property} comes into play: weak and norm sequential convergence in $\ell^1$ coincide \cite[Definition 2.3.4 \& Theorem 2.3.6, pg. 32]{Kalton}. Therefore, the weak convergence \eqref{weak1}, implies the norm (strong) convergence $\lim_{m\rightarrow\infty}||U^{m,0}-U^0||_{\ell^1}=0$. Passing to the limit as $m\rightarrow\infty$ in \eqref{Gron}, we find that $\lim_{m\rightarrow\infty}||\phi_t(U^{m,0})-\phi_t(U^0)||_{\ell^1}=0$ for all $t\in [0,T]$ and the claim is proved. 
\end{proof}	

Using the inclusion relation $\ell^1\subset\ell^2$ and the Schur property, we have the following corollary.
\begin{corollary}
\label{CorLim1}
The restriction $\hat{\phi}_t:\ell^1\rightarrow \ell^2$ of 	the semiflow $\phi_t:\ell^2\rightarrow\ell^2$ is strongly continuous in the sense of Theorem \ref{rth}.
\end{corollary}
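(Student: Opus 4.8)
The plan is to obtain the statement as a direct consequence of Theorem~\ref{rth} together with the continuity of the inclusion $\ell^1\hookrightarrow\ell^2$. First I would fix $T<T^*(U^0)$ and a sequence $U^{m,0}\rightharpoonup U^0$ converging weakly in $\ell^1$. Before applying Theorem~\ref{rth} it is worth recording the identification that makes the restriction $\hat\phi_t$ meaningful: since $U^0\in\ell^1\subset\ell^2$, the solution produced by the $\ell^1$-semiflow and the one produced by the $\ell^2$-semiflow share the same initial datum, so uniqueness in Theorem~\ref{thloc} forces them to coincide as elements of $\ell^2$. Thus $\hat\phi_t(U^0)$ is nothing but the $\ell^1$-flow of Theorem~\ref{rth} measured in the $\ell^2$-norm. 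The uniform bound \eqref{BST} gives $\sup_m\|U^{m,0}\|_{\ell^1}\le M$, hence $\sup_m\|U^{m,0}\|_{\ell^2}\le M$ by the inclusion, so a common existence time $T^*>T$ is available for the whole sequence exactly as in the proof of Theorem~\ref{rth}.

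The core of the argument is then immediate. Theorem~\ref{rth} converts the weak $\ell^1$-convergence of the data into strong convergence of the flows in $C([0,T],\ell^1)$; this is exactly the step where the Schur property enters, upgrading \eqref{weak1} to $\|U^{m,0}-U^0\|_{\ell^1}\to 0$ and feeding it into the Gronwall estimate \eqref{Gron}. It then remains only to descend from $\ell^1$ to $\ell^2$, which I would do through the continuity of the inclusion, i.e. $\|V\|_{\ell^2}\le\|V\|_{\ell^1}$ for every $V\in\ell^1$, applied to $V=\hat\phi_t(U^{m,0})-\hat\phi_t(U^0)$ and passed to the supremum over $t\in[0,T]$:
\[
\sup_{t\in[0,T]}\|\hat\phi_t(U^{m,0})-\hat\phi_t(U^0)\|_{\ell^2}\le\sup_{t\in[0,T]}\|\phi_t(U^{m,0})-\phi_t(U^0)\|_{\ell^1},
\]
whose right-hand side vanishes as $m\to\infty$ by Theorem~\ref{rth}. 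This is precisely the asserted strong continuity in $C([0,T],\ell^2)$.

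I do not expect a genuine obstacle here: the corollary is essentially a monotonicity-of-norms remark layered on top of Theorem~\ref{rth}. The only point that deserves a moment's care is the uniqueness-based identification of $\hat\phi_t$ with the $\ell^1$-flow, so that Theorem~\ref{rth} can be invoked verbatim; once this is in place, the inclusion does the rest. For completeness I would note the equivalent route that makes both ingredients explicit at the level of the data rather than the flow: the Schur property gives $\|U^{m,0}-U^0\|_{\ell^1}\to 0$, the inclusion gives $\|U^{m,0}-U^0\|_{\ell^2}\to 0$, and the continuous dependence on initial data in the $\ell^2$-framework asserted in Theorem~\ref{thloc} then closes the argument.
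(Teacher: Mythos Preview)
Your argument is correct and follows exactly the route the paper indicates: the corollary is stated without a separate proof, preceded only by the remark that it follows from the inclusion $\ell^1\subset\ell^2$ together with the Schur property (the latter entering through Theorem~\ref{rth}). Your write-up simply makes explicit the details the paper leaves to the reader, including the uniqueness-based identification of $\hat\phi_t$ with the $\ell^1$-flow and the norm comparison $\|\cdot\|_{\ell^2}\le\|\cdot\|_{\ell^1}$.
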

It should be remarked that the reflexive spaces $\ell^p$ for $p>1$ do not have the Schur property, and that if a reflexive Banach space has it, is finite dimensional \cite[Corollary 2.3.8, pg. 37]{Kalton}. In the next section, we will show that the above properties have some interesting applications when the solutions $\hat{\phi}_t(U^0)$, for $U^0\in\ell^1$ are uniformly bounded.	
\section{Global existence regimes: Extinction conditions and uniform bounds}
In this section, we identify parametric regimes for the parameters $\alpha, \beta\in\mathbb{R}$, associated with extinction (in the sense $\lim_{t\rightarrow\infty}||U(t)||_{\ell^p}=0$, $1<p\leq\infty$) or with uniform bounds for which the solution may not essentially vanish ($||U(t)||_{\ell^p}<M$, for some constant $M$, for $1<p\leq\infty$). When necessary, we shall distinguish between the infinite and the finite dimensional system. 

\paragraph{Stability of the zero solution-extinction.} We start with  the following extinction results in the case where the system is supplemented with the vanishing initial conditions \eqref{vanv}. 
\begin{proposition}
	\label{PInf1}
Consider the system \eqref{eq1} supplemented with either case of boundary conditions (vanishing \eqref{vanv}, periodic 
or Dirichlet), 
 and let $U^0\in\ell^2$, an arbitrary initial condition. We assume that $\alpha=-\tilde{\alpha}<0$ and $\beta=-\tilde{\beta}<0$. Then, there exists 
$-\tilde{\beta}_{\mathrm{thresh}}(\gamma_1, \gamma_2,h)<0$ (depending only on $\gamma_1$, $\gamma_2$, $h$), such that, if $-\tilde{\beta}<-\tilde{\beta}_{\mathrm{thresh}}$, then  $\lim_{t\rightarrow\infty}||U(t)||_{\ell^p}=0$, for all $2<p\leq\infty$. 
\end{proposition}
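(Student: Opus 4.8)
The plan is to run an $\ell^2$ energy (Lyapunov) argument on $y(t):=||U(t)||_{\ell^2}^2$, extract from it both a uniform bound and the time-integrability of a higher norm, and then pass to $\ell^p$-extinction by interpolation. First I would pair \eqref{eq1} with $U$ in $\ell^2$. Since $\Delta_d$ and $\Delta_d^2$ are self-adjoint and satisfy the continuity bounds \eqref{lp6}--\eqref{lp9} (and, in the finite-dimensional Dirichlet/periodic cases, the analogous Lemmas \ref{LA1}--\ref{LA2}), the computation is uniform across the three boundary conditions and gives
\begin{equation}
\frac12\dot y = -\frac{\gamma_1}{h^4}\sum_n U_n^2\Delta_d^2U_n-\frac{\gamma_2}{h^2}\sum_n U_n^2\Delta_dU_n+\frac{\gamma_3}{h^2}(\Delta_dU,U)_{\ell^2}-\tilde\alpha||U||_{\ell^2}^2-\tilde\beta\sum_n U_n^3-||U||_{\ell^4}^4 .
\end{equation}
Three contributions are manifestly dissipative: the linear diffusion $\frac{\gamma_3}{h^2}(\Delta_dU,U)_{\ell^2}\le0$ by \eqref{lp6}, the term $-\tilde\alpha||U||_{\ell^2}^2\le0$, and the quartic $-||U||_{\ell^4}^4\le0$ arising from the $-U^3$ part of $f$ in \eqref{eq2}. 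This quartic is the workhorse of the whole argument.

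Next I would expand the two nonlinear operator terms through the stencils \eqref{eq3}--\eqref{eq4} and isolate their diagonal parts. A direct reindexing gives $\sum_n U_n^2\Delta_dU_n=\sum_n U_nU_{n+1}(U_n+U_{n+1})-2\sum_n U_n^3$, and the analogous computation for $\sum_n U_n^2\Delta_d^2U_n$ has diagonal part $6\sum_n U_n^3$. Collecting diagonals, the $\gamma_1$ and $\gamma_2$ operators contribute $\bigl(-\tfrac{6\gamma_1}{h^4}+\tfrac{2\gamma_2}{h^2}\bigr)\sum_n U_n^3$, which merges with the reaction cubic into $\bigl(-\tfrac{6\gamma_1}{h^4}+\tfrac{2\gamma_2}{h^2}-\tilde\beta\bigr)\sum_n U_n^3$, while the leftover off-diagonal nearest- and next-nearest-neighbour cubic interactions are bounded by a constant $C(\gamma_1,\gamma_2,h)\,||U||_{\ell^3}^3$ via Young/AM--GM. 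This is precisely the step that makes the threshold depend on $\gamma_1,\gamma_2,h$ alone.

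The crux, and the main obstacle, is that $\sum_n U_n^3$ carries no definite sign for general $\ell^2$ data, so the cubic terms cannot simply be dropped; this is exactly where the largeness of $\tilde\beta$ is used. For $\tilde\beta$ beyond a threshold $\tilde\beta_{\mathrm{thresh}}(\gamma_1,\gamma_2,h)$ assembled from the combination $\tfrac{6\gamma_1}{h^4}$, $\tfrac{2\gamma_2}{h^2}$ and the constant $C(\gamma_1,\gamma_2,h)$, the net cubic contribution is controlled by the quartic dissipation through the interpolation $||U||_{\ell^3}^3\le ||U||_{\ell^2}\,||U||_{\ell^4}^2$ followed by Young's inequality $||U||_{\ell^2}||U||_{\ell^4}^2\le\tfrac12||U||_{\ell^4}^4+\tfrac12||U||_{\ell^2}^2$. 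I expect this to yield a differential inequality of the form $\dot y\le -c\,||U||_{\ell^4}^4$ modulo the already-nonpositive terms, from which two facts follow: $y(t)$ is uniformly bounded (so, by the blow-up alternative in Theorem \ref{thloc}, the solution is global), and $\int_0^{\infty}||U(s)||_{\ell^4}^4\,ds<\infty$. The delicate part of the bookkeeping is to keep the coefficient $c$ strictly positive, which is what fixes the quantitative form of $\tilde\beta_{\mathrm{thresh}}$.

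Finally, to convert integrability into extinction I would invoke Barbalat's lemma: once $y$ is bounded, the derivative of $t\mapsto||U(t)||_{\ell^4}^4$ is bounded (using \eqref{genlp9}), so $t\mapsto||U(t)||_{\ell^4}^4$ is uniformly continuous and its integrability forces $||U(t)||_{\ell^4}\to0$. Combining the uniform $\ell^2$ bound with $||U||_{\ell^4}\to0$ through the interpolation $||U||_{\ell^p}\le||U||_{\ell^2}^{1-\theta}||U||_{\ell^4}^{\theta}$ for $2<p\le4$, and the monotonicity $||U||_{\ell^p}\le||U||_{\ell^4}$ for $p\ge4$ (including $p=\infty$), gives $\lim_{t\to\infty}||U(t)||_{\ell^p}=0$ for every $2<p\le\infty$. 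This interpolation also accounts for the exclusion of $p=2$: the method delivers only boundedness, not decay, of the $\ell^2$ norm itself.
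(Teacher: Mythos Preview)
Your threshold argument is inverted. Once you merge the operator-diagonal cubic with the reaction cubic into $\bigl(-\tfrac{6\gamma_1}{h^4}+\tfrac{2\gamma_2}{h^2}-\tilde\beta\bigr)\sum_n U_n^3$ and bound the off-diagonals by $C(\gamma_1,\gamma_2,h)\,||U||_{\ell^3}^3$, the only way to proceed---since, as you yourself stress, $\sum_n U_n^3$ carries no sign---is to also bound the diagonal piece in absolute value by $\bigl|{-}\tfrac{6\gamma_1}{h^4}+\tfrac{2\gamma_2}{h^2}-\tilde\beta\bigr|\,||U||_{\ell^3}^3$. The total cubic burden then has a coefficient that \emph{grows} linearly in $\tilde\beta$. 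Your interpolation-plus-Young step $||U||_{\ell^3}^3\le ||U||_{\ell^2}\,||U||_{\ell^4}^2\le\tfrac12||U||_{\ell^4}^4+\tfrac12||U||_{\ell^2}^2$ therefore produces a positive term $+K(\tilde\beta)\,||U||_{\ell^2}^2$ with $K(\tilde\beta)\to\infty$, which neither the fixed quartic $-||U||_{\ell^4}^4$ nor the fixed linear dissipation $-\tilde\alpha\,||U||_{\ell^2}^2$ can absorb. In your framework, large $\tilde\beta$ is an obstruction, not a help, and no threshold of the form you describe can close the estimate.

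The paper's route avoids the stencil expansion entirely. It bounds the two quasilinear inner products directly via Cauchy--Schwarz, the embedding $||U||_{\ell^4}\le||U||_{\ell^2}$, and the operator norms \eqref{lp8}--\eqref{lp9}:
\[
\bigl|\langle U\Delta_d^2 U,U\rangle_{\ell^2}\bigr|\le ||U^2||_{\ell^2}\,||\Delta_d^2 U||_{\ell^2}\le 4\,||U||_{\ell^2}^3,\qquad
\bigl|\langle U\Delta_d U,U\rangle_{\ell^2}\bigr|\le 2\,||U||_{\ell^2}^3,
\]
so that $\tilde\beta_{\mathrm{thresh}}=\tfrac{4\gamma_1}{h^4}+\tfrac{2\gamma_2}{h^2}$ is \emph{independent of} $\tilde\beta$; the reaction cubic $\tilde\beta\langle U^2,U\rangle_{\ell^2}$ is kept on the dissipative side of the balance and, for $\tilde\beta>\tilde\beta_{\mathrm{thresh}}$, the combined cubic pile $(\tilde\beta-\tilde\beta_{\mathrm{thresh}})||U||_{\ell^2}^3$ is simply dropped together with the quartic. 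What remains is $\dot y\le -2\tilde\alpha y$, i.e.\ exponential $\ell^2$-decay, and the $\ell^p$-extinction for $2<p\le\infty$ follows at once from $||U||_{\ell^p}\le||U||_{\ell^2}$ without any Barbalat or interpolation machinery. The conceptual point you are missing is that $\tilde\alpha>0$ is the engine of decay; the condition $\tilde\beta>\tilde\beta_{\mathrm{thresh}}$ serves only to make the cubic terms discardable, not to feed the quartic.
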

\begin{proof}
We multiply Eq.~\eqref{eq1} in the $\ell^2$-inner product, to get the balance equation:
\begin{eqnarray}
\label{eq23}
\frac{1}{2}\frac{d}{dt}||U||^2_{\ell^2}-\frac{\gamma_3}{h^2}\left<\Delta_dU, U\right>_{\ell^2}+||U||_{\ell^4}^4 +\tilde{\alpha}||U||^2_{\ell^2}+\tilde{\beta}\left<U^2, U\right>_{\ell^2}=-\frac{\gamma_1}{h^4}\left<U\Delta^2_dU, U\right>_{\ell^2}-\frac{\gamma_2}{h^2}\left<U\Delta_dU, U\right>_{\ell^2}.
\end{eqnarray}	
The terms of the right-hand side can be estimated as follows:
\begin{eqnarray}
\label{eq11}
\left|\left<U\Delta^2_dU, U\right>_{\ell^2}\right|\leq||U^2||_{\ell^2}||\Delta^2_dU||_{\ell^2}. 
\end{eqnarray}
Note, that due to the embedding relation \eqref{eq7}, $||U^2||_{\ell^2}=||U||_{\ell^4}^2\leq||U||^2_{\ell^2}$. Then, by using the inequality \eqref{lp9}, we get that
\begin{eqnarray}
\label{eq13}
\left|\left<U\Delta^2_dU, U\right>_{\ell^2}\right|\leq 4||U||^3_{\ell^2}.
\end{eqnarray}
The same argument, if applied to the term with $U\Delta_dU$, implies that
\begin{eqnarray}
\label{eq14}
\left|\left<U\Delta_dU, U\right>_{\ell^2}\right|\leq||U^2||_{\ell^2}||\Delta_dU||_{\ell^2}\leq 2||U||^3_{\ell^2}.
\end{eqnarray}
Then, by using equation \eqref{lp6}, and by inserting the estimates \eqref{eq13} and \eqref{eq14} into \eqref{eq23}, we arrive at the differential inequality
\begin{eqnarray}
\label{eq24}
\frac{1}{2}\frac{d}{dt}||U||^2_{\ell^2}+\frac{\gamma_3}{h^2}\sum_{n\in\mathbb{Z}}|U_{n+1}-U_n|^2+||U||_{\ell^4}^4+\tilde{\alpha}||U||^2_{\ell^2}+(\tilde{\beta}-\tilde{\beta}_{\mathrm{thresh}})||U||_{\ell^2}^3<0,
\end{eqnarray}
for some $\tilde{\beta}_{\mathrm{thresh}}(\gamma_1, \gamma_2,h)>0$.  Hence, assuming that $\tilde{\beta}>\tilde{\beta}_{\mathrm{thresh}}$ it follows that 
\begin{eqnarray*}
||U(t)||^2_{\ell^2}\leq \mathrm{e}^{-2\tilde{\alpha}t}||U_0||^2_{\ell^2},
\end{eqnarray*}	
implying that $\lim_{t\rightarrow\infty}||U(t)||_{\ell^2}^2=0$. Since from \eqref{lp7}, $||U||_{\ell^p}\leq ||U||_{\ell^2}$, for all $2<p\leq\infty$, the latter implies the decay in all the relevant $\ell^p$-norms.
\end{proof}

A second option concerns extinction in the infinite lattice when $\beta>0$ and $\alpha$ being negative and sufficiently small.

\begin{proposition}
	\label{PInf2}
Consider the system \eqref{eq1} supplemented with either case of boundary conditions (vanishing \eqref{vanv}, periodic 
or Dirichlet 
), and let $U^0\in\ell^2$, an arbitrary initial condition. We assume that $\alpha=-\tilde{\alpha}<0$ and $\beta>0$. Then, there exists $-\tilde{\alpha}_{\mathrm{thresh}}(\gamma_1, \gamma_2,\beta,h)<0$
	such that, if $-\tilde{\alpha}<-\tilde{\alpha}_{\mathrm{thresh}}$, then  $\lim_{t\rightarrow\infty}||U(t)||_{\ell^p}=0$, for all $2<p\leq\infty$. 
\end{proposition}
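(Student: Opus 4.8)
The plan is to mirror the energy method of Proposition \ref{PInf1}, but to exploit more carefully the dissipative quartic contribution $\|U\|_{\ell^4}^4$ produced by the cubic part of $f$. Testing Eq.~\eqref{eq1} against $U$ in the $\ell^2$ inner product and using $\alpha=-\tilde\alpha$ with $\beta>0$, I would first derive the balance identity exactly as for \eqref{eq23}, which now reads
\begin{align*}
\frac12\frac{d}{dt}\|U\|_{\ell^2}^2+\frac{\gamma_3}{h^2}\sum_{n}|U_{n+1}-U_n|^2+\tilde\alpha\|U\|_{\ell^2}^2+\|U\|_{\ell^4}^4 &=\beta\langle U^2,U\rangle_{\ell^2}\\
&\quad-\frac{\gamma_1}{h^4}\langle U\Delta_d^2U,U\rangle_{\ell^2}-\frac{\gamma_2}{h^2}\langle U\Delta_dU,U\rangle_{\ell^2}.
\end{align*}
The only structural change from \eqref{eq23} is that the cubic term $-\beta\langle U^2,U\rangle_{\ell^2}$ now sits on the right, since $\beta>0$, so that both it and the higher-order operator terms act as potentially destabilizing contributions that must be absorbed.

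The decisive departure from Proposition \ref{PInf1} is to keep an $\ell^4$ factor in every estimate instead of collapsing everything to $\|U\|_{\ell^2}^3$. Using $\langle U\Delta_d^2U,U\rangle_{\ell^2}=\langle U^2,\Delta_d^2U\rangle_{\ell^2}$, Cauchy--Schwarz, the identity $\|U^2\|_{\ell^2}=\|U\|_{\ell^4}^2$, and the operator bounds \eqref{lp9}, \eqref{lp8}, I would estimate the two higher-order terms by $\big(\frac{4\gamma_1}{h^4}+\frac{2\gamma_2}{h^2}\big)\|U\|_{\ell^4}^2\|U\|_{\ell^2}$. For the cubic term I would use $\langle U^2,U\rangle_{\ell^2}\le\|U\|_{\ell^3}^3$ together with the H\"older interpolation $\|U\|_{\ell^3}^3\le\|U\|_{\ell^2}\|U\|_{\ell^4}^2$ (exponents $2$ and $2$), so that the entire right-hand side is controlled by $(K+\beta)\|U\|_{\ell^4}^2\|U\|_{\ell^2}$ with $K=\frac{4\gamma_1}{h^4}+\frac{2\gamma_2}{h^2}$.

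I would then close the estimate with Young's inequality $(K+\beta)\|U\|_{\ell^4}^2\|U\|_{\ell^2}\le\|U\|_{\ell^4}^4+\frac14(K+\beta)^2\|U\|_{\ell^2}^2$, which is engineered precisely so that the $\|U\|_{\ell^4}^4$ produced on the right cancels the dissipative quartic term on the left. Dropping the nonnegative gradient sum then yields $\frac12\frac{d}{dt}\|U\|_{\ell^2}^2\le-(\tilde\alpha-\tilde\alpha_{\mathrm{thresh}})\|U\|_{\ell^2}^2$ with the explicit threshold $\tilde\alpha_{\mathrm{thresh}}=\frac14\big(\frac{4\gamma_1}{h^4}+\frac{2\gamma_2}{h^2}+\beta\big)^2$, which depends only on $\gamma_1,\gamma_2,\beta,h$ as claimed. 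Consequently, whenever $\tilde\alpha>\tilde\alpha_{\mathrm{thresh}}$ (that is, $-\tilde\alpha<-\tilde\alpha_{\mathrm{thresh}}$), Gronwall's lemma gives $\|U(t)\|_{\ell^2}^2\le \mathrm{e}^{-2(\tilde\alpha-\tilde\alpha_{\mathrm{thresh}})t}\|U^0\|_{\ell^2}^2\to0$, and the decay in $\ell^p$ for $2<p\le\infty$ follows from the nesting $\|U\|_{\ell^p}\le\|U\|_{\ell^2}$ via \eqref{eq7}. The finite-dimensional periodic and Dirichlet cases are identical, since the operator bounds and embeddings invoked are valid there as well (Lemmas \ref{LA1}, \ref{LA2}).

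The main obstacle I anticipate is exactly the handling of the cubic term: because $\beta>0$ it is genuinely destabilizing and of higher homogeneity than the linear damping $\tilde\alpha\|U\|_{\ell^2}^2$, so a crude $\ell^2$ bound would only deliver a Bernoulli-type inequality $\frac{d}{dt}\|U\|_{\ell^2}^2\le-c\|U\|_{\ell^2}^2+C\|U\|_{\ell^2}^3$, valid solely for small data. The whole point is that the coercive quartic $\|U\|_{\ell^4}^4$ furnished by the $-U^3$ term is strong enough to absorb \emph{both} the cubic and the higher-order operator contributions for arbitrary $U^0\in\ell^2$, provided the aridity $\tilde\alpha$ is large enough; finding the right $\ell^4$--$\ell^2$ splitting so that Young's inequality leaves a clean linear damping is the crux of the argument.
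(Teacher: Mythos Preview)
Your argument is correct and follows essentially the same route as the paper's proof: both test against $U$ in $\ell^2$, bound each destabilizing term by a multiple of $\|U\|_{\ell^4}^2\|U\|_{\ell^2}$ via Cauchy--Schwarz and the operator estimates \eqref{lp8}--\eqref{lp9}, and then use Young's inequality to absorb the resulting $\|U\|_{\ell^4}^4$ into the dissipative quartic before applying Gronwall. The only cosmetic difference is that you combine the three destabilizing contributions into a single $(K+\beta)\|U\|_{\ell^4}^2\|U\|_{\ell^2}$ and apply Young once (yielding the explicit threshold $\tilde\alpha_{\mathrm{thresh}}=\tfrac14(K+\beta)^2$), whereas the paper applies Young separately to each term with a free parameter $\epsilon$; the underlying mechanism is identical.
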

\begin{proof}
For $\beta>0$ and $\alpha=-\tilde{\alpha}<0$,	we are viewing the balance equation \eqref{eq23} as
	\begin{eqnarray}
	\label{eq23n}
	\frac{1}{2}\frac{d}{dt}||U||^2_{\ell^2}-\frac{\gamma_3}{h^2}\left<\Delta_dU, U\right>_{\ell^2}+||U||_{\ell^4}^4 +\tilde{\alpha}||U||^2_{\ell^2}=-\frac{\gamma_1}{h^4}\left<U\Delta^2_dU, U\right>_{\ell^2}-\frac{\gamma_2}{h^2}\left<U\Delta_dU, U\right>_{\ell^2}+\beta\left<U^2, U\right>_{\ell^2},
	\end{eqnarray}	
and we manipulate the three terms of its right-hand side alternatively to Proposition \ref{PInf1}, as follows:
	\begin{eqnarray}
	\label{eq11n}
	\frac{\gamma_1}{h^4}\left|\left<U\Delta^2_dU, U\right>_{\ell^2}\right|&\leq&\frac{\gamma_1}{h^4}||U^2||_{\ell^2}||\Delta^2_dU||_{\ell^2}=\frac{\gamma_1}{h^4}||U||_{\ell^4}^2||\Delta^2_dU||_{\ell^2}\leq \frac{4\gamma_1}{h^4}||U||_{\ell^4}^2||U||_{\ell^2}\nonumber\\
	&\leq& \frac{1}{\epsilon^2}||U||_{\ell^4}^4+c_1(\gamma_1,  h,\epsilon^{-2})||U||_{\ell^2}^2,\\
	\label{eq12n}
	\frac{\gamma_2}{h^2}\left|\left<U\Delta_dU, U\right>_{\ell^2}\right|&\leq&\frac{\gamma_2}{h^2}||U^2||_{\ell^2}||\Delta_dU||_{\ell^2}=\frac{\gamma_2}{h^2}||U||_{\ell^4}^2||\Delta_dU||_{\ell^2}\leq \frac{2\gamma_2}{h^2}||U||_{\ell^4}^2||U||_{\ell^2}\nonumber\\
	&\leq& \frac{1}{\epsilon^2}||U||_{\ell^4}^4+c_2(\gamma_2,  h,\epsilon^{-2})||U||_{\ell^2}^2,\\
\label{eq13n}
\beta\left|\left<U^2, U\right>_{\ell^2}\right|&\leq&\beta||U^2||_{\ell^2}||U||_{\ell^2}=\beta||U||_{\ell^4}^2||U||_{\ell^2}\nonumber\\
&\leq& \frac{1}{\epsilon^2}||U||_{\ell^4}^4+c_3(\beta,h,\epsilon^{-2})||U||_{\ell^2}^2.
\end{eqnarray}	
For the estimates \eqref{eq11n}-\eqref{eq13n},  we have used Young's inequality $ab\leq \epsilon^pa^p+ b^q/\epsilon^q$, for $a,b\geq 0$ when $1/p+1/q=1$, with the choices $p=q=2$ and some suitable fixed $\epsilon>0$. This time, by inserting the estimates  \eqref{eq11n}-\eqref{eq13n} and \eqref{eq14} into \eqref{eq23n}, we derive the differential inequality
	\begin{eqnarray}
	\label{eq24n}
	\frac{1}{2}\frac{d}{dt}||U||^2_{\ell^2}+\frac{\gamma_3}{h^2}\sum_{n\in\mathbb{Z}}|U_{n+1}-U_n|^2+\delta(\epsilon)||U||_{\ell^4}^4+(\tilde{\alpha}-\tilde{\alpha}_{\mathrm{thresh}})||U||^2_{\ell^2}<0,\;\;\tilde{\alpha}_{\mathrm{thresh}}=c_1+c_2+c_3,\;\;\delta(\epsilon)\geq 0.
	\end{eqnarray}
	Therefore, assuming that $\tilde{\alpha}>\tilde{\alpha}_{\mathrm{thresh}}$ it follows that 
	\begin{eqnarray*}
		||U(t)||^2_{\ell^2}\leq \mathrm{e}^{-2(\tilde{\alpha}-\tilde{\alpha}_{\mathrm{thresh}})t}||U_0||^2_{\ell^2},
	\end{eqnarray*}	
	implying again that $\lim_{t\rightarrow\infty}||U(t)||_{\ell^2}^2=0$.
\end{proof}

In the case of the Dirichlet boundary conditions, we have the following alternatives for the stability of the zero solution, due to the validity of the discrete Poincar\'{e} inequality \eqref{crucequiv}.
\begin{proposition}
	\label{PInf3}
	Consider the system \eqref{eq1} supplemented with Dirichlet boundary conditions, 
and let $U^0\in\ell^2_0$, an arbitrary initial condition. We assume that $\alpha>0$ and $\beta>0$. Then, there exists a constant $\gamma(\gamma_1,\gamma_2,\beta,h)>0$ and
	$$\gamma_{3,\mathrm{thresh}}=\frac{1}{\mu_1}\left(\alpha+\gamma\right)>0,$$ such that, if $\gamma_3>\gamma_{3,\mathrm{thresh}}$, then  $\lim_{t\rightarrow\infty}||U(t)||_{\ell^p}=0$, for all $2<p\leq\infty$. 
\end{proposition}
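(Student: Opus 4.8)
The plan is to mimic the energy-balance argument of Propositions \ref{PInf1}--\ref{PInf2}, but now exploit the discrete Poincar\'{e} inequality \eqref{crucequiv} to absorb the linear term $\tilde{\alpha}\|U\|_{\ell^2}^2$ that is unavailable as a damping term when $\alpha>0$. First I would take the $\ell^2$-inner product of Eq.~\eqref{eq1} with $U$ to recover the balance equation \eqref{eq23}, now written with $+\alpha\|U\|_{\ell^2}^2$ on the left-hand side (since $\alpha>0$). Moving the destabilising contributions to the right, the terms to be controlled are the biharmonic term $\frac{\gamma_1}{h^4}\langle U\Delta_d^2U,U\rangle_{\ell^2}$, the quasilinear Laplacian term $\frac{\gamma_2}{h^2}\langle U\Delta_dU,U\rangle_{\ell^2}$, and the cubic cooperative term $\beta\langle U^2,U\rangle_{\ell^2}$.

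Next I would estimate these three terms exactly as in \eqref{eq11n}--\eqref{eq13n}, using $\|U^2\|_{\ell^2}=\|U\|_{\ell^4}^2$, the continuity bounds \eqref{lp9} and \eqref{lp8}, and Young's inequality with $p=q=2$ and a fixed $\epsilon>0$. This produces, for each term, a contribution $\frac{1}{\epsilon^2}\|U\|_{\ell^4}^4$ that I arrange (by choosing $\epsilon$ large enough) to be dominated by the good quartic term $+\|U\|_{\ell^4}^4$ already present on the left, leaving a nonnegative remainder $\delta(\epsilon)\|U\|_{\ell^4}^4$, together with a collected coefficient $\gamma=\gamma(\gamma_1,\gamma_2,\beta,h)>0$ multiplying $\|U\|_{\ell^2}^2$. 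At this stage the differential inequality reads
\begin{eqnarray*}
\frac{1}{2}\frac{d}{dt}\|U\|_{\ell^2}^2+\frac{\gamma_3}{h^2}(-\Delta_dU,U)_{\ell^2}+\delta(\epsilon)\|U\|_{\ell^4}^4+(\alpha+\gamma)\|U\|_{\ell^2}^2\leq 0.
\end{eqnarray*}

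The crucial new ingredient is the lower Poincar\'{e} bound in \eqref{crucequiv}, namely $\frac{1}{h^2}(-\Delta_dU,U)_{\ell^2}\geq\mu_1\|U\|_{\ell^2}^2$, which is valid precisely because the Dirichlet set-up confines the system to $\ell^2_0$ and gives a strictly positive spectral gap $\mu_1=\frac{4}{h^2}\sin^2(\pi h/4L)>0$. Applying it converts the diffusive term into a damping $\gamma_3\mu_1\|U\|_{\ell^2}^2$, so that discarding the nonnegative quartic term yields
\begin{eqnarray*}
\frac{1}{2}\frac{d}{dt}\|U\|_{\ell^2}^2+\bigl(\gamma_3\mu_1-(\alpha+\gamma)\bigr)\|U\|_{\ell^2}^2\leq 0.
\end{eqnarray*}
Setting $\gamma_{3,\mathrm{thresh}}=\frac{1}{\mu_1}(\alpha+\gamma)$, the hypothesis $\gamma_3>\gamma_{3,\mathrm{thresh}}$ makes the coefficient strictly positive, and Gronwall's lemma gives exponential decay $\|U(t)\|_{\ell^2}^2\leq e^{-2(\gamma_3\mu_1-\alpha-\gamma)t}\|U^0\|_{\ell^2}^2$. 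The embedding $\|U\|_{\ell^p}\leq\|U\|_{\ell^2}$ for $2<p\leq\infty$ then propagates the decay to all the relevant norms, completing the proof.

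The main obstacle I anticipate is bookkeeping the dependence of the collected constant $\gamma$ on $\gamma_1,\gamma_2,\beta,h$ and verifying that $\epsilon$ can be chosen so that $\delta(\epsilon)\geq0$ while $\gamma$ stays finite; since $\beta>0$ now feeds a genuinely destabilising cubic term rather than a stabilising one, one must confirm that its quartic Young-remainder is still absorbable by the single $\|U\|_{\ell^4}^4$ available, i.e.\ that $3/\epsilon^2\leq 1$ suffices. The rest is routine, and the argument is essentially identical in spirit to Proposition \ref{PInf2}, the only genuinely new element being the use of the Poincar\'{e} gap $\mu_1$ in place of the sign of $\alpha$.
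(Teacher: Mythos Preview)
Your approach is correct and essentially identical to the paper's proof: both apply the Poincar\'{e} lower bound \eqref{crucequiv} to the diffusive term in the balance identity \eqref{eq23}, handle the three destabilising nonlinear contributions via the Young-inequality estimates \eqref{eq11n}--\eqref{eq13n} of Proposition~\ref{PInf2}, and conclude by Gronwall together with the embedding $\ell^2\subset\ell^p$. Note, however, a sign slip in your first displayed differential inequality: the term $(\alpha+\gamma)\|U\|_{\ell^2}^2$ must carry a \emph{minus} sign there (it is destabilising), which is precisely why the Poincar\'{e} gap $\gamma_3\mu_1$ is needed to dominate it in the subsequent line---your second displayed inequality is already correct, so this is only a typo.
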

\begin{proof}
	For $\beta>0$ and $\alpha>0$, by using the discrete Poincar\'{e} inequality \eqref{crucequiv}, we estimate the second term of \eqref{eq23} from below:
	\begin{eqnarray*}
	\frac{1}{2}\frac{d}{dt}||U||^2_{\ell^2}+\gamma_3\mu_1||U||_{\ell^2}^2+||U||_{\ell^4}^4 -\alpha||U||^2_{\ell^2}=-\frac{\gamma_1}{h^4}\left<U\Delta^2_dU, U\right>_{\ell^2}-\frac{\gamma_2}{h^2}\left<U\Delta_dU, U\right>_{\ell^2}+\beta\left<U^2, U\right>_{\ell^2},
	\end{eqnarray*}	
Manipulating the three terms of the right-hand side as in Proposition \ref{PInf2}, we get an inequality of the form
	\begin{eqnarray}
	\label{eq24nn}
	\frac{1}{2}\frac{d}{dt}||U||^2_{\ell^2}+\left(\gamma_3\mu_1-\gamma\right)||U||_{\ell^2}^2+\frac{1}{2}||U||_{\ell^4}^4<0,
	\end{eqnarray}
With the assumption  $\gamma_3>\gamma_{3,\mathrm{thresh}}$, we may set $\Gamma= \gamma_3\mu_1-\gamma>0$. Consequently, it follows that 
	\begin{eqnarray*}
		||U(t)||^2_{\ell^2}\leq \mathrm{e}^{-2\Gamma t}||U_0||^2_{\ell^2},
	\end{eqnarray*}	
from which, we again deduce the global stability of the zero-solution.
\end{proof}

Combining the estimation procedures of Proposition \ref{PInf1} and \ref{PInf3}, we may identify an extinction regime when $\alpha>0$ and $\beta<0$.
\begin{proposition}
	\label{PInf4}
	Consider the system \eqref{eq1} supplemented with Dirichlet boundary conditions, 
and let $U^0\in\ell^2_0$, an arbitrary initial condition. We assume that $\alpha>0$ and $\beta=-\tilde{\beta}<0$. Then, there exists
\begin{eqnarray} 
\label{m1crit}
	\hat{\gamma}_{3,\mathrm{thresh}}=\frac{\alpha}{\mu_1}>0,
	\end{eqnarray} 
	such that, if $\gamma_3>\hat{\gamma}_{3,\mathrm{thresh}}$, and  $-\tilde{\beta}<-\tilde{\beta}_{\mathrm{thresh}}$, where $-\tilde{\beta}_{\mathrm{thresh}}$ is defined in Proposition \ref{PInf1}, then $\lim_{t\rightarrow\infty}||U(t)||_{\ell^p}=0$, for all $2<p\leq\infty$. 
	\end{proposition}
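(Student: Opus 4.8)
The plan is to combine the diffusive dissipation exploited through the discrete Poincar\'e inequality (the mechanism of Proposition~\ref{PInf3}) with the cubic bounds on the biharmonic and Laplacian nonlinearities (the mechanism of Proposition~\ref{PInf1}). First I would take the $\ell^2$-inner product of \eqref{eq1} with $U$ and obtain the balance equation \eqref{eq23}; for the present sign configuration $\alpha>0$, $\beta=-\tilde{\beta}<0$ it reads
\begin{eqnarray*}
\frac{1}{2}\frac{d}{dt}||U||^2_{\ell^2}-\frac{\gamma_3}{h^2}\left<\Delta_dU,U\right>_{\ell^2}+||U||_{\ell^4}^4-\alpha||U||^2_{\ell^2}+\tilde{\beta}\left<U^2,U\right>_{\ell^2}=-\frac{\gamma_1}{h^4}\left<U\Delta^2_dU,U\right>_{\ell^2}-\frac{\gamma_2}{h^2}\left<U\Delta_dU,U\right>_{\ell^2}.
\end{eqnarray*}
Relative to Proposition~\ref{PInf1}, the destabilizing contribution is now the quadratic term $-\alpha||U||^2_{\ell^2}$ (since $\alpha>0$), whereas the quartic term $||U||_{\ell^4}^4\ge 0$ and the higher-order structure on the right-hand side are unchanged.

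Next I would dominate the destabilizing quadratic term exactly as in Proposition~\ref{PInf3}. Invoking the discrete Poincar\'e inequality \eqref{crucequiv}, valid under the Dirichlet boundary conditions, gives $-\frac{\gamma_3}{h^2}\left<\Delta_dU,U\right>_{\ell^2}\ge\gamma_3\mu_1||U||^2_{\ell^2}$, so that the two quadratic contributions merge into $(\gamma_3\mu_1-\alpha)||U||^2_{\ell^2}$, which is strictly positive precisely when $\gamma_3>\hat{\gamma}_{3,\mathrm{thresh}}=\alpha/\mu_1$ as in \eqref{m1crit}. For the right-hand side I would reuse the estimates \eqref{eq13}--\eqref{eq14}, namely $|\left<U\Delta^2_dU,U\right>_{\ell^2}|\le 4||U||^3_{\ell^2}$ and $|\left<U\Delta_dU,U\right>_{\ell^2}|\le 2||U||^3_{\ell^2}$, which bound it by $\tilde{\beta}_{\mathrm{thresh}}(\gamma_1,\gamma_2,h)||U||^3_{\ell^2}$ with the very same threshold as in Proposition~\ref{PInf1}. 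Collecting this bound together with the cubic term $\tilde{\beta}\left<U^2,U\right>_{\ell^2}$ and imposing $\tilde{\beta}>\tilde{\beta}_{\mathrm{thresh}}$ (equivalently $-\tilde{\beta}<-\tilde{\beta}_{\mathrm{thresh}}$), the net cubic coefficient is rendered favorable, in the form $(\tilde{\beta}-\tilde{\beta}_{\mathrm{thresh}})||U||^3_{\ell^2}$, so that the cubic and quartic contributions may be discarded.

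Dropping the nonnegative terms $\frac{\gamma_3}{h^2}\sum_{n}|U_{n+1}-U_n|^2$, $||U||_{\ell^4}^4$ and the now-favorable cubic, I would arrive at the differential inequality $\frac{1}{2}\frac{d}{dt}||U||^2_{\ell^2}+(\gamma_3\mu_1-\alpha)||U||^2_{\ell^2}\le 0$. A direct integration then yields $||U(t)||^2_{\ell^2}\le\mathrm{e}^{-2(\gamma_3\mu_1-\alpha)t}||U_0||^2_{\ell^2}$, whence $||U(t)||_{\ell^2}\to 0$; the decay in the remaining norms for $2<p\le\infty$ follows from the inclusion $||U||_{\ell^p}\le||U||_{\ell^2}$, exactly as at the end of Proposition~\ref{PInf1}.

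The main obstacle I anticipate is the simultaneous control of two imbalances that sit at different levels of homogeneity. The quadratic destabilization $-\alpha||U||^2_{\ell^2}$ must be absorbed by the Poincar\'e gain $\gamma_3\mu_1||U||^2_{\ell^2}$, which is what confines this result to the finite, Dirichlet setting where $\mu_1>0$; the sign-indefinite cubic $\tilde{\beta}\left<U^2,U\right>_{\ell^2}$, together with the cubic bounds of the higher-order operators, must be made favorable through the threshold $\tilde{\beta}_{\mathrm{thresh}}$. The delicate point is to verify that these two mechanisms decouple cleanly — that using the Poincar\'e estimate at the quadratic level leaves the cubic balance precisely in the form $(\tilde{\beta}-\tilde{\beta}_{\mathrm{thresh}})||U||^3_{\ell^2}$ inherited from Proposition~\ref{PInf1} — so that the two thresholds $\hat{\gamma}_{3,\mathrm{thresh}}=\alpha/\mu_1$ and $\tilde{\beta}_{\mathrm{thresh}}$ can be imposed independently, as the statement asserts.
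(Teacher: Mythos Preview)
Your proposal is correct and follows essentially the same route as the paper: combine the Poincar\'e lower bound on the diffusion term (from Proposition~\ref{PInf3}) with the cubic estimates \eqref{eq13}--\eqref{eq14} on the higher-order nonlinearities (from Proposition~\ref{PInf1}) to obtain exactly the differential inequality \eqref{eq24nnn}, and then integrate. One small slip: once you invoke Poincar\'e to replace $-\frac{\gamma_3}{h^2}\langle\Delta_dU,U\rangle_{\ell^2}$ by its lower bound $\gamma_3\mu_1\|U\|_{\ell^2}^2$, that term has already been consumed and is no longer available to ``drop as nonnegative'' --- but this does not affect the argument.
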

\begin{proof}
In this case, the differential inequality 
\begin{eqnarray}
	\label{eq24nnn}
	\frac{1}{2}\frac{d}{dt}||U||^2_{\ell^2}+\left(\gamma_3\mu_1-\alpha\right)||U||_{\ell^2}^2+\frac{1}{2}||U||_{\ell^4}^4+(\tilde{\beta}-\tilde{\beta}_{\mathrm{thresh}})||U||_{\ell^2}^3<0,
	\end{eqnarray}
is the one which implies the decaying estimate 	
	\begin{eqnarray*}
		||U(t)||^2_{\ell^2}\leq \mathrm{e}^{-2\Gamma_1 t}||U_0||^2_{\ell^2},\;\;\Gamma_1=\gamma_3\mu_1-\alpha>0,
	\end{eqnarray*}	
and the global stability of the zero-solution.
\end{proof}
\paragraph{Uniform in time estimates.}  In the case of the infinite lattice, we may identify the following parametric regimes for uniform in time, but not decaying bounds. 
\begin{proposition}
	\label{PInf5}
	Consider the system \eqref{eq1} supplemented with either case of boundary conditions (vanishing \eqref{vanv}, periodic  or Dirichlet), and let $U^0\in\ell^2$, an arbitrary initial condition. We assume that $\alpha>0$ and $-\tilde{\beta}<-\tilde{\beta}_{\mathrm{thresh}}$, where $-\tilde{\beta}_{\mathrm{thresh}}$ is defined in Proposition \ref{PInf1}. Then, there exists $R_0>0$, such that 
  $\limsup_{t\rightarrow\infty}||U(t)||_{\ell^p}\leq R_0$, for all $2<p\leq\infty$. 
	\end{proposition}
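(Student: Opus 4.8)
The plan is to mirror the energy estimates of Proposition~\ref{PInf1}, the sole structural change being the sign with which the linear part of $f$ enters the dissipation. First I would take the $\ell^2$-inner product of \eqref{eq1} with $U$ to reach the balance equation \eqref{eq23}. Since now $\alpha>0$, the term $\langle f(U),U\rangle_{\ell^2}$ contributes $+\alpha||U||_{\ell^2}^2$, so the linear part appears in the dissipation with the \emph{destabilizing} sign $-\alpha||U||_{\ell^2}^2$ in place of the stabilizing $+\tilde{\alpha}||U||_{\ell^2}^2$ of Proposition~\ref{PInf1}. The remaining terms are estimated identically: the higher-order contributions $\langle U\Delta_d^2U,U\rangle_{\ell^2}$ and $\langle U\Delta_dU,U\rangle_{\ell^2}$ are controlled by \eqref{eq13}--\eqref{eq14} with total bound $\tilde{\beta}_{\mathrm{thresh}}||U||_{\ell^2}^3$, the diffusion term is nonnegative by \eqref{lp6}, and the $\beta$-cubic term is handled exactly as in Proposition~\ref{PInf1}. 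Since \eqref{lp6}, \eqref{lp9} and the embedding \eqref{eq7} hold in all three functional settings, the argument is uniform over the boundary conditions.

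Assembling these bounds I expect the analogue of \eqref{eq24} with $\tilde{\alpha}$ replaced by $-\alpha$, namely
\[
\frac{1}{2}\frac{d}{dt}||U||_{\ell^2}^2+\frac{\gamma_3}{h^2}\sum_{n}|U_{n+1}-U_n|^2+||U||_{\ell^4}^4+(\tilde{\beta}-\tilde{\beta}_{\mathrm{thresh}})||U||_{\ell^2}^3-\alpha||U||_{\ell^2}^2\le 0.
\]
Discarding the manifestly nonnegative terms $\tfrac{\gamma_3}{h^2}\sum_n|U_{n+1}-U_n|^2$ and $||U||_{\ell^4}^4$ and setting $y(t)=||U(t)||_{\ell^2}^2$, this collapses to the scalar differential inequality $\dot y\le 2\alpha y-2(\tilde{\beta}-\tilde{\beta}_{\mathrm{thresh}})y^{3/2}=2y\big(\alpha-(\tilde{\beta}-\tilde{\beta}_{\mathrm{thresh}})\sqrt{y}\big)$.

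The concluding step is a comparison argument that replaces the exponential decay of the extinction results by an absorbing ball. Because $\tilde{\beta}>\tilde{\beta}_{\mathrm{thresh}}$, the function $g(y)=2y\big(\alpha-(\tilde{\beta}-\tilde{\beta}_{\mathrm{thresh}})\sqrt{y}\big)$ is positive on $(0,y_\star)$ and negative on $(y_\star,\infty)$, with $y_\star=\big(\alpha/(\tilde{\beta}-\tilde{\beta}_{\mathrm{thresh}})\big)^2$; hence the solution of $\dot{\bar y}=g(\bar y)$, $\bar y(0)=y(0)$, tends monotonically to the stable equilibrium $y_\star$. The comparison principle gives $y(t)\le\bar y(t)$, whence $\limsup_{t\to\infty}||U(t)||_{\ell^2}^2\le y_\star$, i.e.\ the asserted bound with $R_0=\alpha/(\tilde{\beta}-\tilde{\beta}_{\mathrm{thresh}})$. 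Since moreover $y(t)\le\max\{y(0),y_\star\}$ remains finite, the collapse alternative of Theorem~\ref{thloc} is ruled out and the solution is global; the claim for all $2<p\le\infty$ then follows from $||U||_{\ell^p}\le||U||_{\ell^2}$.

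I expect the comparison step to be the only genuinely new point. Unlike Propositions~\ref{PInf1}--\ref{PInf4}, the destabilizing $-\alpha||U||_{\ell^2}^2$ precludes monotone decay, so the whole argument hinges on the strict positivity of the coefficient $\tilde{\beta}-\tilde{\beta}_{\mathrm{thresh}}$, which guarantees that the cubic dissipation dominates the quadratic growth beyond the threshold radius $y_\star$. Turning this domination into a rigorous absorbing-set bound, and feeding the resulting a priori estimate back to promote local to global existence, is the crux; everything else is the verbatim reuse of the Proposition~\ref{PInf1} nonlinear estimates.
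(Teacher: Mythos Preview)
Your proposal is correct and matches the paper's proof essentially step for step, arriving at the same absorbing radius $R_0=\alpha/(\tilde{\beta}-\tilde{\beta}_{\mathrm{thresh}})$. The only cosmetic difference is in the handling of the scalar inequality $\dot y\le 2\alpha y-2(\tilde\beta-\tilde\beta_{\mathrm{thresh}})y^{3/2}$: the paper performs the Bernoulli substitution $u=y^{-1/2}$ and integrates explicitly to obtain the closed-form bound $\|U(t)\|_{\ell^2}\le\big[e^{-\alpha t}/\|U^0\|_{\ell^2}+\tfrac{B}{\alpha}(1-e^{-\alpha t})\big]^{-1}$, whereas you reach the same $\limsup$ via a comparison/phase-line argument.
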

\begin{proof}
When $\alpha>0$ and $-\tilde{\beta}<-\tilde{\beta}_{\mathrm{thresh}}$, working for the nonlinear terms as in Proposition \ref{PInf1}, we arrive in the following cubic differential inequality:
\begin{eqnarray}
\label{eq26nn}
\frac{d}{dt}||U||^2_{\ell^2}-2\alpha||U||^2_{\ell^2}+2(\tilde{\beta}-\tilde{\beta}_{\mathrm{thresh}})||U||_{\ell^2}^3<0.
\end{eqnarray}
We set $||U||^2_{\ell^2}=\chi(t)$, and $B=\tilde{\beta}-\tilde{\beta}_{\mathrm{thresh}}$, and the inequality \eqref{eq26nn}, can be rewritten as
\begin{eqnarray}
\label{eq27nn}
\dot{\chi}\leq 2\alpha \chi-2B\chi^{\frac{3}{2}}.
\end{eqnarray}
With the change of variables $u=\chi^{-\frac{1}{2}}$, we have $\dot{\chi}=-2\chi^{\frac{3}{2}}\dot{u}$. Then, since $\chi(t)\geq 0$, the inequality \eqref{eq27nn} for $\chi$ becomes the following inequality for $u$:
\begin{eqnarray}
\label{eq28nn}
\dot{u}\geq -\alpha u+B.
\end{eqnarray}
Multiplying \eqref{eq28nn} with the integrating factor $\mathrm{e}^{\alpha t}$, we get that $\frac{d}{dt}[u\mathrm{e}^{\alpha t}]\geq B\mathrm{e}^{\alpha t}>0$, for all $t\geq 0$. Integrating in the interval $[0,t]$ for arbitrary $t\geq 0$, we find that
\begin{eqnarray*}
u(t)\geq u(0)\mathrm{e}^{-\alpha t}+\frac{B}{\alpha}\left(1-\mathrm{e}^{-\alpha t}\right),
\end{eqnarray*}	
which if written in terms of $\chi(t)$, implies that
\begin{eqnarray*}
||U(t)||_{\ell^2}\leq \left[\frac{\mathrm{e}^{-\alpha t}}{||U_0||_{\ell^2}}+\frac{B}{\alpha}\left(1-\mathrm{e}^{-\alpha t}\right)\right]^{-1}.
\end{eqnarray*}	
Passing to the limit as $t\rightarrow\infty$, we get that
$\limsup_{t\rightarrow\infty}||U(t)||_{\ell^2}=\alpha/B=R_0$.
\end{proof}

In the case of the finite lattice, we may have  unconditional, with respect to the sign of parameters $\alpha$ and $\beta$, uniform boundedness of solutions.
\begin{proposition}
	\label{PInf6}
	Consider the system \eqref{eq1} supplemented with either periodic  or Dirichlet) boundary conditions, and let $U^0\in\ell^2$, an arbitrary initial condition. Let $\alpha,\beta\in\mathbb{R}$. There exists $R>0$, such that 
	$\limsup_{t\rightarrow\infty}||U(t)||_{\ell^p}\leq R$, for all $2<p\leq\infty$. 
	\end{proposition}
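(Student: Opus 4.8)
The plan is to reduce the problem, exactly as in the proofs of Propositions~\ref{PInf1}--\ref{PInf5}, to a scalar differential inequality for $\chi(t)=\|U(t)\|_{\ell^2}^2$, but this time to exploit the \emph{finite dimensionality} of the periodic/Dirichlet phase space in order to promote the quartic term $\|U\|_{\ell^4}^4$ of the balance equation~\eqref{eq23} into a genuinely coercive term of order $\chi^2$. I would start from the energy identity~\eqref{eq23}, keep $\|U\|_{\ell^4}^4$ on the left-hand side, and note that the diffusion contribution $-\tfrac{\gamma_3}{h^2}\langle\Delta_d U,U\rangle_{\ell^2}\ge 0$ is dissipative and may be discarded when producing an upper bound for $\tfrac{d}{dt}\chi$. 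Everything else --- the two higher-order nonlinear terms, the cubic $\beta$-term and the quadratic $\alpha$-term --- is moved to the right and estimated in absolute value, so that no sign assumption on $\alpha,\beta$ enters.

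The individual bounds are those already at hand: by~\eqref{lp9} and~\eqref{eq14}, combined with the embedding $\|U\|_{\ell^4}\le\|U\|_{\ell^2}$ of~\eqref{eq7}, each of $\tfrac{\gamma_1}{h^4}|\langle U\Delta_d^2U,U\rangle_{\ell^2}|$, $\tfrac{\gamma_2}{h^2}|\langle U\Delta_d U,U\rangle_{\ell^2}|$ and $|\beta|\,|\langle U^2,U\rangle_{\ell^2}|$ is controlled by a constant multiple of $\|U\|_{\ell^2}^3=\chi^{3/2}$, while the reaction term is bounded by $|\alpha|\chi$. The decisive new ingredient is the finite-dimensional inverse inequality: since the admissible vectors are supported on a fixed finite number $N$ of sites, H\"older's inequality gives $\|U\|_{\ell^2}^2\le N^{1/2}\|U\|_{\ell^4}^2$, hence $\|U\|_{\ell^4}^4\ge N^{-1}\chi^2$. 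Substituting these estimates into~\eqref{eq23} produces a differential inequality of the form $\tfrac12\dot\chi\le -c_N\chi^2+K\chi^{3/2}+|\alpha|\chi$, with $c_N=N^{-1}>0$ and $K=K(\gamma_1,\gamma_2,\beta,h)>0$.

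Absorbing the intermediate power by Young's inequality, $K\chi^{3/2}\le \tfrac{c_N}{2}\chi^2+C\chi$, I would then reach a logistic-type inequality $\dot\chi\le a\chi-b\chi^2$ with $b=c_N>0$. Since its right-hand side is strictly negative for $\chi>a/b$, the set $\{\chi\le R^2\}$ with $R^2:=a/b$ is forward invariant and every trajectory eventually enters it, giving $\limsup_{t\to\infty}\chi(t)\le R^2$; the statement for all $2<p\le\infty$ then follows from $\|U\|_{\ell^p}\le\|U\|_{\ell^2}$ via~\eqref{lp7}. The main point --- and the reason the bound is \emph{unconditional} in $\alpha,\beta$, unlike the infinite-lattice Proposition~\ref{PInf5} --- is precisely the inverse inequality $\|U\|_{\ell^4}^4\ge N^{-1}\chi^2$: it is this super-quadratic $\chi^2$ dissipation that dominates the at-most-cubic ($\chi^{3/2}$) growth generated by the higher-order and reaction nonlinearities, a mechanism unavailable on the infinite lattice where only $\|U\|_{\ell^4}\le\|U\|_{\ell^2}$ (the wrong direction) holds and a favourable sign of $\beta$ had to supply the coercivity instead. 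The only mildly delicate step is the absorbing-set argument, which I would justify either by the elementary forward-invariance observation above or, equivalently, by comparison with the logistic ODE $\dot\psi=a\psi-b\psi^2$.
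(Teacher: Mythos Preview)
Your proof is correct and follows essentially the same route as the paper: the decisive step in both is the finite-dimensional inverse inequality $\|U\|_{\ell^4}^4\ge N^{-1}\|U\|_{\ell^2}^4$ (the paper's \eqref{eq30nnn}), which converts the energy balance into a logistic-type inequality $\dot\chi\le A\chi-B\chi^2$ and yields $\limsup_{t\to\infty}\chi\le A/B$. The only cosmetic differences are that the paper estimates the nonlinear terms in the style of Proposition~\ref{PInf2} (splitting each directly via Young's inequality into $\epsilon^{-2}\|U\|_{\ell^4}^4+c\|U\|_{\ell^2}^2$ so as to arrive at \eqref{eq30nn} without an intermediate $\chi^{3/2}$ term) and integrates the resulting logistic inequality explicitly through the Bernoulli substitution $u=\chi^{-1}$ rather than by a forward-invariance argument.
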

\begin{proof}
	When $\alpha,\beta\in\mathbb{R}$, we may work to handle the nonlinear terms as in Proposition \ref{PInf2}, and derive the quartic differential inequality:
	\begin{eqnarray}
	\label{eq30nn}
	\frac{d}{dt}||U||^2_{\ell^2}+||U||_{\ell^4}^4\leq 2(|\alpha|+\tilde{\alpha}_{\mathrm{thresh}})||U||^2_{\ell^2}.
	\end{eqnarray}
Moreover, due to the equivalence of norms \eqref{eq22}, for $p=2$ and $q=4$, we have that 
\begin{eqnarray}
\label{eq30nnn}
\frac{1}{N}||U||_{\ell^2}^4\leq ||U||_{\ell^4}^4,
\end{eqnarray}
and \eqref{eq30nn} becomes
\begin{eqnarray}
\label{eq31nnn}
\frac{d}{dt}||U||^2_{\ell^2}\leq 2(|\alpha|+\tilde{\alpha}_{\mathrm{thresh}})||U||^2_{\ell^2}-\frac{1}{N}||U||_{\ell^2_{\mathrm{per}}}^4.
\end{eqnarray}
Now, we set $||U||^2_{\ell^2}=\chi(t)$, $A=2(|\alpha|+\tilde{\alpha}_{\mathrm{thresh}})$ and $B=1/N$, to rewrite the inequality \eqref{eq31nnn} for $\chi$:
	\begin{eqnarray}
	\label{eq31nn}
	\dot{\chi}\leq A\chi-B\chi^2.
	\end{eqnarray}
In the case of \eqref{eq31nnn},	the change of variables $u=\chi^{-1}$, transforms it to the inequality for $u$:
	\begin{eqnarray*}
	\label{eq32nn}
	\dot{u}\geq -A u+B,
	\end{eqnarray*}
which can be integrated as in Proposition \ref{PInf5} to deduce the estimate	
	\begin{eqnarray*}
		||U(t)||_{\ell^2}\leq \left[\frac{\mathrm{e}^{-A t}}{||U_0||^2_{\ell^2}}+\frac{B}{A}\left(1-\mathrm{e}^{-A t}\right)\right]^{-\frac{1}{2}}.
	\end{eqnarray*}	
Then, passing to the limit as $t\rightarrow\infty$, we find that
$\limsup_{t\rightarrow\infty}||U(t)||_{\ell^2}=\sqrt{A/B}=R$.
\end{proof}
\begin{remark}
\label{DtC}	
The derivation of the decay or uniform in time estimates of Propositions \ref{PInf1}-\ref{PInf6} is strictly depending on the properties of the discrete ambient spaces, particularly on the inclusions \eqref{eq7}, and the continuity properties of the discrete operators \eqref{genlp8} and \eqref{genlp9}, which allow for the handling of the highly nonlinear terms. It is not obvious that these estimates remain valid in the case of the continuous limit (the LL-PDE), since the estimation procedure gives rise to a rapid proliferation of higher order terms; for the latter is unclear if they can be estimated in a similar manner as in the discrete counterpart, due to the restrictions posed by the Sobolev embeddings, even in the $1D$-spatial domain.
\end{remark}
\paragraph{Comments on the nontrivial convergence dynamics.}
In the case of the finite dimensional dynamical system (associated to the case of Dirichlet or periodic boundary conditions), all the above propositions imply the existence of an absorbing set $\mathcal{B}$ in $Z=\ell^2_0,\ell^2_{\mathrm{per}}$, for the semiflow $\phi_t:Z\rightarrow Z$,  and the existence of a global attractor $\mathcal{Y}=\omega(\mathcal{B})$ in $Z$, where $\omega(\mathcal{B})$ denotes the $\omega$-limit set of $\mathcal{B}$, \cite{RT,cazh}. Particularly, under the assumptions of Propositions \ref{PInf1}-\ref{PInf4}, we have that  $\mathcal{Y}=\left\{0\right\}$, both for the finite and the infinite dimensional system.  On the other hand, under the assumptions of Propositions \ref{PInf5}-\ref{PInf6}, the attractor might be non-trivial, i.e.,  $\mathcal{Y}\neq \left\{0\right\}$. Furthermore, in the case of the infinite dimensional dynamical system, $\mathcal{Y}$ is a weak atrractor, that is compact in the weak-topology of $\ell^2$. 
However, although it is not possible to prove directly that the semiflow $\phi_t:\ell^2\rightarrow \ell^2$ is uniformly compact which would imply the compactness of the attractor in the norm-topology of $\ell^2$, for its restriction $\hat{\phi}_t:\ell^1\rightarrow\ell^2$, due to the Schur property, we have the following result.
\begin{proposition}
	\label{CORLIMA}
Assume that the parameters of the lattice $\eqref{eq1}$ satisfy the conditions of Propositions \ref{PInf5} or \ref{PInf6}. Let $\Psi^m$, $m\in\mathbb{N}$ be a convergent sequence in $\ell^1$, $\Psi^m\rightarrow \Psi$, as $m\rightarrow\infty$.   Then, there exists $T_{\mathrm{crit}}>0$ which is independent of $m$, such that for all $t\geq T_{\mathrm{crit}}$, $\hat{\phi}_{t}(\Psi^m)\rightarrow \hat{\phi}_t(\Psi)$ as $m\rightarrow\infty$, strongly in $\ell^2$, where $\hat{\phi}_t(\Psi)\in \mathcal{B}$,  and uniformly for all $t\in [T_{\mathrm{crit}}, T]$, for arbitrary $T<\infty$.	
\end{proposition}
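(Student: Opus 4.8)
The plan is to combine the global existence and uniform-in-time bounds guaranteed under the hypotheses of Propositions \ref{PInf5}--\ref{PInf6} with the strong continuity of the restricted semiflow $\hat{\phi}_t:\ell^1\to\ell^2$ established in Theorem \ref{rth} and Corollary \ref{CorLim1}. First I would record the boundedness of the data: since $\Psi^m\to\Psi$ in $\ell^1$, the sequence is $\ell^1$-bounded, say $\|\Psi^m\|_{\ell^1}\leq M$ for all $m$, with $\|\Psi\|_{\ell^1}\leq M$ as well; by the inclusion $\|\cdot\|_{\ell^2}\leq\|\cdot\|_{\ell^1}$, all these data lie in the ball $\bar{B}(0,M)\subset\ell^2$. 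Under the parametric assumptions of Propositions \ref{PInf5} or \ref{PInf6}, the uniform bounds rule out the collapse alternative of Theorem \ref{thloc}, so $T^*=\infty$ and the trajectories $\hat{\phi}_t(\Psi^m)$, $\hat{\phi}_t(\Psi)$ are globally defined.

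Next I would establish a uniform absorbing time. Here the closed-form solution of the cubic differential inequality \eqref{eq27nn} (respectively the quartic inequality \eqref{eq31nn}) is the essential tool: the resulting bound on $\|U(t)\|_{\ell^2}$ is an increasing function of the initial norm $\|U_0\|_{\ell^2}$ and decays to the absorbing radius $R_0=\alpha/B$ (respectively $R=\sqrt{A/B}$) as $t\to\infty$. Consequently, evaluating at the worst case $\|U_0\|_{\ell^2}=M$ yields a single entrance time $T_{\mathrm{crit}}=T_{\mathrm{crit}}(M)$, independent of $m$, such that $\hat{\phi}_t(\Psi^m),\hat{\phi}_t(\Psi)\in\mathcal{B}$ for all $t\geq T_{\mathrm{crit}}$.

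It remains to invoke the convergence and transfer it to $\ell^2$. Since strong $\ell^1$-convergence trivially implies the weak convergence hypothesis of Theorem \ref{rth}, the Gronwall estimate \eqref{Gron} applies on any interval $[0,T]$ with $T\geq T_{\mathrm{crit}}$ and gives $\sup_{t\in[0,T]}\|\hat{\phi}_t(\Psi^m)-\hat{\phi}_t(\Psi)\|_{\ell^1}\leq\|\Psi^m-\Psi\|_{\ell^1}\,\mathrm{e}^{TC(M)}\to 0$; note that because the data already converge in norm, the Schur property is not needed to launch this estimate, although it was precisely what made Theorem \ref{rth} available. Passing to $\ell^2$ via $\|\cdot\|_{\ell^2}\leq\|\cdot\|_{\ell^1}$ then yields $\sup_{t\in[T_{\mathrm{crit}},T]}\|\hat{\phi}_t(\Psi^m)-\hat{\phi}_t(\Psi)\|_{\ell^2}\to 0$, which is the asserted uniform strong $\ell^2$-convergence, with the limit lying in $\mathcal{B}$ by the preceding step.

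The step requiring the most care is the uniform-in-$m$ absorbing time. The convergence itself is inherited more or less directly from Theorem \ref{rth} and the inclusion $\ell^1\subset\ell^2$; the genuinely new point is that the limit trajectory, and simultaneously all $\hat{\phi}_t(\Psi^m)$, can be placed in $\mathcal{B}$ on a common interval $[T_{\mathrm{crit}},T]$. This forces one to verify that the entrance time into the absorbing set depends only on the shared bound $M$ and not on the individual trajectory, which I would confirm by checking the monotone dependence of the explicit bounds of Propositions \ref{PInf5}--\ref{PInf6} on the initial $\ell^2$-norm directly from their closed forms.
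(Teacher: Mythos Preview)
Your argument is correct and reaches the conclusion by a route that is cleaner than the paper's own. Both proofs begin the same way: boundedness of $\{\Psi^m\}$ in $\ell^1\subset\ell^2$ yields a uniform entrance time $T_{\mathrm{crit}}$ into the absorbing ball $\mathcal{B}$, and the uniform $\ell^2$-bounds of Propositions~\ref{PInf5}--\ref{PInf6} rule out blow-up so that the semiflow is globally defined. From there you simply read off the uniform-in-$t$ estimate~\eqref{Gron} on $[0,T]$, pass to $\ell^2$ via $\|\cdot\|_{\ell^2}\le\|\cdot\|_{\ell^1}$, and note that $\hat{\phi}_t(\Psi)\in\mathcal{B}$ because $\Psi$ itself lies in the same bounded set $\mathcal{O}$. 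The paper instead argues more indirectly: it first extracts a weak $\ell^2$-limit $\Phi\in\mathcal{B}$ of a subsequence $\hat{\phi}_t(\Psi^{m'})$ via reflexivity and convex closedness of $\mathcal{B}$, then obtains pointwise strong $\ell^2$-convergence from Corollary~\ref{CorLim1}, upgrades this to uniform convergence by an Arzel\`a--Ascoli/equicontinuity argument (using boundedness of $\dot U^m$), and finally identifies $\Phi=\hat{\phi}_t(\Psi)$. Your direct use of the Gronwall bound makes the equicontinuity and weak-compactness steps superfluous, since~\eqref{Gron} already delivers uniformity on $[0,T]$ and the membership $\hat{\phi}_t(\Psi)\in\mathcal{B}$ follows immediately from the absorbing property applied to $\Psi$. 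The only point where some caution is warranted---shared by both arguments---is the extension of the $\ell^1$-Gronwall estimate to arbitrary $T$, which requires that the orbits stay in a common $\ell^1$-ball on $[0,T]$; the paper handles this by appealing to global existence and uniform boundedness without further detail, and your treatment is no less careful on this score.
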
 	
\begin{proof} 
Let $\mathcal{O}$ a bounded set in $\ell^1$. Since $\ell^1\subset\ell^2$, the set $\mathcal{O}$ is also bounded in $\ell^2$. As $\mathcal{B}$ is an absorbing set in $\ell^2$, attracts all the bounded sets in $\ell^2$, and accordingly $\mathcal{O}$. That is, there exist a time of entry $T(\mathcal{O})>0$, such that $\hat{\phi}_t(\mathcal{O})\subset\mathcal{B}$, for all $t\geq T(\mathcal{O})$.  

Now, since $\Psi^m$ is converging to $\Psi$ in $\ell^1$, is bounded in $\ell^1$, and $\left\{\Psi^m\right\}_m\subset\mathcal{O}\subset\ell^1\subset\ell^2$ for some bounded set $\mathcal{O}$, which may depend on $\Psi^m$, but is independent of $m$ for all $m\in\mathbb{N}$. Then
\begin{eqnarray}
\label{LIM1}
\hat{\phi}_{t}(\Psi^m)\in \mathcal{B},\;\;\mbox{for $t>T(\mathcal{O}):=T_{\mathrm{crit}}$}.
\end{eqnarray} 
Therefore, since $\ell^2$ is reflexive, \eqref{LIM1} implies that $\left\{\hat{\phi}_{t}(\Psi^m)\right\}_m$ is weakly relatively compact in $\ell^2$, for all $t\geq T_{\mathrm{crit}}$. Using also that  $\mathcal{B}$ is closed and convex in $\ell^2$, we have that for some subsequence $m'$:
\begin{eqnarray}
\label{LIM2}
\hat{\phi}_{t}(\Psi^{m'})\rightharpoonup\Phi\in\mathcal{B},\;\;\mbox{weakly as $m'\rightarrow\infty$,\;\; for all $t\in  [T_{\mathrm{crit}}, T]$.}
\end{eqnarray}
Consider next, an arbitrary closed interval $S=[0,T]$, $T<\infty$. Since for any $U^0\in\ell^1$, the orbits $\hat{\phi}_t(U^0)$ are uniformly bounded, we may extend the strongly continuity results of Theorem \ref{rth} and Corollary \ref{CorLim1} for any arbirtrary $t\in S$ (i.e., $T^*(U^0)=\infty$): We have also  that $\Psi\in\ell^1\subset\ell^2$, and hence
\begin{eqnarray}
\label{LIM3}
\hat{\phi}_t(\Psi^m)\rightarrow\hat{\phi}_t(\Psi),\;\;\mbox{as $m\rightarrow\infty$, strongly in $\ell^2$, for any $t\in S$.}
\end{eqnarray}
We proceed by showing that the convergence \eqref{LIM3} holds uniformly for all $t\in S$. This will be a consequence of the fact that the sequence $\hat{\phi}_{t}(\Psi^{m})$ is equicontinuous.  For this purpose, setting $U^m(t)=\hat{\phi}_{t}(\Psi^{m})$, we consider the sequence
\begin{eqnarray}
\label{wseq}
w_m(t)=\left<U^m(t), \chi\right>_{\ell^2},\;\;\mbox{for all $t\in S$ and $\chi\in\ell^2$}.
\end{eqnarray}
Due to Lemma \ref{L1}, we may easily deduce from $\eqref{eq1}$ (solved by $U^m$ when written in the form $\dot{U}^m=\mathcal{G}[U^m]$), that $\dot{U}^m$ is also uniformly bounded for all $t\in S$. Then, by applying the mean value theorem:
\begin{eqnarray}
\label{wseq2}
|w_m(t_1)-w_m(t_2)|&=&\left|\left<U^m(t_1)-U^m(t_2),\chi\right>_{\ell^2}\right|=\left|\left<\dot{U}^m(\theta),\chi\right>_{\ell^2}\right|\;|t_1-t_2|\nonumber\\
&\leq&\sup_{\theta\in S}||\dot{U}(\theta)||_{\ell^2}||\chi||_{\ell^2}|t_1-t_2|\nonumber\\
&\leq& c|t_1-t_2|,\;\;\;\;\;\;\mbox{for some $\theta\in (t_1,t_2)\subset S$},
\end{eqnarray}
where $c$ is a constant independent of $m$. Since \eqref{wseq2} holds for any $\chi\in \ell^2$, we may apply \eqref{wseq2} for the uniformly bounded $\chi=U^m(t_1)-U^m(t_2)$, to derive:
\begin{eqnarray*}
||\hat{\phi}_{t_1}(\Psi^{m})-\hat{\phi}_{t_2}(\Psi^{m})||_{\ell^2}\leq C|t_1-t_2|,
\end{eqnarray*}
where $C$ is yet independent of $m$.  Hence, the above inequality implies the equicontinuity of $\hat{\phi}_{t}(\Psi^{m})$.  Accordingly, the uniform convergence \eqref{LIM3}, for all $t\in S$, follows from the Arzel\`{a}-Ascoli Theorem.

On the one hand, strong convergence in $\ell^2$ implies weak convergence in $\ell^2$, while the limiting relation \eqref{LIM3} holds for the whole sequence $\hat{\phi}_t(\Psi^m)$, and thus, for all of each subsequences. Therefore, combining the limiting relations \eqref{LIM2} and \eqref{LIM3}, we conclude that $\Phi=\hat{\phi}_t(\Psi)$ and that the convergence 
$\hat{\phi}_{t}(\Psi^{m})\rightarrow \hat{\phi}_t(\Psi)\in\mathcal{B}$, is strong as $m\rightarrow\infty$,  uniformly for all $t\in  [T_{\mathrm{crit}}, T]$, as claimed.
\end{proof}

In the specific case where the sequence $\Psi^m$ of $\ell^1$ converges to an equilibrium $U_s$ of the lattice $\eqref{eq1}$, and the equilibrium is such that $U_s\in\ell^1$, we have the following 
\begin{corollary}
	\label{CORLIMB}
Let the assumptions of Proposition \ref{CORLIMA}, be satisfied, with the additional hypothesis that $\Psi=U^s\in\ell^1$ is an equilibrium for the system $\eqref{eq1}$. Then $U^s\in\mathcal{B}$, and $\hat{\phi}_{t}(\Psi^{m})\rightarrow U^s\in\mathcal{B}$, strongly as $m\rightarrow\infty$, and uniformly for all $t\in [T_{\mathrm{crit}}, T]$, for arbitrary $T<\infty$.		
\end{corollary}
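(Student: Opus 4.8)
The plan is to observe that the corollary follows immediately from Proposition \ref{CORLIMA} once we exploit the defining property of an equilibrium, namely that it is a fixed point of the restricted semiflow $\hat{\phi}_t$. First I would verify that $\hat{\phi}_t(U^s)=U^s$ for all $t\geq 0$. Indeed, the constant-in-time map $t\mapsto U^s$ solves the integral equation \eqref{mildsl1} with initial datum $U^s\in\ell^1$, since $\mathcal{G}[U^s]=0$ by the very definition of an equilibrium; by the uniqueness part of Theorem \ref{thloc}, this constant trajectory is the only solution issuing from $U^s$, whence $\hat{\phi}_t(\Psi)=\hat{\phi}_t(U^s)=U^s$ for every $t$. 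Here the hypothesis $U^s\in\ell^1$ is precisely what guarantees that the identification takes place within the domain of the restricted semiflow $\hat{\phi}_t$.

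Second, since the parameters are assumed to satisfy the conditions of Proposition \ref{PInf5} or \ref{PInf6}, the hypotheses of Proposition \ref{CORLIMA} are in force, and that proposition supplies a $T_{\mathrm{crit}}>0$, independent of $m$, such that $\hat{\phi}_t(\Psi)\in\mathcal{B}$ together with
\begin{eqnarray*}
\hat{\phi}_t(\Psi^m)\rightarrow\hat{\phi}_t(\Psi)\;\;\mbox{strongly in $\ell^2$, uniformly for all $t\in[T_{\mathrm{crit}},T]$.}
\end{eqnarray*}
Substituting the identity $\hat{\phi}_t(\Psi)=U^s$ established in the first step, the inclusion $\hat{\phi}_t(\Psi)\in\mathcal{B}$ becomes $U^s\in\mathcal{B}$, and the convergence becomes $\hat{\phi}_t(\Psi^m)\rightarrow U^s$ strongly in $\ell^2$, uniformly for all $t\in[T_{\mathrm{crit}},T]$, which is exactly the assertion of the corollary.

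There is essentially no analytic obstacle beyond the bookkeeping of invoking uniqueness to collapse the equilibrium's orbit to the point $U^s$; the substantive work, namely the entry into the absorbing set $\mathcal{B}$, the equicontinuity estimate \eqref{wseq2}, and the interplay of weak and strong limits afforded by the Schur property, has already been carried out in the proof of Proposition \ref{CORLIMA}. The only point I would keep explicit is the uniqueness argument of the first step, since it is what converts the generic limit $\hat{\phi}_t(\Psi)$ into the concrete limit $U^s$ and thereby delivers both the membership $U^s\in\mathcal{B}$ and the stated convergence.
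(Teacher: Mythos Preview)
Your proof is correct and follows essentially the same approach as the paper, which simply records that an equilibrium satisfies $\hat{\phi}_t(U^s)=U^s$ for all $t\geq 0$ and then reads off the conclusion from Proposition \ref{CORLIMA}. You have merely added the (helpful) justification via the integral formula \eqref{mildsl1} and the uniqueness in Theorem \ref{thloc} for why the equilibrium is a fixed point of the semiflow.
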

\begin{proof}
The result is a consequence of the fact that if $\Psi=U^s\in\ell^1$ is an equilibrium, then $\hat{\phi}_{t}(U^s)=U^s$, for all $t\geq 0$.	
\end{proof}

An interesting question not considered herein, could be the extension of the so-called ``tails estimates method'' \cite{tails1, tails2, tails3} in order to prove the asymptotic compactness of the flow $\phi_t:\ell^2\rightarrow\ell^2$ in the case of the infinite lattice.  Such an extension seems to be non-trivial due to the presence of the higher-order nonlinear terms (see also comments on \cite{DNLS2005,tails4}). It is also particularly relevant (due to the presence of the local nonlinearity $f$ and the discrete Laplacian-term) to investigate possible extensions of  \L{}ojasiewicz inequality-type arguments \cite{DKG2019, ConAH} in order to establish convergence to non-trivial equilibrium, at least for certain parametric regimes; for small $\gamma_1,\gamma_2$, one could view DLL as a perturbation of a gradient system.
 \section{Linear Stability and Numerical insights on the dynamics}
 In Propositions \ref{PInf5} and \ref{PInf6}, we identified parametric regimes where the  convergence dynamics of the system may be non-trivial. Although we have not derived a Lyapunov function, motivated by the results on the continuous model, we may assume that the solutions converge to non-trivial steady-states. With such a motivation, to investigate the convergence dynamics of the flow, we perform in this section a linear stability analysis for the simplest class of steady states, namely, the spatially homogeneous. The aim is to derive their instability criteria under which richer dynamics may emerge. 
 
 Prior to the above stability analysis, it is important to discuss further the physical meaning of the key parameters of the continuous LL model, as we expect that the instability criteria for the discrete one should involve these parameters, and importantly, the discretization parameter $h$. 
 
 The parameter $\alpha $ can be interpreted as a measure of the environment's aridity which characterizes the productivity of the system. To highlight its significance, we may consider two extreme cases: $\alpha<0$ close to $-1$ is associated to the non-survival
 chances for  a parched environment, while values of $\alpha$ much greater than zero  are relative to a humid - high productivity environment.  
 In subsection \ref{section:numerics}, an intermediate situation is investigated in a neighborhood of  $\alpha=0$. 
 This is a threshold value below which zero vegetation is asymptotically possible, since the environment is arid enough,
 while above that threshold vegetation certainly survives, at least for the uncoupled system.
 These roughly oriented parameter regimes are independent of the spatial interactions;
 in other words rely only on the source-nonlinearity $f$.  However,  the transition from one regime to the other, reveals a variety of spatiotemporal behaviors
 which are triggered by the nonlinear spatial coupling, given that $f$ satisfies specific conditions. This is already known for the continuous system and 
 similar conclusions can be made for the discrete case as the linear stability analysis indicates in the following section.  
 Also, we note that in $f$, the parameter $\beta$, controls the cooperation effect influencing the local reproduction. This effect is considered weak for 
 $\beta \leq 0$  and strong for $\beta>0$.  We consider the latter case where a bi-stability region of $\alpha$ is formed below zero.
 Regarding the parameters $\gamma_1$ and $\gamma_2$, in all simulations we use the values appearing in the continuous LL as provided in \cite{LL1997}, 
 namely, $\gamma_1=0.125$,  $\gamma_2=0.5$.  The parameter   $\gamma_3=\frac{1}{2}l^2$ will be varied,  recalling that $l$ represents the ratio of facilitative to inhibitory interaction ranges. 
 For example, a long-range competition short-range activation hypothesis requires that $l<1$.
%

In the presence of these effects, how the spatially discrete coupling can affect the long term dynamics? Summarizing the analytical arguments, we will further explore this question by numerical simulations.
 
 \subsection{Linear stability analysis}
Let $U_s$ denote a uniform steady-state, that is  $U_n=U_s$ for all $n\in \mathbb{Z} $, with $f(U_s)=0$.
Three distinct such uniform steady-states may  exist:
the trivial state $U^0_s=0$ and two non-trivial states
\begin{equation}
U_s^{\pm}=\frac{\beta\pm\sqrt{\beta^2+4\alpha}}{2}\,.
\end{equation}
Notice, that $U^0_s$ exists for all $\alpha,\;\beta$, while $U_s^{\pm}$ exist only when $\beta^2+4\alpha\geq0$.
In particular, 
\begin{itemize}
	\item[(i)] $\beta>0$ implies $U_s^{\pm}>0$ for $\alpha\in \left[-\left(\frac{\beta}{2}\right)^2,0\right)$  and $U_s^{+}>0>U_s^{-}$ for $\alpha\geq 0$. 
	\item[(ii)] $\beta\leq 0$ implies  $U_s^{+}>0>U_s^{-}$ when $\alpha>0$ and $0>U_s^{+}>U_s^{-}$ otherwise.
\end{itemize}
Here,  we are interested only on the ecologically realistic equilibria, which must be non-negative, i.e. only the cases where $U_s\geq 0$ are relevant. 
Consider the vector
\begin{equation}
\label{eq:pert}
U(t)= U_s +\hat{U}(t),
\end{equation}  
which is a perturbation of $U_s$ by $\hat{U}(t)=\{\hat{U}_n\}_{n\in \mathbb{Z}}$.
By substituting (\ref{eq:pert}) in equation (\ref{eq1}), and keeping only linear terms, we get the following linear system of coupled ODEs:
\begin{equation}\label{eq:line_syst}
\frac{d}{dt} \hat{U}_n = \mathcal{A} \hat{U} _n+f'(U_s)  \hat{U}_n.
\end{equation}
In the linearized equation \eqref{eq:line_syst}, the operator $\mathcal{A}$ is given by
$$\mathcal{A} =-U_s\frac{\gamma_1}{h^4}\Delta_d^2- (\gamma_2U_s-\gamma_3)\frac{1}{h^2}\Delta_d\,.$$
We proceed, by seeking solutions of the system (\ref{eq:line_syst}), possessing the form $\hat{U}_n(t)=\exp(\lambda t+i k h n)$.  We obtain for the parameter $\lambda$,  the $k$-dependent eigenvalues:
\begin{equation}
\label{eq:eigenvalues}
\lambda(k)=f'(U_s) - (\gamma_2U_s-\gamma_3) \frac{2(\cos(k h)-1)}{h^2}-\gamma_1U_s \frac{4(\cos(k h)-1)^2}{h^4}\,.
\end{equation}
\paragraph{Linear stability analysis of the trivial steady state.}
For the trivial steady-state $U^0_s=0$, we get:
\begin{equation}\label{eq:U0_eigen}
\lambda_k^0 = \alpha-\gamma_3\frac{4}{h^2}\sin^2\left(\frac{k h}{2}\right).
\end{equation}
Therefore, if $\alpha<0$, the  trivial steady-state $U^0_s=0$ is linearly stable for the local dynamics induced by $f$, and it remains stable in non-uniform perturbations, since $\lambda_k^0<0$ for all $k\in \mathbb{R}$.  On the other hand, when $\alpha>0$,  the trivial steady-state is unstable for the uncoupled system. However, large enough values of $\gamma_3$ may linearly stabilize this state when coupling is present. Particularly, in the case of the  Dirichlet boundary conditions this observation leads to the following inference.
\begin{proposition}
	\label{stabt}
	Consider the lattice \eqref{eq1} supplemented with the Dirichlet boundary conditions and $\alpha>0$. The steady-state $U_s^0=0$ is linearly stable if $\gamma_3>\dfrac{\alpha}{\mu_1}=\hat{\gamma}_{3,\mathrm{thresh}}$, where $\mu_1$ is the first eigenvalue of the discrete Laplacian.
\end{proposition}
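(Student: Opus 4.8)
The plan is to reduce the statement to a standard self-adjoint eigenvalue problem. First I would linearise \eqref{eq1} about the trivial state. Setting $U_s=0$ in the operator $\mathcal{A}$ displayed above annihilates the biharmonic contribution and the $\gamma_2 U_s$ term, leaving $\mathcal{A}=\frac{\gamma_3}{h^2}\Delta_d$, while $f'(0)=\alpha$ from \eqref{eq2}. Hence the linearised system \eqref{eq:line_syst} becomes $\dot{\hat{U}}=L\hat{U}$ with
\begin{equation*}
L=\alpha\,\mathrm{Id}+\frac{\gamma_3}{h^2}\Delta_d,
\end{equation*}
acting on the finite-dimensional Dirichlet space $\ell^2_0$.

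Next I would invoke the spectral structure of the Dirichlet Laplacian. By \eqref{lp6}--\eqref{lp7}, which remain valid for $\Delta_d:\ell^2_0\to\ell^2$, the operator $-\frac{1}{h^2}\Delta_d$ is self-adjoint, and by the discrete Poincar\'{e} inequality \eqref{crucequiv} it is positive definite; it therefore admits a complete orthonormal set of eigenvectors with eigenvalues $0<\mu_1\le\mu_2\le\cdots\le\mu_{N-1}$, the smallest of which is exactly $\mu_1=\frac{4}{h^2}\sin^2\!\left(\frac{\pi h}{4L}\right)$ as furnished by \eqref{crucequiv}. Since $L$ differs from $\frac{\gamma_3}{h^2}\Delta_d$ only by the scalar shift $\alpha\,\mathrm{Id}$, it is diagonalised in the same basis, and its spectrum is
\begin{equation*}
\lambda_j=\alpha-\gamma_3\mu_j,\qquad j=1,\dots,N-1,
\end{equation*}
which is precisely the discrete analogue of the dispersion relation \eqref{eq:U0_eigen}, with the continuum of wavenumbers replaced by the admissible Dirichlet modes.

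Finally I would identify the critical mode. Because $L$ is self-adjoint its eigenvalues are real, so linear stability is equivalent to $\lambda_j<0$ for every $j$, i.e. to $\gamma_3\mu_j>\alpha$ for all $j$. As $\gamma_3,\alpha>0$ and $\mu_1$ is the smallest eigenvalue, the map $j\mapsto\lambda_j$ attains its maximum at $j=1$; hence the entire spectrum lies in the left half-plane if and only if the least stable mode does, that is if and only if $\gamma_3\mu_1>\alpha$, equivalently $\gamma_3>\alpha/\mu_1=\hat{\gamma}_{3,\mathrm{thresh}}$ as in \eqref{m1crit}. This is the asserted threshold.

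I do not expect a genuine obstacle, since everything rests on the self-adjointness and positivity already recorded in \eqref{lp6}--\eqref{lp7} and \eqref{crucequiv}. The only point requiring care is conceptual rather than technical: one must use the finite Dirichlet spectrum $\{\mu_j\}$ in place of the continuous plane-wave band underlying \eqref{eq:U0_eigen}, and must verify that it is the \emph{smallest} eigenvalue $\mu_1$ (the gravest Dirichlet mode) that sets the stability boundary, because larger $\mu_j$ only render the corresponding $\lambda_j$ more negative.
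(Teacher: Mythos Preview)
Your proposal is correct and follows essentially the same approach as the paper: both linearise about $U_s^0=0$ to obtain $\lambda_j=\alpha-\gamma_3\mu_j$ for the Dirichlet modes $j=1,\dots,N-1$, and then observe that the ordering $\mu_1<\mu_2<\cdots<\mu_{N-1}$ makes $j=1$ the least stable mode, so $\gamma_3>\alpha/\mu_1$ forces all $\lambda_j<0$. The only cosmetic difference is that the paper reaches $\lambda_j=\alpha-\gamma_3\mu_j$ by specialising the plane-wave dispersion relation \eqref{eq:U0_eigen} to the admissible Dirichlet wavenumbers $k_j^D=\frac{j\pi}{hN}$, whereas you invoke the self-adjoint spectral decomposition of $-\frac{1}{h^2}\Delta_d$ on $\ell^2_0$ directly.
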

\begin{proof}
Recall that  $N-1$ is the number of interior nodes of the finite interval $[-L, L]$ occupied by the lattice \eqref{eq1}. Setting
$k=\frac{j\pi}{hN}:=k^D_j$ in (\ref{eq:U0_eigen}), and using the $N-1$ eigenvalues of the discrete Laplacian operator  $\mu_j=\frac{4}{h^2} \sin^2\left(\frac{j\pi}{2N}\right)$, we get that 
\begin{equation}
\lambda_{k_j}^0=\alpha-\gamma_3 \mu_j,\; j\in\{1,2,\dots,N-1\}.
\end{equation}
Since $\mu_1<\mu_2<\cdots<\mu_{N-1}$, if
$\gamma_3>\frac{\alpha}{\mu_1}$, then  $\lambda_{k_j}^0<0$ for all $j=1,2,\dots,N-1$.
\end{proof}
\begin{remark}
\label{RemTS} 
It is interesting to recover in Proposition \ref{stabt} the threshold value $\hat{\gamma}_{3,\mathrm{thresh}}$ on the parameter $\gamma_3$ for the linear stability of $U^0_s=0$, as it was found in Proposition \ref{PInf4} for its global stability (i.e., for all initial data) in the case $\beta<0$. It is also important to highlight the physical relevance of Proposition \ref{PInf3}: While the linear stability of $U^0_s=0$ is guaranteed under the condition $\gamma_3>\hat{\gamma}_{3,\mathrm{thresh}}$ for $\beta<0$, for this state to become a global attractor in the case $\beta>0$, it is required at that $\gamma_3>\gamma_{3,\mathrm{thresh}}$-the threshold value derived in Proposition \ref{PInf3}.  Comparing the threshold values $\hat{\gamma}_{3,\mathrm{thresh}}$ and $\gamma_{3,\mathrm{thresh}}$, it is obvious that $\gamma_{3,\mathrm{thresh}}>\hat{\gamma}_{3,\mathrm{thresh}}$. Thus it is natural to assume larger values for $\gamma_3$ in order to achieve the global stability of  $U^0_s=0$, than its local (linear) stability.
\end{remark}
\paragraph{Linear stability analysis of the non-trivial steady-state.}
For the positive state $U_s=U_s^+>0$, it is easy to verify that it is stable in the absence of the non-linear next neighbor term (i.e. when $\gamma_2=0$),
independently of the values of the rest of the parameters.
However, excluding this case, there may exist bands of $k$ for which eigenvalues $\lambda_k$  become positive, and in turns,  $U_s^+$  will lose its stability.
In what follows, we will investigate how its  destabilization depends on the parameters of the lattice. 

We observe from \eqref{eq:eigenvalues}, that a  first crucial necessary condition for the potential instability of $U_s^+$ is 
\begin{equation}\label{Condition1}
C_1:=(\gamma_3-\gamma_2U_s)<0\,,
\end{equation}
and that a second necessary condition is
\begin{equation}\label{Condition2}
C_2:=(\gamma_2 U_s-\gamma_3)^2+4\gamma_1U_sf'(U_s)>0\,,
\end{equation}
since otherwise $\lambda(k)$  cannot change sign.
We continue under the hypothesis that (\ref{Condition1}) and (\ref{Condition2}) are valid;
clearly, it is required that $\gamma_2\neq 0$, and that $\gamma_3$ is sufficiently small. To further analyse the  behaviour of $\lambda(k)$ as a function of $k$, we consider its first derivative with respect to $k$:
\begin{equation}
\label{deig}
\frac{d}{dk} \lambda(k)=h\sin(k h)\left[U_s\frac{\gamma_1}{h^4} 8(\cos(k h)-1)+2\frac{ (\gamma_2U_s-\gamma_3)}{h^2} \right]\,.
\end{equation}
Zeroes of $\frac{d}{dk} \lambda(k)$ will provide the possible positions of local extreme-points of $\lambda$. 
We are particularly interested for the maxima, since we need to determine the critical parameter values 
for which the tops of the curve $\{(k,\lambda(k))|k\in \mathbb{R}\}$ cross the horizontal axis. From the first term of the product \eqref{deig}, we note that if 
\begin{equation}
\sin(k h)=0\Rightarrow k=k^{(S)}_j:=\frac{j\pi}{h},\;\;j \in\mathbb{Z}.
\end{equation}
The values $k^{(S)}_j$ define the first set on $k$'s for the local extrema of $\lambda(k)$. The second set is defined by the zeroes of the second term of the product  \eqref{deig}, i.e.,  the roots of the equation
$$ U_s\frac{\gamma_1}{h^2} 4(\cos(k h)-1)+ (\gamma_2U_s-\gamma_3) =0.$$
These zeroes denoted by $k^{(C)}_j$, are equivalently satisfy the equation: 
\begin{equation}\label{eq:eigCos}
\cos(k h)=(\gamma_3-\gamma_2U_s)\frac{h^2}{4\gamma_1U_s} +1\,.
\end{equation}
With the above preparations, we may proceed to the proof of the following result on the instability of $U^{+}_s$.
\begin{proposition}\label{prop: destabilization_wrt_h}
Consider the positive steady state $U^{+}_s$, and assume that the parameters of the lattice \eqref{eq1} satisfy conditions (\ref{Condition1}) and (\ref{Condition2}). 
Then, for 
\begin{equation}\label{eq:critical_h}
h_c={\sqrt{ \frac{2(C_1-\sqrt{C_2})}{f'(U_s^{+})} }},
\end{equation}  
and for all $h<h_c$ there exists a union $J$ of periodically repeated bands $J_0$ with empty intersection, 
such  that $\lambda(k)>0$ for all $k\in J$, i.e., $U_s^+$ is linearly unstable. 
\end{proposition}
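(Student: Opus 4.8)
The plan is to reduce the sign analysis of the dispersion relation \eqref{eq:eigenvalues} to a one-variable quadratic problem. First I would set $c:=\cos(kh)-1$, which sweeps the whole interval $[-2,0]$ as $k$ ranges over $\mathbb{R}$, and rewrite \eqref{eq:eigenvalues} (using $\gamma_2U_s^+-\gamma_3=-C_1$) as the parabola $Q(c)=-\tfrac{4\gamma_1U_s^+}{h^4}c^2+\tfrac{2C_1}{h^2}c+f'(U_s^+)$. Since $\gamma_1U_s^+>0$ the leading coefficient is negative, so $Q$ opens downward and $\lambda(k)>0$ exactly when $c$ lies strictly between the two roots of $Q$. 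The two families of extremizers singled out before the statement fit this picture: $\sin(kh)=0$ corresponds to the endpoints $c=0,-2$ of the admissible range, while \eqref{eq:eigCos} locates the vertex of $Q$.

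Next I would compute those roots. Solving $Q(c)=0$ gives $c_\pm=\tfrac{h^2}{4\gamma_1U_s^+}\bigl(C_1\pm\sqrt{C_2}\bigr)$, which are real by hypothesis \eqref{Condition2}. The sign bookkeeping is the heart of the matter: condition \eqref{Condition1} gives $C_1<0$, and since $f'(U_s^+)=-U_s^+\sqrt{\beta^2+4\alpha}<0$ (a fact I would record from $f(U_s^+)=0$), one has $C_2=C_1^2+4\gamma_1U_s^+f'(U_s^+)<C_1^2$, hence $0<\sqrt{C_2}<-C_1$. Consequently $c_-<c_+<0$, so the instability interval $(c_-,c_+)$ lies entirely on the negative $c$-axis.

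The instability set in $k$ is nonempty precisely when $(c_-,c_+)$ meets the admissible range $[-2,0]$, and because $c_+<0$ this reduces to the single inequality $c_+>-2$. I would then convert this into the stated threshold: dividing $\tfrac{h^2}{4\gamma_1U_s^+}(C_1+\sqrt{C_2})>-2$ by the negative quantity $C_1+\sqrt{C_2}$ and rationalizing through the identity $C_1^2-C_2=-4\gamma_1U_s^+f'(U_s^+)$ yields $h^2<\tfrac{2(C_1-\sqrt{C_2})}{f'(U_s^+)}=h_c^2$, which is precisely \eqref{eq:critical_h}. Since $h\mapsto c_+$ increases toward $0$ as $h\downarrow0$, the band is present for every $h<h_c$.

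Finally I would translate the condition back to $k$. Writing $c\in(c_-,c_+)\cap[-2,0]$ as $\cos(kh)\in\bigl(\max\{c_-+1,-1\},\,c_++1\bigr)$ and using that $k\mapsto\cos(kh)$ is even and $2\pi/h$-periodic and symmetric about $kh=\pi$, the preimage within one period is a band $J_0$ symmetric about $kh=\pi$, and the full set $J=\bigcup_{m\in\mathbb{Z}}\bigl(J_0+2\pi m/h\bigr)$ is the pairwise-disjoint periodic repetition of $J_0$, giving the claimed union with empty intersection on which $\lambda(k)>0$. I expect the main obstacle to be exactly this last step—matching the open interval in $c$ to the periodic band structure in $k$ while correctly handling the endpoint $\cos(kh)=-1$ (where the admissible band degenerates at $h=h_c$)—together with the rationalization identity that converts the geometric condition $c_+>-2$ into the explicit formula for $h_c$.
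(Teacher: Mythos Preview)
Your argument is correct, and it is genuinely different from the paper's route. The paper works with $\lambda$ as a function of $k$ directly: it computes $\frac{d\lambda}{dk}$ and $\frac{d^2\lambda}{dk^2}$, locates the critical points $k_j^{(S)}=j\pi/h$ and $k_j^{(C)}$ from \eqref{eq:eigCos}, and then splits into the two cases $h\gtrless h_1:=\sqrt{8\gamma_1U_s/(-C_1)}$ to decide which of these are maxima. In the regime $h>h_1$ the maxima sit at the odd $k_j^{(S)}$ (i.e., at $\cos(kh)=-1$), and the threshold is obtained by solving $\lambda=0$ there, yielding the quartic $f'(U_s)h^4-4C_1h^2-16\gamma_1U_s=0$ and the ordering $h_+<h_1<h_-$, whence $h_c=h_-$. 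Your substitution $c=\cos(kh)-1\in[-2,0]$ bypasses all of this: the dispersion becomes a downward parabola $Q(c)$, its roots $c_\pm$ come straight from the quadratic formula, and the threshold is the single boundary crossing $c_+=-2$, which you rationalize via $C_1^2-C_2=-4\gamma_1U_sf'(U_s^+)$ into the stated $h_c$. The gain of your approach is economy---no second-derivative test, no case split on $h_1$. What the paper's approach buys is the extra structural information used just after the proposition (and in Figure~\ref{fig: Lin_stab}): the value $h_1$ at which the global maximum migrates from $kh=\pi$ to the interior points $k_j^{(C)}$, and the value $h_+$ at which the principal band $J_0$ splits in two. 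In your language these correspond to the vertex of $Q$ reaching $c=-2$ and to $c_-=-2$ respectively; you could recover them in one line each if needed. One small refinement: your description of $J_0$ as ``a band symmetric about $kh=\pi$'' is correct for $h_+\le h<h_c$ but, as you implicitly allow with the $\max\{c_-+1,-1\}$ cutoff, for $h<h_+$ the set $J_0$ is two disjoint arcs; this is exactly the splitting the paper records after the proof.
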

\begin{proof}
With the sets $k^{(S)}_j$ and $k^{(C)}_j$ in hand, to identify where the local maxima and minima of $\lambda(k)$ occur, we shall use the second derivative of  $\lambda(k)$ with respect to $k$:
\begin{eqnarray}
\label{2nddeig}
\frac{d^2}{dk^2} \lambda(k)=h^2\left(\cos(k h)\left[U_s\frac{\gamma_1}{h^4} 8(\cos(k h)-1)+2\frac{ (\gamma_2U_s-\gamma_3)}{h^2} \right]-U_s\frac{\gamma_1}{h^4} 8\sin^2(k h)\right).
\end{eqnarray}
From condition (\ref{Condition1}), we may define
\begin{equation}\label{eq:h_1}
h_1:=\sqrt{8\frac{\gamma_1U_s}{-C_1}}.
\end{equation}
In the rest of the proof, we set  $U^s=U_s^+$,  and we shall distinguish between the cases $ h\geq h_1$ and $0<h<h_1$.
\begin{itemize}
\item{\em Case $h\geq h_1$}:
When $ h>h_1$,  for $j=2m$ even, we deduce from \eqref{2nddeig},  that $k^{(S)}_{2m}$ are the positions of local minima, and for $j=2m+1$ odd,  that $k^{(S)}_{2m+1}$ are the positions of local maxima. It is not difficult to check that all the local maxima located at $k^{(S)}_{2m+1}$  have the same (global) value. This is not the case for all the local minima located at  $k^{(S)}_{2m}$. 
Also, when $h=h_1$, we find the following:  $k^{(C)}_j$ coincide with $k^{(S)}_{2j+1}$. Furthermore,  for $h>h_1$ (\ref{eq:eigCos}) has no solutions, and thus, no additional extreme points arise from equation (\ref{eq:eigCos}).
\item{\em Case $0<h<h_1$}: When $0<h<h_1$, for all $j\in\mathbb{N}$,  $k^{(S)}_j$ are the positions of local minima. Furthermore, this is the only case for which solutions of (\ref{eq:eigCos}) exist, and its solutions 
$k^{(C)}_j$  for $j\in\mathbb{Z}$, define the positions of local maxima given by the formula:
\begin{equation}
\lambda\left(k^{(C)}_j\right)=f'(U_s)+\frac{(\gamma_3-\gamma_2U_s)^2}{4\gamma_1U_s}\,. 
\end{equation}
Clearly, $\lambda(k^{(C)}_j)$ do not depend on $h$ and are positive thanks to (\ref{Condition2}).
\end{itemize}
We restrict now to the case  $h>h_1$. In this regime for $h$, we observe that  $\lambda(k^{(S)}_j h)$ with odd $j$ (where the maxima occur) is a decreasing function of $h$. 
Hence we may expect that destabilization  of $U^{+}_s>0$, may emerge at some critical value $h_c>h_1$. In fact, $h_c$ can be found by  the dispersion relation  
\begin{equation}\label{eq:dispersalA}
\lambda(k^{(S)}_j h)=0.
\end{equation}
After some algebra on \eqref{eq:dispersalA}, we get the equation
\begin{equation}\label{eq:dispersal_for_h}
f'(U_s)h^4-4C_1h^2-16U_s\gamma_1=0.
\end{equation}
Using (\ref{Condition1}), we find that the discriminant of the quartic equation \eqref{eq:dispersal_for_h} in $h$, is  $D=16(\gamma_2U_s-\gamma_3)^2+4^3f'(U_s)U_s\gamma_1=16C_2>0$.  A sign-analysis, implies  that  both solutions of equation (\ref{eq:dispersal_for_h}) for $h^2$ are positive, namely, 
\begin{equation}\label{eq:critical-h_s}
h_{\pm}^2=\frac{4C_1\pm\sqrt{D}}{2f'(U_s)}>0\,.
\end{equation}
Next, solving for $h$ and keeping the rest of the parameters fixed so that conditions (\ref{Condition1}) and (\ref{Condition2}) hold, we observe that the positive $h_{\pm}$  given in (\ref{eq:critical-h_s}), satisfy the following relations: 
\[h_-^2-h_1^2 = \frac{2C_1^2+8\gamma_1U_sf'(U_s)-2C_1\sqrt{C_2}}{C_1f'(U_s)} = \frac{2(C_2-C_1\sqrt{C_2})}{C_1f'(U_s)} >0,\]
and 
\[h_+^2-h_1^2 = \frac{2(C_2+C_1\sqrt{C_2})} { C_1f'(U_s)}  <0\,.\]
Therefore,  $h_{-}$ and $h_{+}$ are ordered as
\begin{eqnarray}
\label{ord}
h_+< h_1<h_- .
\end{eqnarray}
From \eqref{ord} we conclude that $h_{-}=h_c$, below which the maxima $\lambda(k^{(S)}_j h)$ become positive, and thus,  $U_s^{+}$ becomes unstable. 
\end{proof}

We remark that Proposition \ref{prop: destabilization_wrt_h} remains valid in the case of Dirichlet or periodic boundary conditions with the modification that the  relevant $k_j^{(S)}$ and $k_j^{(D)}$-sets are finite and determined by the choice of boundary conditions, and thus, destabilization also requires $k_j$ to lie in the instability set $J$.

Figure \ref{fig: Lin_stab} visualizes Proposition \ref{prop: destabilization_wrt_h} for the set of parameters $ \gamma_{1}=0.125, \gamma_{2}= 0.5, \gamma_{3}= 0.005, \alpha=0.02, \beta=0.1$, depicting the graphs of $\lambda(k)$ varying $h$.  
\begin{figure}[th!]
	\includegraphics[scale=0.8]{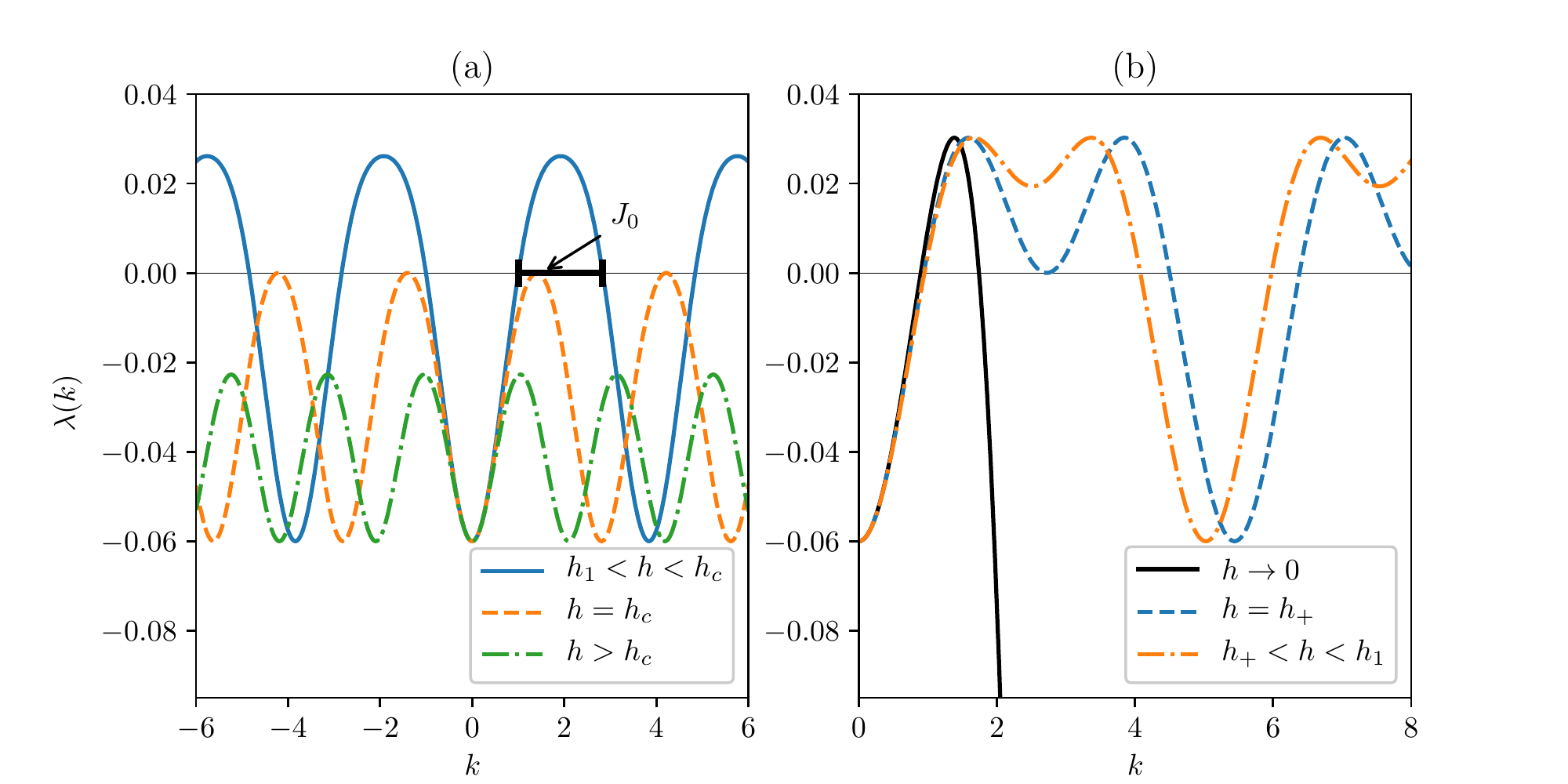}
	\caption{Plots of the graphs of the eigenvalue functions $\lambda(k)$ for different  values of the discretization parameter $h$.   Parameters: $ \gamma_{1}=0.125, \gamma_{2}= 0.5, \gamma_{3}= 0.005, \alpha=0.02, \beta=0.1$. On panel (a): The dotted dashed (green) curve corresponds to the case $h>h_c$. The dashed (orange) curve corresponds to the instability threshold $h=h_c$. The continuous (light blue) curve corresponds to the case that $h_1<h<h_c$, with $J_0$ denoting the principal instability band.
	On panel (b): The dashed (light blue) line corresponds to $h=h^+$ which is given in \eqref{eq:critical-h_s}. The solid black line depicts the case where $h$ approaches zero ($h\ll h+$) and  the dashed-dotted line (orange) corresponds to $h_+<h<h_1$.}
\label{fig: Lin_stab}
\end{figure}
As discussed in the proof of  Proposition \ref{prop: destabilization_wrt_h} and illustrated in Figure \ref{fig: Lin_stab}, 
for $h_1\leq h<h_{-}$, the principal maximum value of $\lambda$ is located at $k^{(S)}_1=\pi/h$, while for $h<h_1$, $k^{(S)}_1$ becomes
the position of a positive local minimum. The principal instability band $J_0$ of frequencies $k$, for which $\lambda(k)>0$, is initially an interval formed around $k^{(S)}_1$, 
and splits when $h$ becomes less or equal to $h_+>0$ given in \eqref{eq:critical-h_s},  
shaping two intervals $J_0^A$, $J_0^B$ symmetric with respect to $k^{(S)}_1$ so that  $J_0= J_0^A\cup J_0^B$.
Finally, letting $h\to 0$ in \eqref{eq:eigenvalues} implies $\lambda(k)=f'(U_s)-C_1k^2-\gamma_1U_s k^4$, which is the characteristic polynomial of the continuous system.

\paragraph{From spatially uniform to spatially periodic equilibria.}
Of specific physical significance, relevant to pattern formation of vegetation patches are spatially periodic equilibria which can be considered as perturbations of a spatially uniform state. It is of primary interest whether such spatial structures emerge in the convergence dynamics of the system \eqref{eq1}. 
They can be approximated by the simplest periodic ansatz
\begin{eqnarray}
\label{spans}
U_P(n)=U_s+\epsilon\cos(khn),\;0<\epsilon\ll1,
\end{eqnarray}
and may emerge when $\lambda(k)>0$, as suggested by the previous linear stability analysis. 
Here, $P$ will denote the period of the discrete function \eqref{spans}, which has to be a positive integer satisfying $U_P(n)=U_P(n+P)$. 
Therefore, $khP=2\pi m$  for some $m\in\mathbb{Z}$.%

Now, for integers $m\in\mathbb{Z}$ and $P\in\mathbb{Z}^+$, a rational multiple of $2\pi/h$, namely,
\begin{eqnarray}
\label{spans1}
k(m,P)=\frac{m}{P}\frac{2\pi}{h},
\end{eqnarray}
provides the frequency of a spatial oscillation given that  $k(m,P)$ belongs to the instability set $J$.
In the case where $m\in\mathbb{Z}$ and $P\in\mathbb{Z}^+$ have no common factors, $P$ is called the fundamental period of \eqref{spans}, 
and  $m$ is called the envelope,  determining the shape of \eqref{spans}.
Note that for a lattice supplemented with periodic boundary conditions, the number of nodes has to be a multiple of the principal period $P$.
The case $|m/P|=1$ and $m=0$ correspond to the so called constant, or trivial modes. 
For each admissible $(m,P)$ (in the sense that the corresponding $k$ belongs to the instability set) 
with $P$ being a fundamental period,  a representative of $m$ can be traced in the interval $(0,P)$ (and the corresponding $k$ in $J_0$). 
Moreover,  for each $m \in (0,P)$ there is an additional envelope  $m'=P-m \in (0,P)$, which we may call the symmetric envelope, 
that gives identical discrete sinusoids, since $\cos(\frac{2\pi m}{P} n)= \cos(\frac{2\pi m'}{P} n)$. 

For instance, when $P>2$ is a prime, we can get up to $(P-1)/2$ non-trivial distinct modes with such a period. 
In order, to count the number of distinct modes with a particular fundamental period $P$ in general, 
we must take into account only $m$ for which $P$ and $m$ are relatively prime, 
and exclude symmetric $m$-values which provide identical configurations. 

\subsection{Numerical results}\label{section:numerics}
The numerical simulations on the dynamics of the  DLL lattice  (\ref{eq1}) will consider the following cases of initial conditions: 
\begin{itemize}
\item Positive spatially periodic initial data of the form:
\begin{equation}\label{eq:IC2}
U_n^0=U_n(0) =A \cos(k x_ n)+ B\,.
\end{equation}
The initial condition \eqref{eq:IC2} is relevant in investigating the dynamics of periodic structures of the form \eqref{spans}. It is compliant with the periodic boundary conditions but also with the Dirichlet, for  suitable choices of $k$. 
\item  Symmetric box-shaped localized concentrations
centered around $x=0$. More precisely, for a symmetric sub-interval $W\subset [-L,L]$  of with $w$,  and $A>0$, the initial condition has the form:
\begin{equation}\label{eq:IC1}
U_n^0=U_n(0) = \left\{
\begin{array}{cc}
A, &\text{if}\quad x_n \in W,\\
0, &\text{if}\quad x_n \notin W\,.
\end{array}
\right.
\end{equation}
where $A$ is the initial amplitude. Note that the box-shaped initial condition can even be an impulse located at $x=0$ when the distance $h$ between the nodes of the lattice is large enough. The initial condition \eqref{eq:IC1} is compliant with all the types of boundary conditions (vanishing, Dirichlet and periodic).
\end{itemize} 
\subsubsection{Dirichlet boundary conditions}
In the case of the DLL lattice \eqref{eq1} supplemented with Dirichlet boundary conditions, the numerical experiments will examine the  global  and linear stability criteria derived in sections III and IV, for the trivial steady state $U^0_s=0$, particularly in the light of Remark \ref{RemTS}.   We recall that the trivial steady state $U^0_s=0$, is of particular physical significance since it is associated with the spatial extinction of vegetation patches, and thus, desertification dynamics.

With the above motivation, we examine the evolution of the spatially extended initial condition \eqref{eq:IC2}, with fixed $k=\frac{\pi}{L}$, varying the amplitudes $A=B$. The half-length interval is $L=30$ and the discretization parameter is $h=1.5$.  The parameter $\alpha=0.1$,  and for the above values of $h$ and $L$, we find from the formula \eqref{crucequiv} for $\mu_1$ and \eqref{m1crit} for $\hat{\gamma}_{3,\mathrm{thresh}}$, that 
\begin{eqnarray*}
\mu_1\approx 0.002,\;\; \hat{\gamma}_{3,\mathrm{thresh}}=\frac{\alpha}{\mu_1}\approx 36.5.
\end{eqnarray*}	
We investigate two examples for $\beta>0$, and $\beta<0$, respectively.
Figure \ref{fig: beta>0} depicts the evolution of the $\ell^2$-norm of the solution when $\beta=0.1$ for three different values of $\gamma_3$. The three dotted-dashed (red) curves correspond to the evolution of the initial condition (\ref{eq:IC2}) for $A=B=0.2$ (upper starting curve), $A=B=0.05$ (middle starting curve) and $A=B=0.005$ (bottom starting curve), when $\gamma_3=36<\hat{\gamma}_{3,\mathrm{thresh}}$.  For this choice of $\gamma_3$, the criterion of Proposition \ref{stabt} on the linear stability of $U^0_s=0$ and  of Proposition \ref{PInf3} on its global stability, is violated. Here is where the results of Proposition \ref{PInf6} on the existence of non-trivial attracting sets come into play. The numerical results show that the global attractor is a non-trivial equilibrium possessing the form of a a hump-shaped non-negative state, shown in the inset of Fig. \ref{fig: beta>0}. Furthermore, the form of the graphs of the $\ell^2$-norm justify the qualitative relevance of the functional form of the estimates derived in Proposition \ref{PInf6}, from the solutions of the Bernoulli-type inequalities; the graphs seem to represent logistic-type integral curves as being solutions of Bernoulli type ODE's.
\begin{figure}[tb!]
	\includegraphics[scale=0.8]{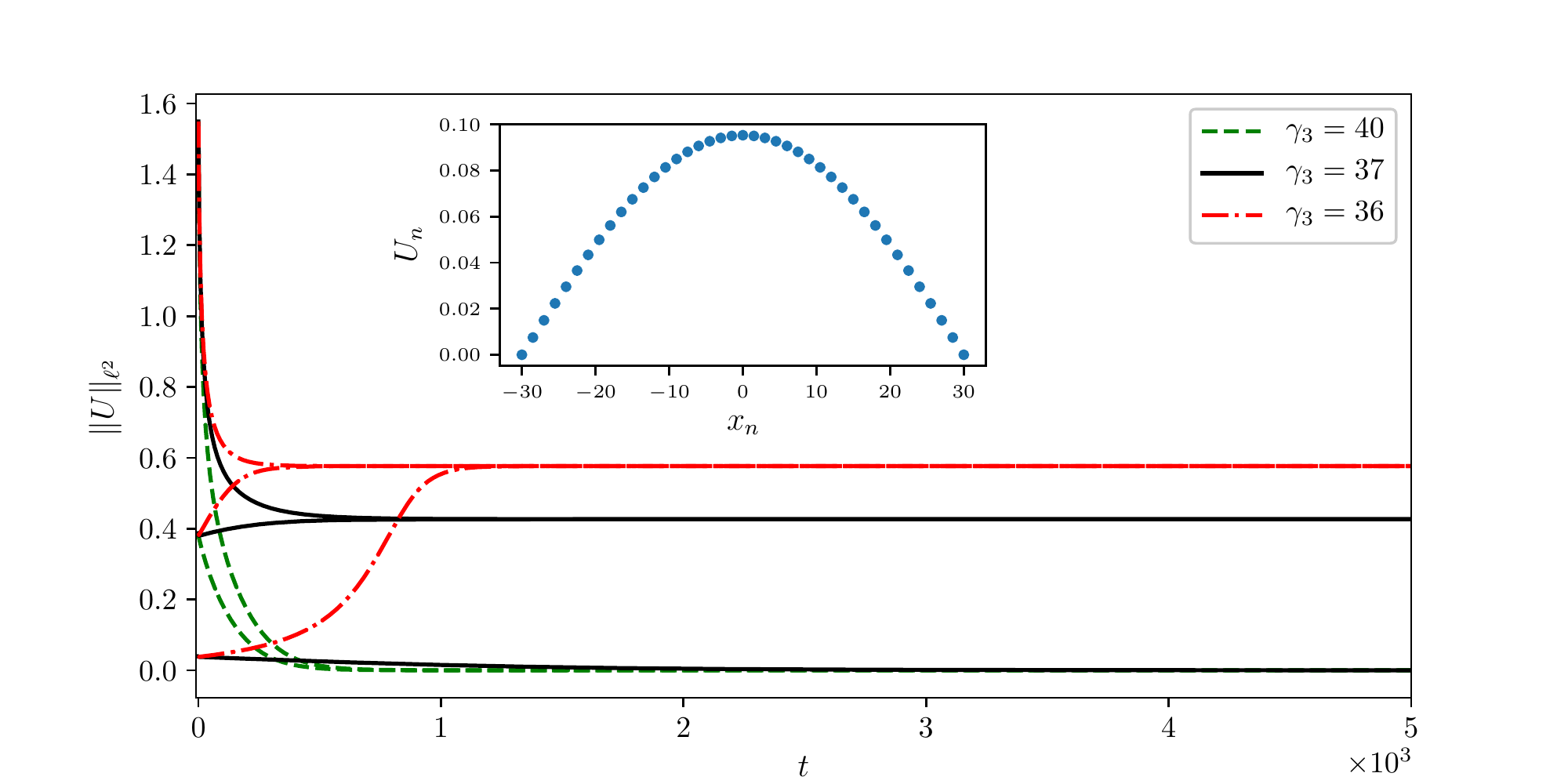}
	\caption{Dynamics for $\alpha>0$, $\beta>0$: Evolution of the $\ell^2$-norm of the solutions of the lattice \eqref{eq1} supplemented with Dirichlet boundary conditions, when starting  from  initial conditions \eqref{eq:IC2} for various cases of $A=B$, $\beta=0.1$, varying $\gamma_3$. Other parameters: $\gamma_{1}=0.125, \gamma_{2}= 0.5, \alpha=0.1$, $L=30$ and $h=1.5$.  The set of the dotted-dashed (red) curves corresponds to the case $\gamma_3=36$ (upper curve for $A=B=0.2$, middle curve for $A=B=0.05$ and bottom curve for $A=B=0.005$). The set of the continuous (black) curves corresponds to the case $\gamma_3=37$ (upper curve for $A=B=0.2$, middle curve for $A=B=0.05$ and bottom curve for $A=B=0.005$). The couple of the dashed (green) curves corresponds to the case $\gamma_3=40$ (upper curve for $A=B=0.2$ and bottom curve for $A=B=0.05$). The inset depicts the non-negative equilibrium attracting all trajectories when $\gamma_3=37$.}
	\label{fig: beta>0}
\end{figure}
\begin{figure}[tbh!]
	\includegraphics[scale=0.8]{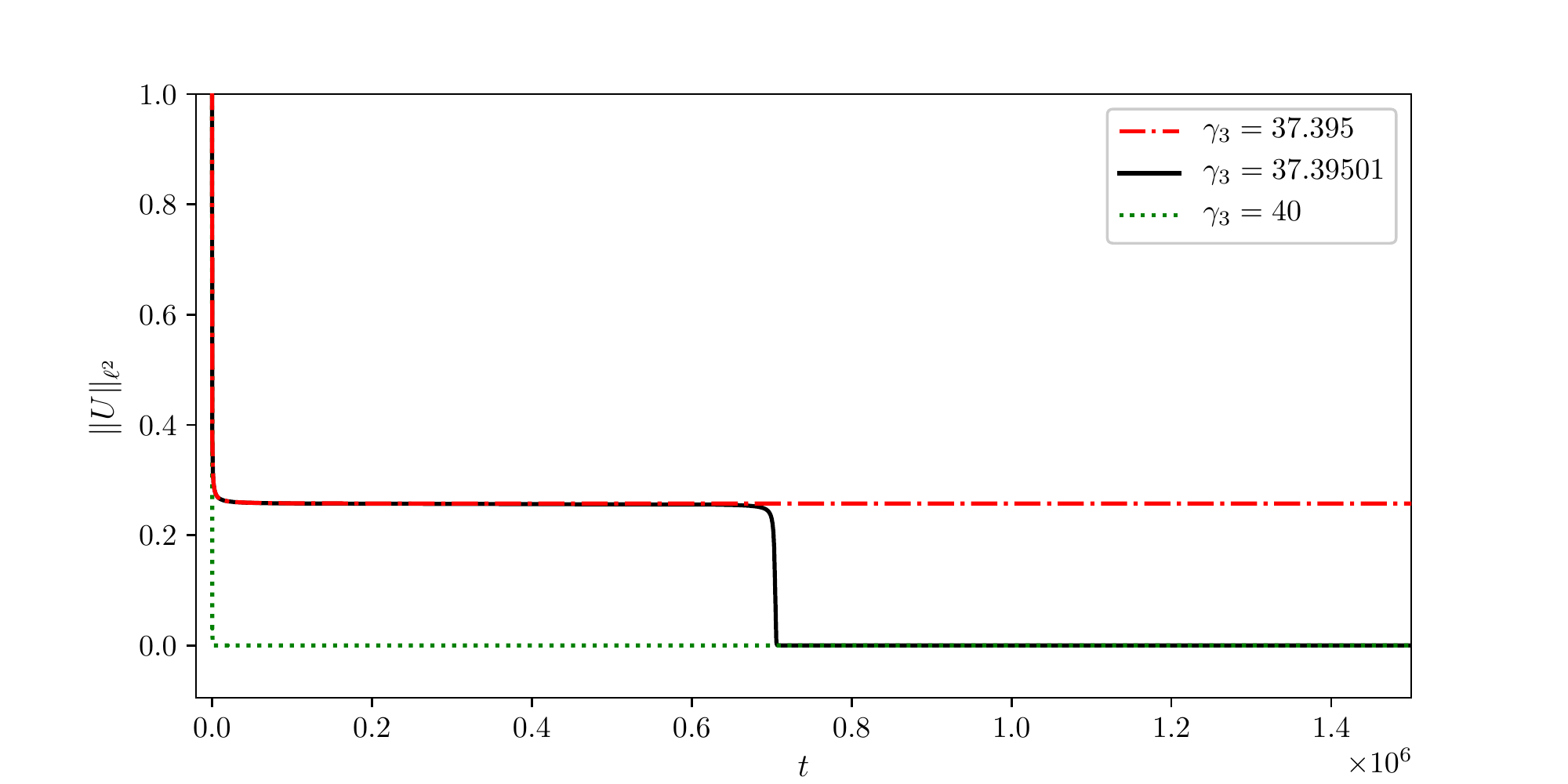}
	\caption{Dynamics for $\alpha>0$, $\beta>0$: Evolution of the $\ell^2$-norm of the solutions of the lattice \eqref{eq1} supplemented  with Dirichlet boundary conditions, when starting from the initial condition \eqref{eq:IC2} for $A=B=0.28$, with $\|U^0\|_{\ell^2} \approx 2.169$, and varying $\gamma_3$. 
Dotted curve (green) $\gamma_3=40> \gamma_3^*$.
Solid curve (black)  $\gamma_3=37.3950$ (approximately $\gamma_3^*$).
Dashed-dotted  curve (red)  $\gamma_3=37.395<\gamma_3^*$.
Other parameters: $\gamma_{1}=0.125, \gamma_{2}= 0.5, \alpha=0.1$, $\beta=0.1$, $L=30$ and $h=1.5$. }
	\label{fig: beta>trans}
\end{figure}
Increasing $\gamma_3=37>\hat{\gamma}_{3,\mathrm{thresh}}$, the $\ell^2$-norm evolution is portrayed by the continuous (black) curves. The upper starting curve corresponds to the dynamics of the initial condition \eqref{eq:IC2} with $A=B=0.2$, the middle  starting curve to $A=B=0.05$ and the bottom curve to $A=B=0.005$. The threshold value $\hat{\gamma}_{3,\mathrm{thresh}}$ seems to be quantitatively sharp for the linear stability of $U^0_s=0$ and significantly qualitative relevant. The lower continuous curve corresponds to the evolution of small $\ell^2$-norm initial data; the initial condition in this case seems to be in the domain of attraction of $U^0_s=0$. As $\gamma_3=37$ is slightly larger than $\hat{\gamma}_{3,\mathrm{thresh}}$, the steady-state $U_0^s=0$ should be only linearly asymptotically stable and not globally nonlinearly stable. Thus, initial data of larger norm should be out of the domain of attraction of $U^0_s=0$ possessing non-trivial dynamics. This is the case represented by the middle and upper continuous curves, converging to a non-trivial equilibrium similar to the one shown in the inset.  
However, for the same couple of initial conditions we numerically identified a threshold value $\gamma_3^*\approx 37.39501>\hat{\gamma}_{3,\mathrm{thresh}}$ 
($ 37.395<\gamma_3^*< 37.39501$)  with the following property. For $\gamma_3>\gamma_3^*$, all the initial conditions  converge to the trivial equilibrium $U^0_s=0$. 
This is the case depicted by the couple of the dashed (green) curves, where the upper curve corresponds to $A=B=0.2$ 
and the bottom curve to $A=B=0.05$, when $\gamma_3=40>\gamma_3^*$. 

The transition from bistable dynamics to monostable dynamics  was further investigated for $\gamma_3 > \hat{\gamma}_{3,\mathrm{thresh}}$,
as it is shown in Figure \ref{fig: beta>trans}. In particular, for an initial condition with relative large norm $\|U^0\|_{\ell^2}\approx 2.169$, when $\gamma_3=37.395$ 
the trajectory converges to the positive non-trivial equilibrium, while for $\gamma_3=37.39501$
it stays close to that equilibrium for a long time but eventually diverges widely from it, converging to the zero state.

Actually, repeated numerical experiments for the same fixed values $\alpha,\beta,\gamma_1,\gamma_2>0$, and
varied $\gamma_3$-values and norm $||U^0||_{\ell^2}$  of the initial data, 
revealed the following dynamical scenarios:
\begin{enumerate}
	\item When  $\gamma_3<\hat{\gamma}_{3,\mathrm{thresh}}$, the trivial equilibrium $U^s_0=0$ is unstable,  and a non-trivial positive equilibrium attracts all the trajectories. 
	\item When  $\gamma_3>\hat{\gamma}_{3,\mathrm{thresh}}$, the trivial equilibrium $U^s_0=0$ is linearly stable. Moreover, there exists a threshold value 
	$\gamma_3^*>\hat{\gamma}_{3,\mathrm{thresh}}$ such that:
\begin{enumerate}
\item if $\hat{\gamma}_{3,\mathrm{thresh}}<\gamma_3<\gamma_3^*$, then
for sufficiently small norm $||U^0||_{\ell^2}$  of the initial data,  the trivial steady state $U^s_0=0$ attracts their trajectories, 
while for initial conditions of larger norm, a non-trivial positive equilibrium attracts their trajectories (bistable case).
\item if $\gamma_3>\gamma_3^*$, then $U^s_0=0$ attracts all the trajectories for all initial conditions (monostable case)
\end{enumerate}		
\end{enumerate}

Regarding scenario 1, it should be noted that for $0<\gamma_3\ll \hat{\gamma}_{3,\mathrm{thresh}}$,  the convergence dynamics may involve equilibria with a complex spatial structure than the positive hump portrayed in the inset of Fig. \ref{fig: beta>0}. Such equilibria  are similar to those that will be illustrated in the next paragraph, discussing the dynamics of the lattice supplemented with the periodic boundary conditions. Regarding scenario 2, it elucidates further the physical significance of Proposition \ref{PInf3}, on the global (uniform with respect to all initial data) stability of $U^s_0$: whatever the magnitude of the norm $||U^0||_{\ell^2}$ of the initial condition is, there exists a {\em universal extinction threshold} $\gamma_{3,\mathrm{thresh}}>0$, such that if $\gamma_3>\gamma_{3,\mathrm{thresh}}$, the trivial state $U^s_0=0$ is the globally attracting state; it represents an extreme scenario where desertification is inevitable if the parameter $\gamma_3$ exceeds such a universal threshold.  
\begin{figure}[bh!]
	\includegraphics[scale=0.8]{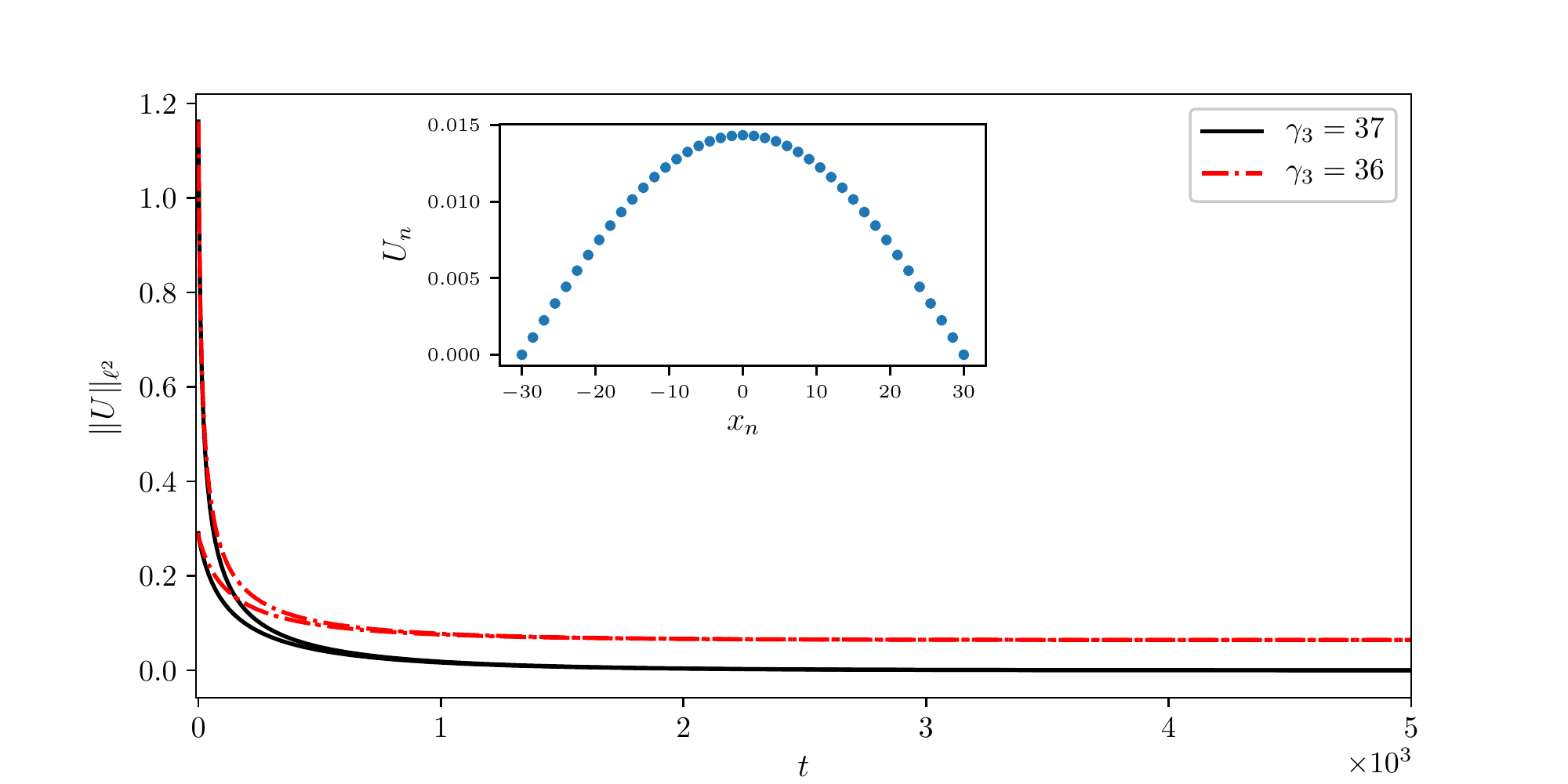}
	\caption{Dynamics for $\alpha>0$, $\beta<0$: Evolution of the $\ell_2$-norm of the solutions of the lattice \eqref{eq1} supplemented  with Dirichlet boundary conditions, when starting  from  initial conditions \eqref{eq:IC2} for various cases of $A=B$, $\beta=-0.1$, varying $\gamma_3$. Other parameters: $\gamma_{1}=0.125, \gamma_{2}= 0.5, \alpha=0.1$, $L=30$ and $h=1.5$.  The set of the dotted-dashed (red) curves corresponds to the case $\gamma_3=36$ (upper curve for $A=B=0.15$, bottom curve for $A=B=0.03$). The set of the continuous (black) curves corresponds to the case $\gamma_3=37$ (upper curve for $A=B=0.15$, bottom curve for $A=B=0.03$).}
	\label{fig: beta<0}
\end{figure}

Figure \ref{fig: beta<0},  depicts the evolution of the $\ell^2$-norm of the solution when $\beta=-0.1$ for two different cases of $\gamma_3$.  The initial condition is yet \eqref{eq:IC2} and the rest of the parameters are fixed as above. The numerical results illustrate the sharpness of the analytical predictions of Proposition \ref{PInf4}, on the global stability of $U^0_s=0$, which complement the  analytical predictions for the regime $\alpha>0$ and  $\beta<0$, beyond linearization.  The  couple of continuous (black) curves show the dynamics when $\gamma_3=37>\hat{\gamma}_{3,\mathrm{thresh}}$ (upper starting curve for $A=B=0.15$ and lower curve for $A=B=0.03$). We observe that the trivial steady state $U^0_s=0$ attracts the trajectories for both cases of the initial conditions. The couple of the dotted-dashed (red) curves show the dynamics when $\gamma_3=36<\hat{\gamma}_{3,\mathrm{thresh}}$ (upper starting curve for $A=B=0.15$ and lower curve for $A=B=0.03$). This is the case of the instability of $U^0_s=0$, and the trajectories are attracted by the positive steady-state portrayed in the inset. 

We remark that the same dynamics as discussed above are observed when the box-initial data are used (not shown here) for both regimes of $\alpha>0$, $\beta>0$ and $\alpha>0$, $\beta<0$.

\subsubsection{Periodic boundary conditions}
In the case of the periodic boundary conditions, simulations are performed for fixed values of $\beta=0.1$, $L=60$,  $\gamma_3=0.005$, and various values of $h$.
The parameter $\alpha$  will take either negative or positive values.  


When $\alpha>0$ is fixed, the results of the numerical simulations are in full agreement with the analytical predictions of the linear stability analysis: for sufficiently large $h$, the positive uniform equilibrium $U^+_s>0$ attracts all the trajectories, while lower values of $h$ cause its destabilization,  giving rise to spatially-periodic equilibria. Remarkably, the threshold value $h_c$ proved to be sharp in distinguishing the above behaviors.   

For $\alpha= 0.02$, we find that $U^+_s=0.2$ and $h_c\approx 2.236$.  Therefore, when $h<h_c$ the uniform state $U^+_s$ should be unstable: the maximum value of $\lambda(k)$ is attained at $k_1=\frac{\pi}{h}$. Admissible choices of $(m,P)$ are such that $k = \frac{2m}{P}\frac{\pi}{h}\in J_0$,  and for these admissible choices of $(m,P)$, spatially periodic equilibria  should emerge with distinct profiles. Figure \ref{fig:A3} portrays the profiles of the equilibrium states for an initial condition \eqref{eq:IC2} with $A=B=U_s^+$ (which is a harmonic perturbation of $U_s^+$ of the form \eqref{spans}),  when $h=2<h_c$.  The principal instability band is found $J_0=(1.107,2.034)$.  For $h=2$, the maximum value of $\lambda(k)$ is attained at $k_1=\pi/2\approx 1.57$ which coincides with $k(m,P)$ in \eqref{spans1} for the pair $(m,P)=(1,2)$, and if $k_1$ is inserted in the initial condition, the solution converges to the equilibrium shown in the left panel (a); the mode $\cos(n \pi )$ grows,  developing  a mosaic-solution (recurring every second point).  For this same value of $h=2$, we observe the remarkable changes between the profiles of the equilibria, when $(m,P)=(2,5)$ with $k(m,P) \approx 1.256$ shown in the middle panel (b), and $(m, P)=(4,9)$ with $k\approx 1.396$ shown in the right panel (c). Note that since the number of the lattice nodes is $N=60$, the equilibrium of panel (b) is $5$-periodic as $P=5$ is a divisor of $N$. This is not the case for the equilibrium of panel (c) since $P=9$ is not a divisor of $N$. Therefore, 
the system does not admit 9-periodic solutions and instead a mosaic solution consisting of 5 and 2 -periodic solutions is reached, that satisfies the boundary conditions.
\begin{figure}[bp!]
	\begin{tabular}{ccc}
		(a)&(b)&(c)\\
		\includegraphics[scale=\figscaleee]{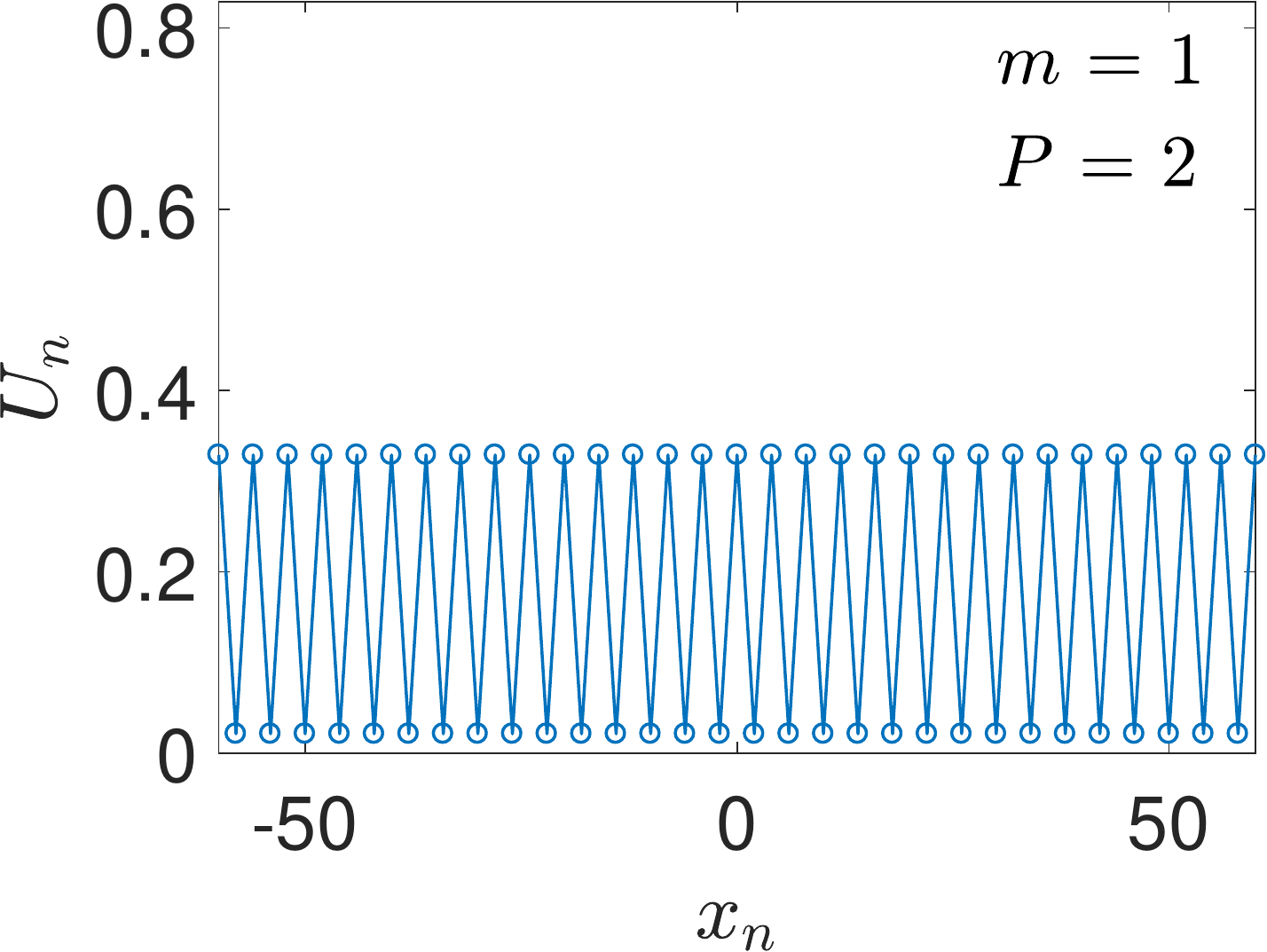}&
		\includegraphics[scale=\figscaleee]{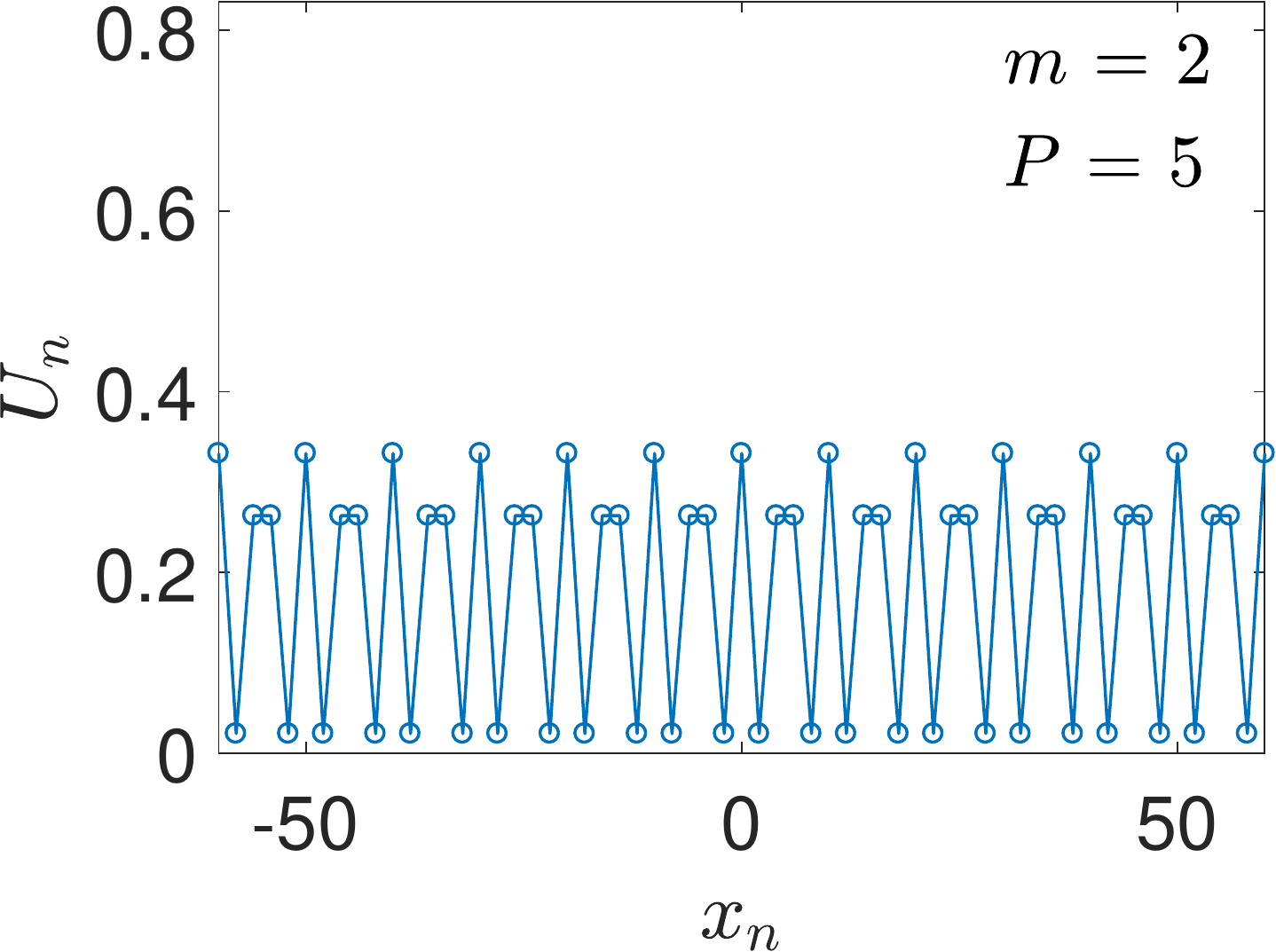}&
		\includegraphics[scale=\figscaleee]{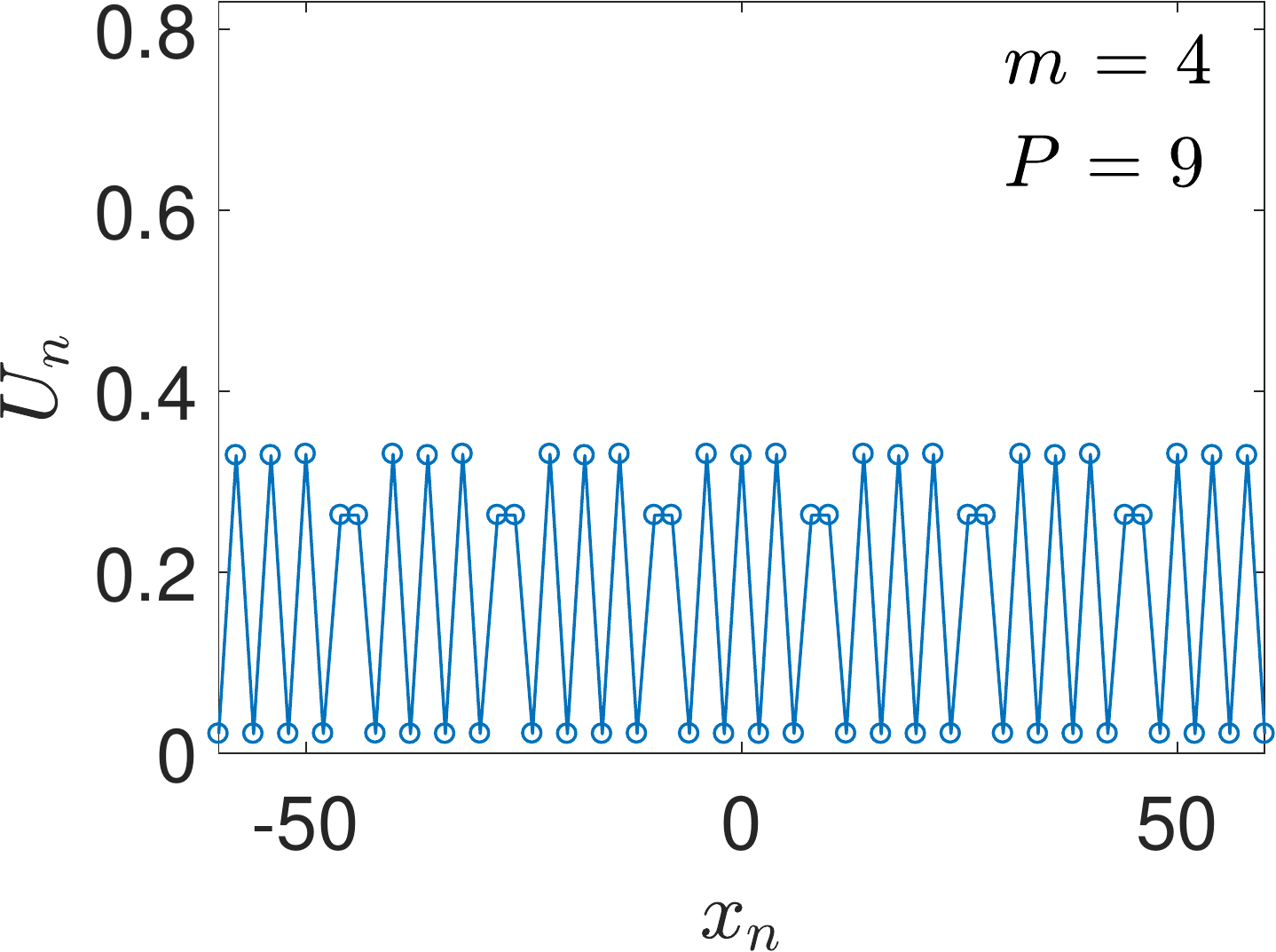}
	\end{tabular}
	\caption{Steady-states for the evolution of the system \eqref{eq1}, for the set of parameters $h=2<h_c=2.236$, $L=60, \gamma_{1}=0.125, \gamma_{2}= 0.5, \gamma_{3}= 0.005, \alpha=0.02, \beta=0.1$\, and the initial condition \eqref{eq:IC2} with $A=B=0.2$ and frequency $k = \frac{2\pi}{h}\frac{m}{P}\in J_0=(1.107,2.034)$ for different choices of $P$ and  $m$.
		}
	\label{fig:A3}
\end{figure}

Let us comment further on the formation mechanism of steady states, motivated by the example of Fig. \ref{fig:A3}. For $P=3$ there is no $m\in\mathbb{Z}$ to form an admissible pair, thus a $3$-spatially periodic equilibrium cannot occur . For $P=4$, the only such an $m=2$. However, since for $(m,P)=(2,4)$, the associated $k=k_1$, the achieved steady-state is the one corresponding to the pair $(m,P)=(1,2)$. Therefore, there are no $4$-periodic equilibria. 
When $P=5$, the pairs $(m,P)=(2,5)$ and $(m,P)=(3,5)$ give rise to the same periodic equilibrium, since $m=3$ and $m'=P-3=2$ are symmetric choices for $m$.

Keeping $\alpha= 0.02$ fixed, a decreasing of $h$ causes the expansion of  $J_0$ (see Figure \ref{fig: Lin_stab}), and thus more admissible $k$-values arise. 
It turns out that additional periodic steady-states can be formed. 
However, when $h$ becomes less than $h_+\approx 1.1547$  the band splits and reduces the possible growing modes. 
Finally, in the spatially continuum regime (i.e. $h\ll1$), the spatial-patterns that appear approach continuous sinusoids. 
It becomes clear from the above, that for intermediate values of  $h<h_c$ in the discrete regime, the system may exhibit richer dynamics than for $h$ in the continuous regime, 
in terms of the spatial structure of the equilibria that it forms.  
Multiple, stable spatially-periodic states seem to exist for fixed values of $h$, which can be patched together forming complex hybrid-states. These states are not necessarily mosaics of periodic patterns, but they may also be localized structures in a non-uniform periodic background.
Various such states can be obtained when testing the dynamics of the system for localized initial conditions. In this case,  invasion phenomena may also emerge. 
\begin{figure}[tbh!]
	\centering 
	\includegraphics[scale=0.36]{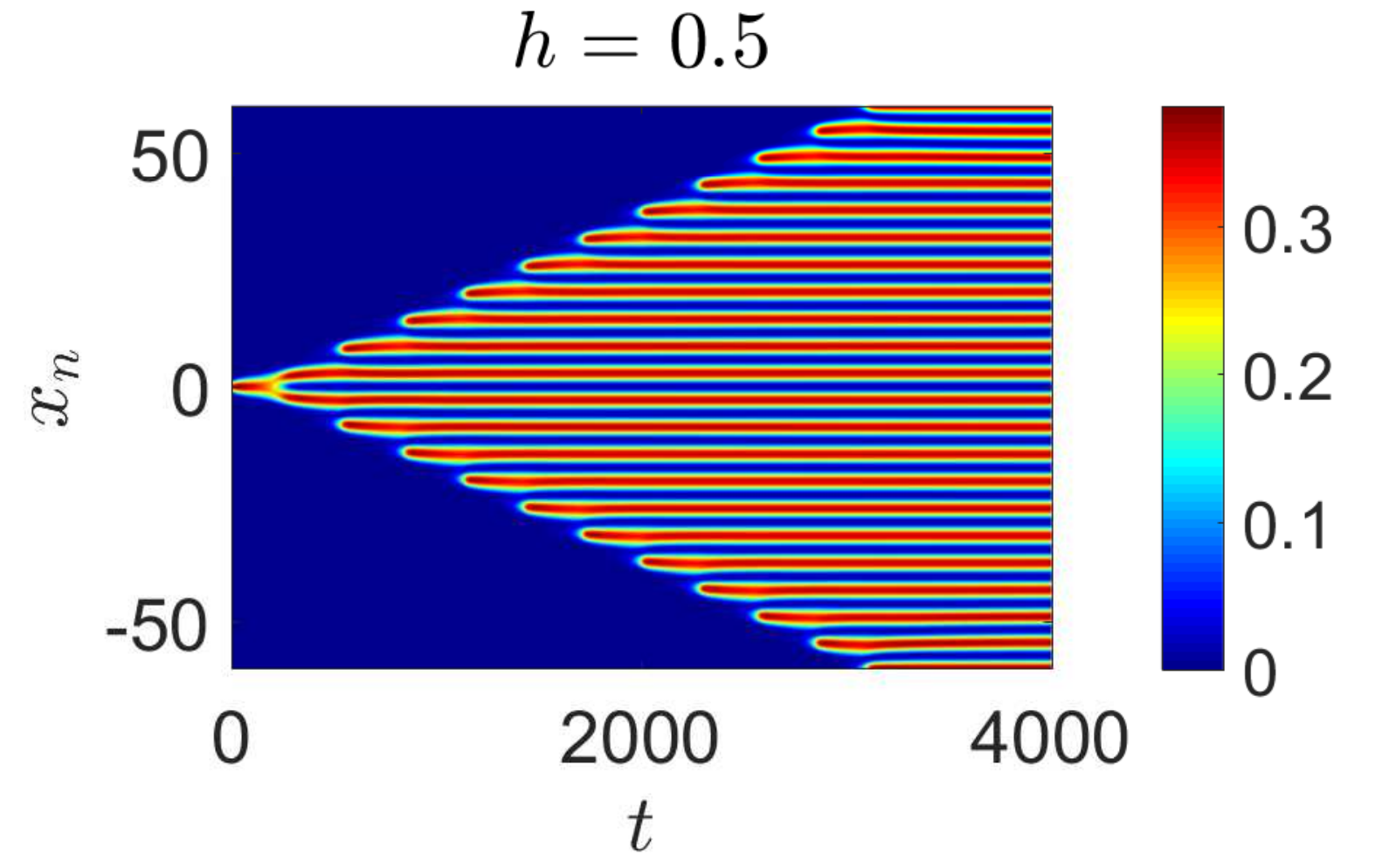}
	\includegraphics[scale=0.36]{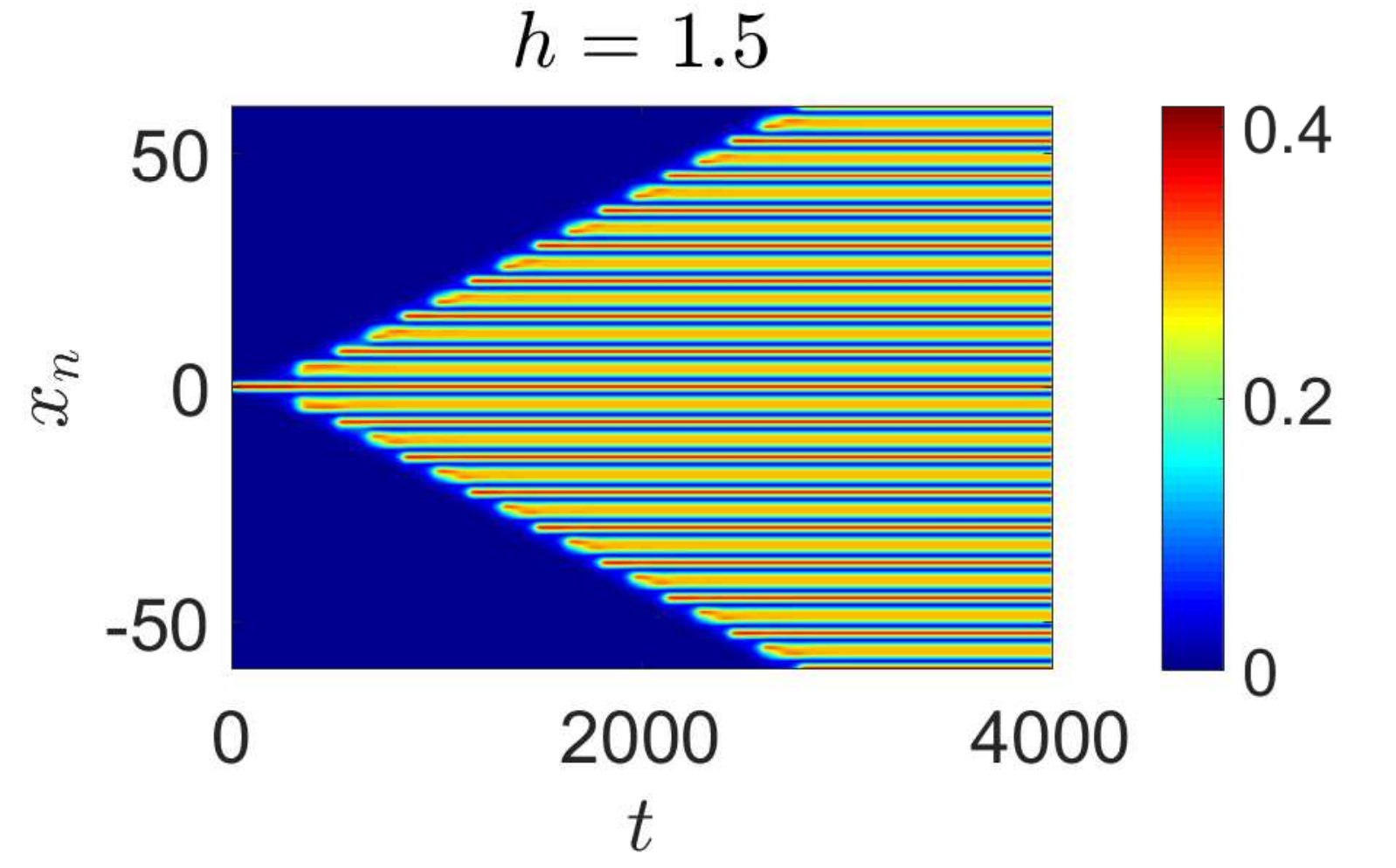}
	\includegraphics[scale=0.36]{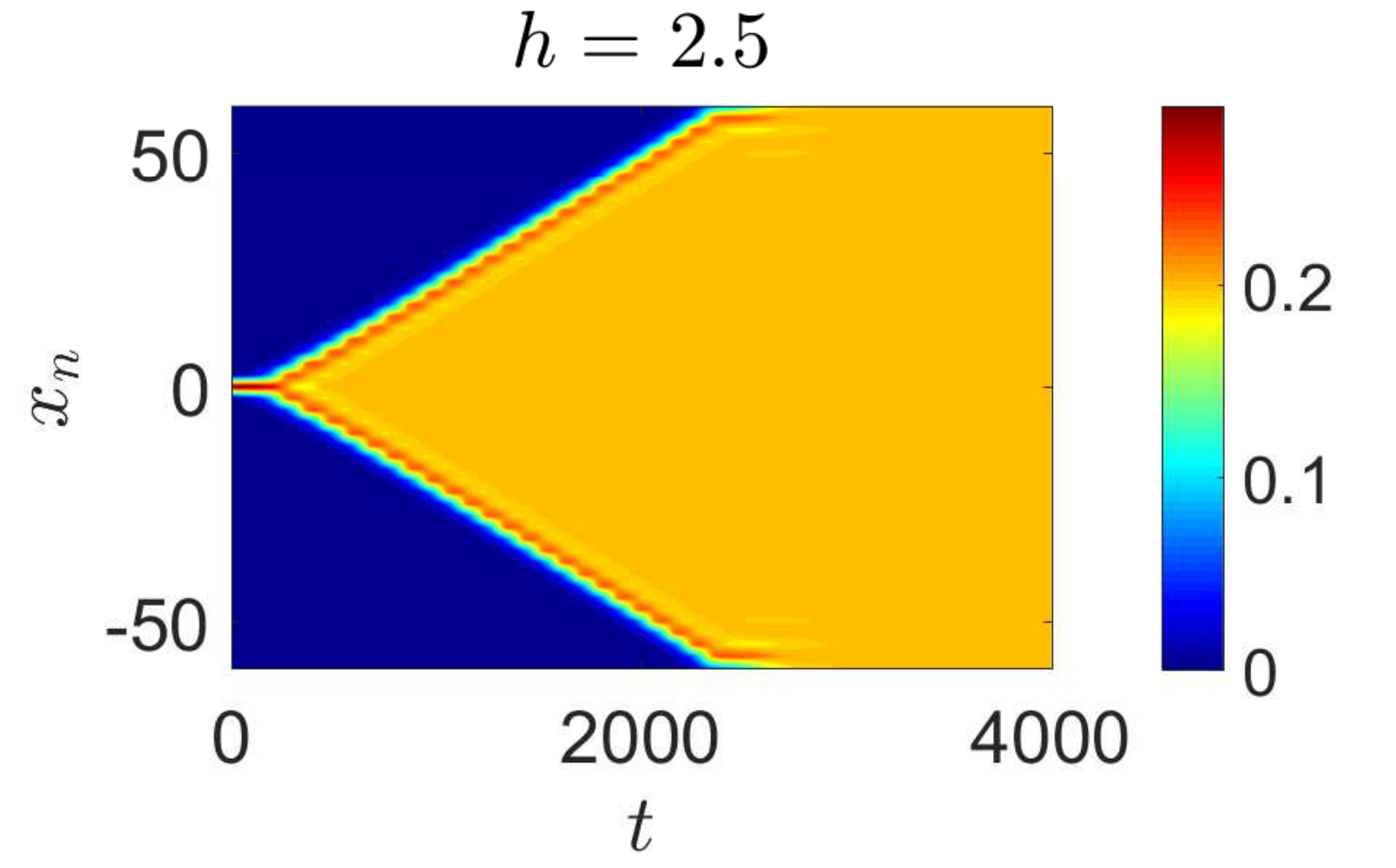}
	\caption{Contour plots of the time-evolution of a  box-profiled initial condition with $w=2$ and amplitude $A=0.28$, for increasing values of $h$. Left panel for $h=0.5<h_c=2.2236$, middle panel for $h=1.5<h_c$ and right panel for $h=2.5>h_c$.  Other parameters: $L=60, \gamma_{1}=0.125, \gamma_{2}= 0.5, \gamma_{3}= 0.005, \alpha=0.02, \beta=0.1$.}
	\label{fig:A2}
\end{figure}	
\begin{figure}[th!]
	\centering 
	\begin{tabular}{ccc}
		(a)&(b)&(c)\\
		\includegraphics[scale=\figscaleee]{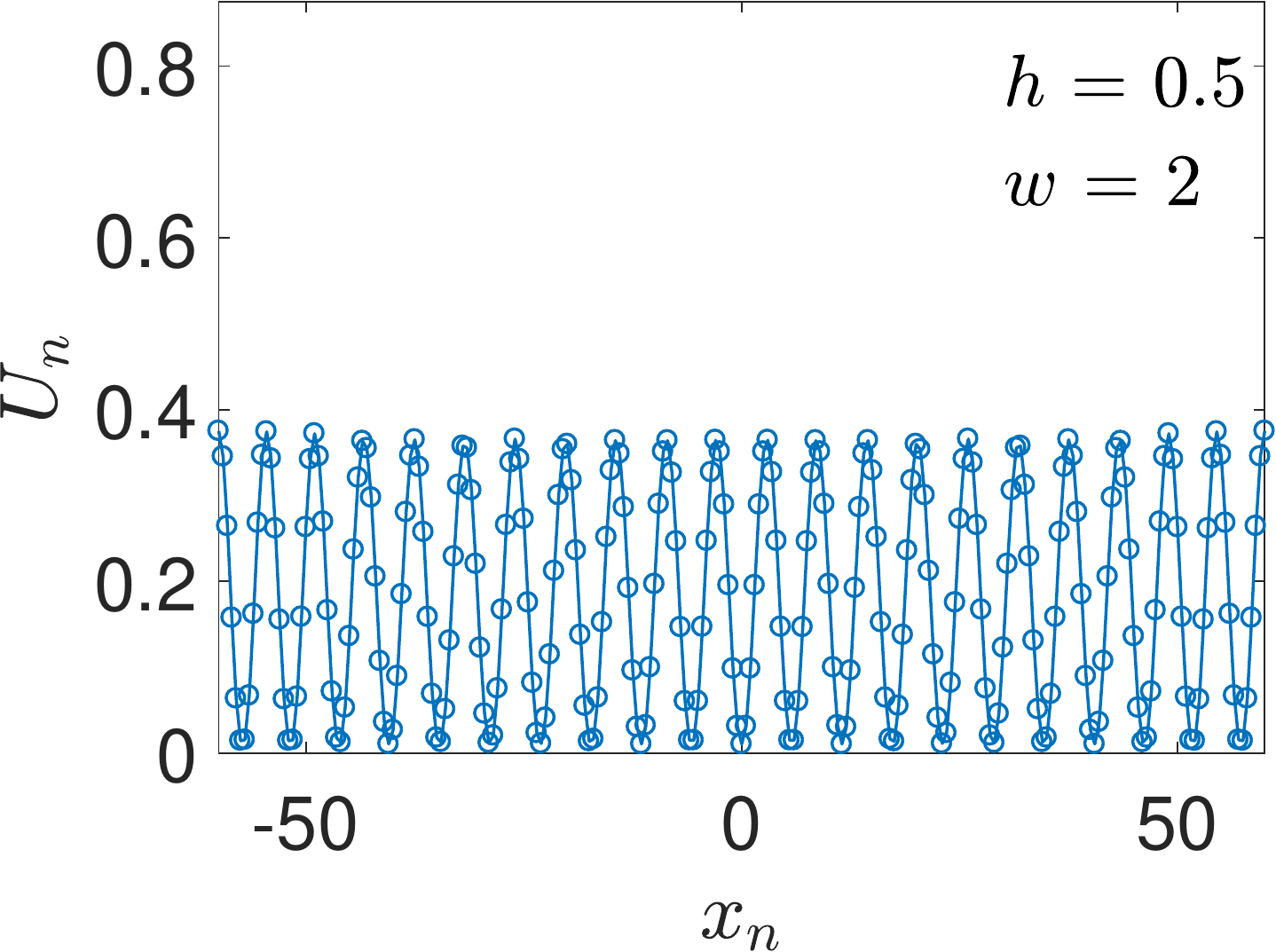}&
		\includegraphics[scale=\figscaleee]{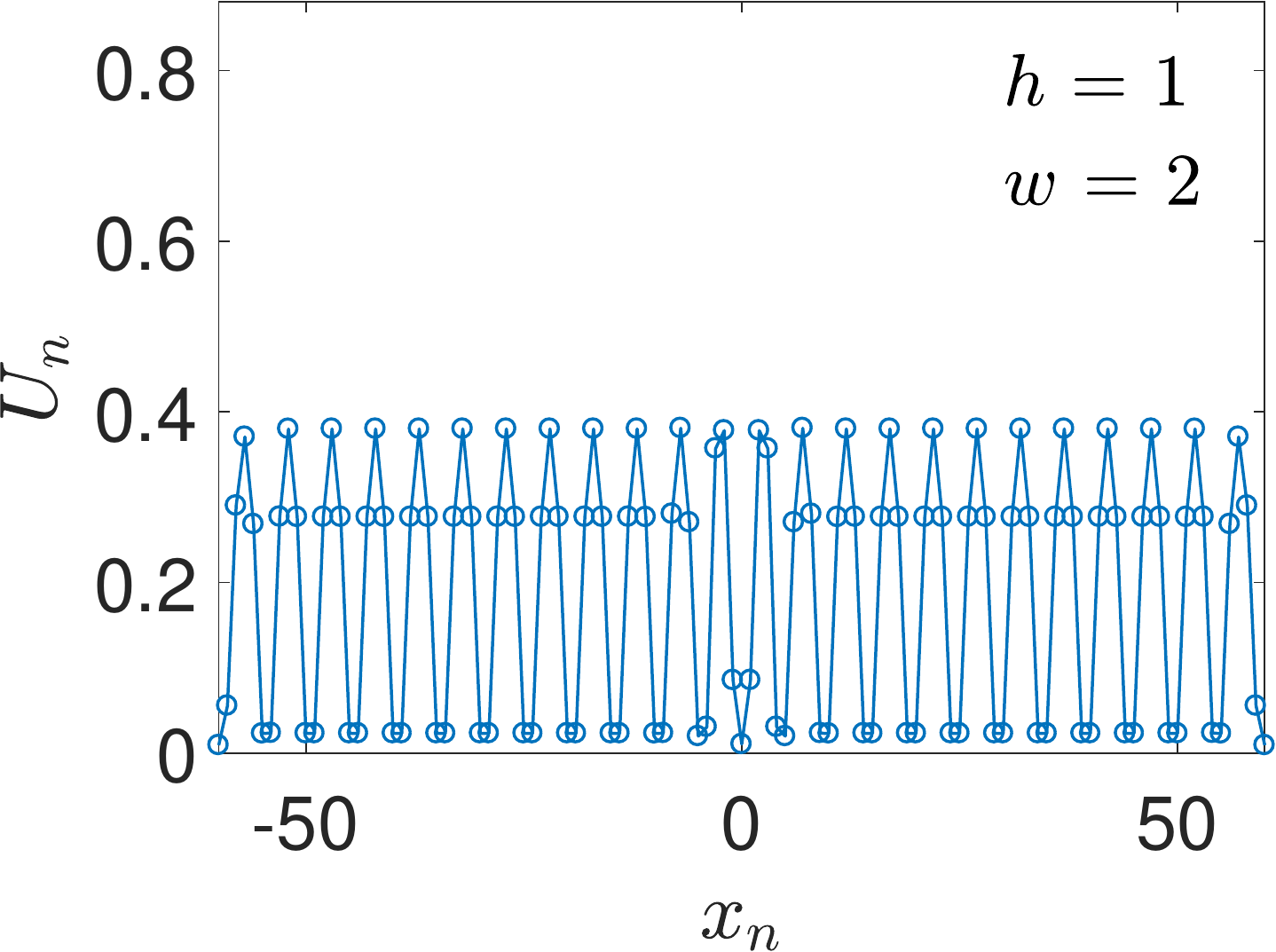}&	
		\includegraphics[scale=\figscaleee]{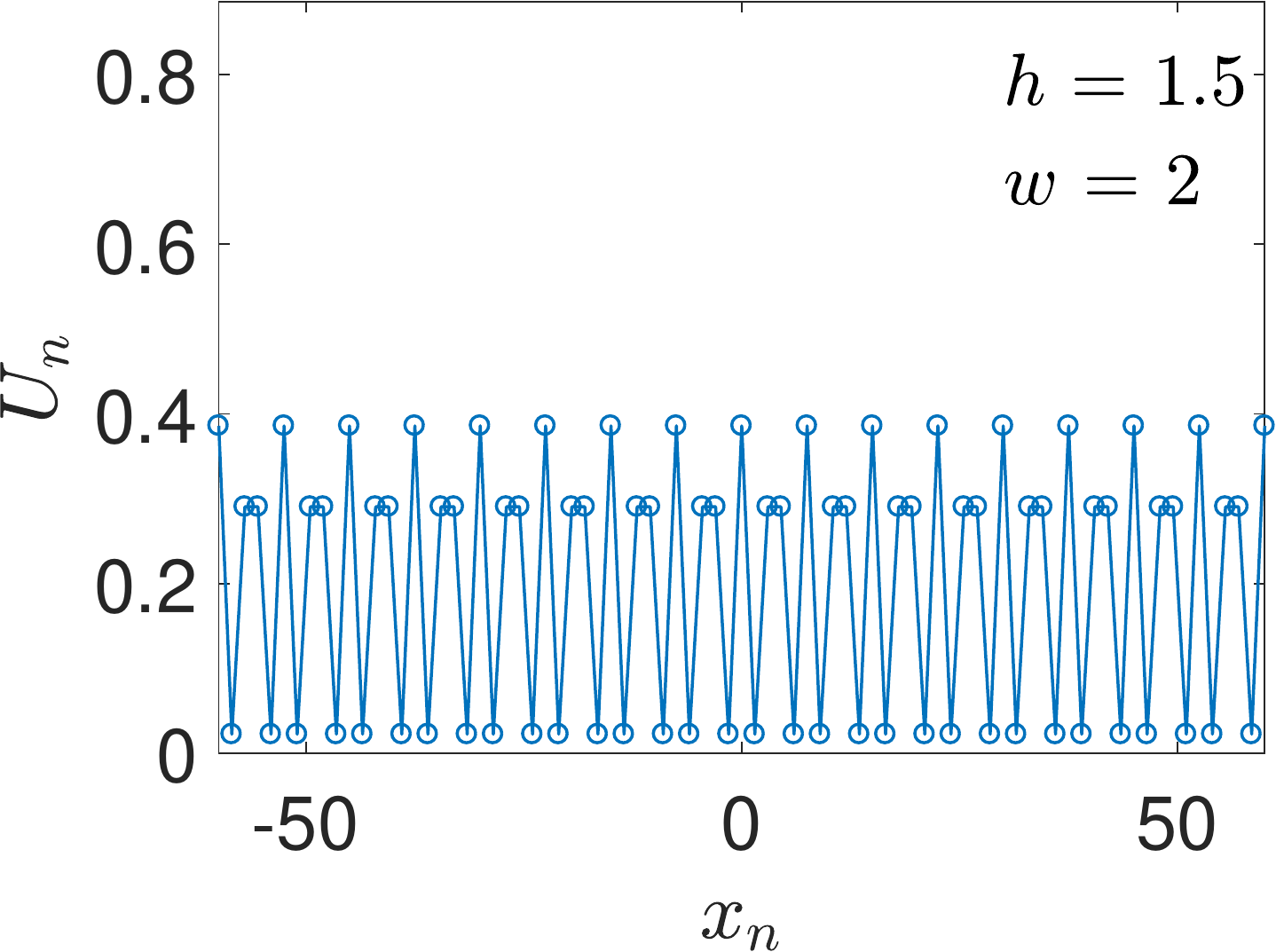}	\\ [4ex] 
		(d)&(e)&(f)\\
		\includegraphics[scale=\figscaleee]{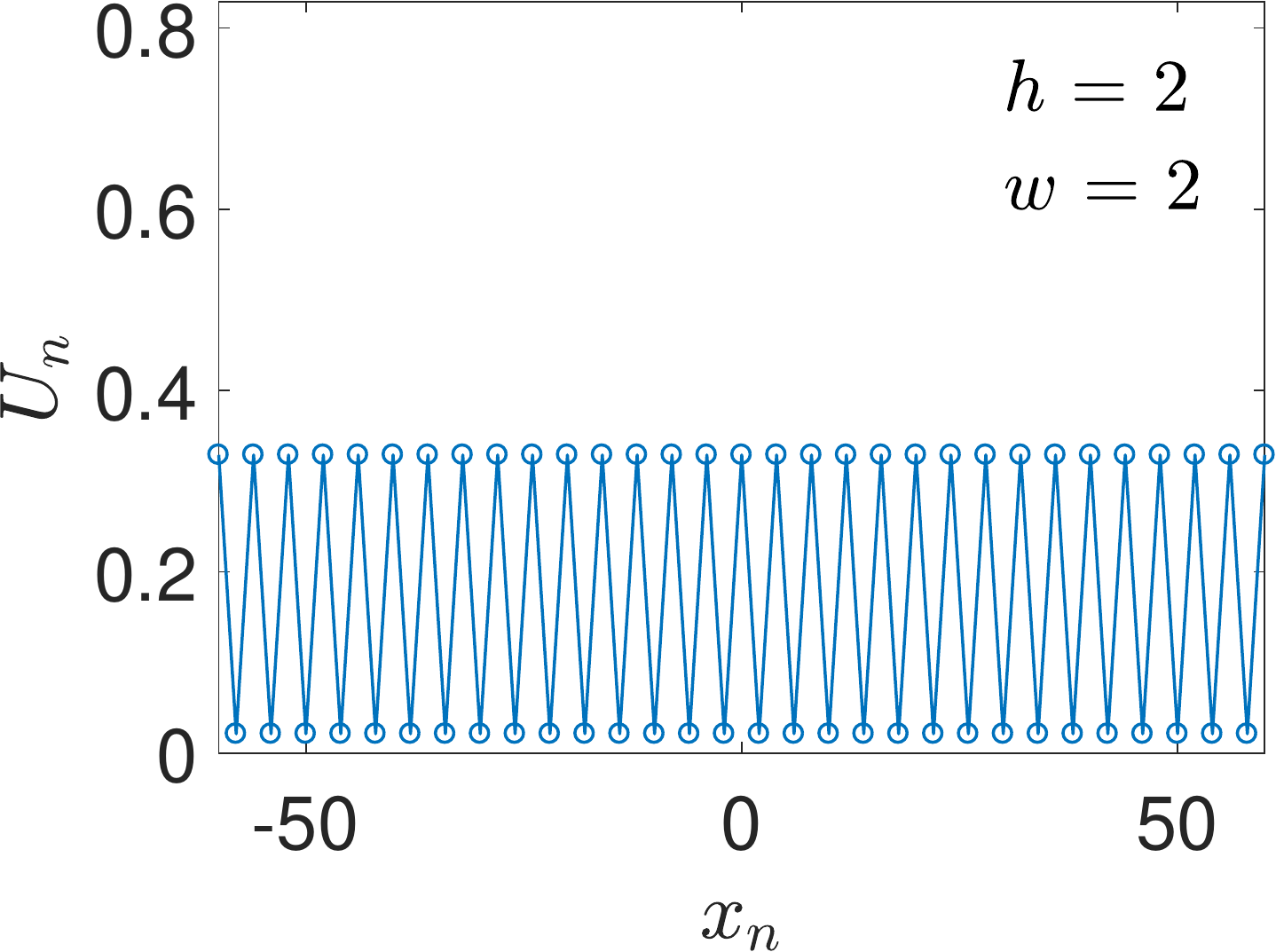}&	
		\includegraphics[scale=\figscaleee]{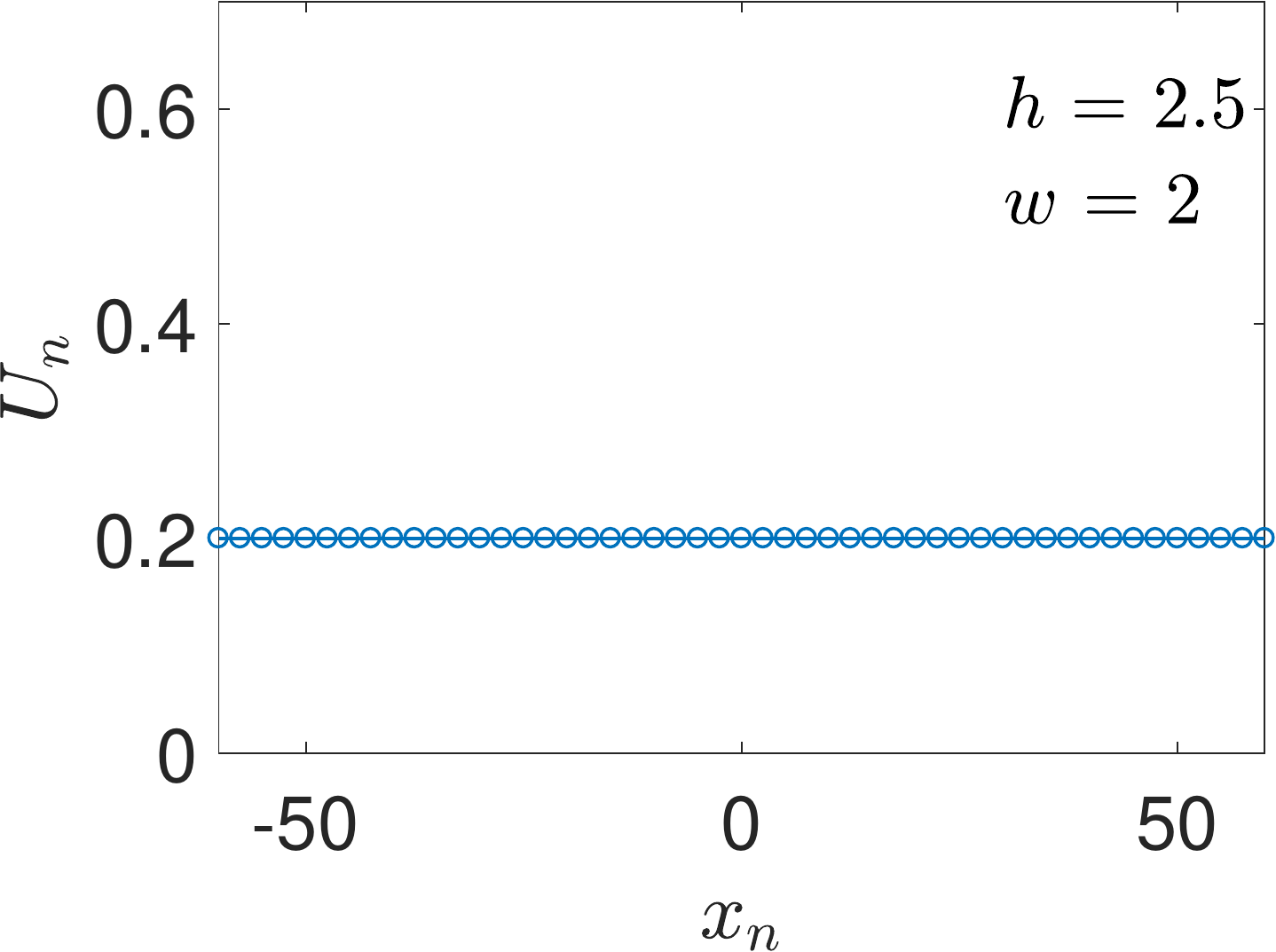}&	
		\includegraphics[scale=\figscaleee]{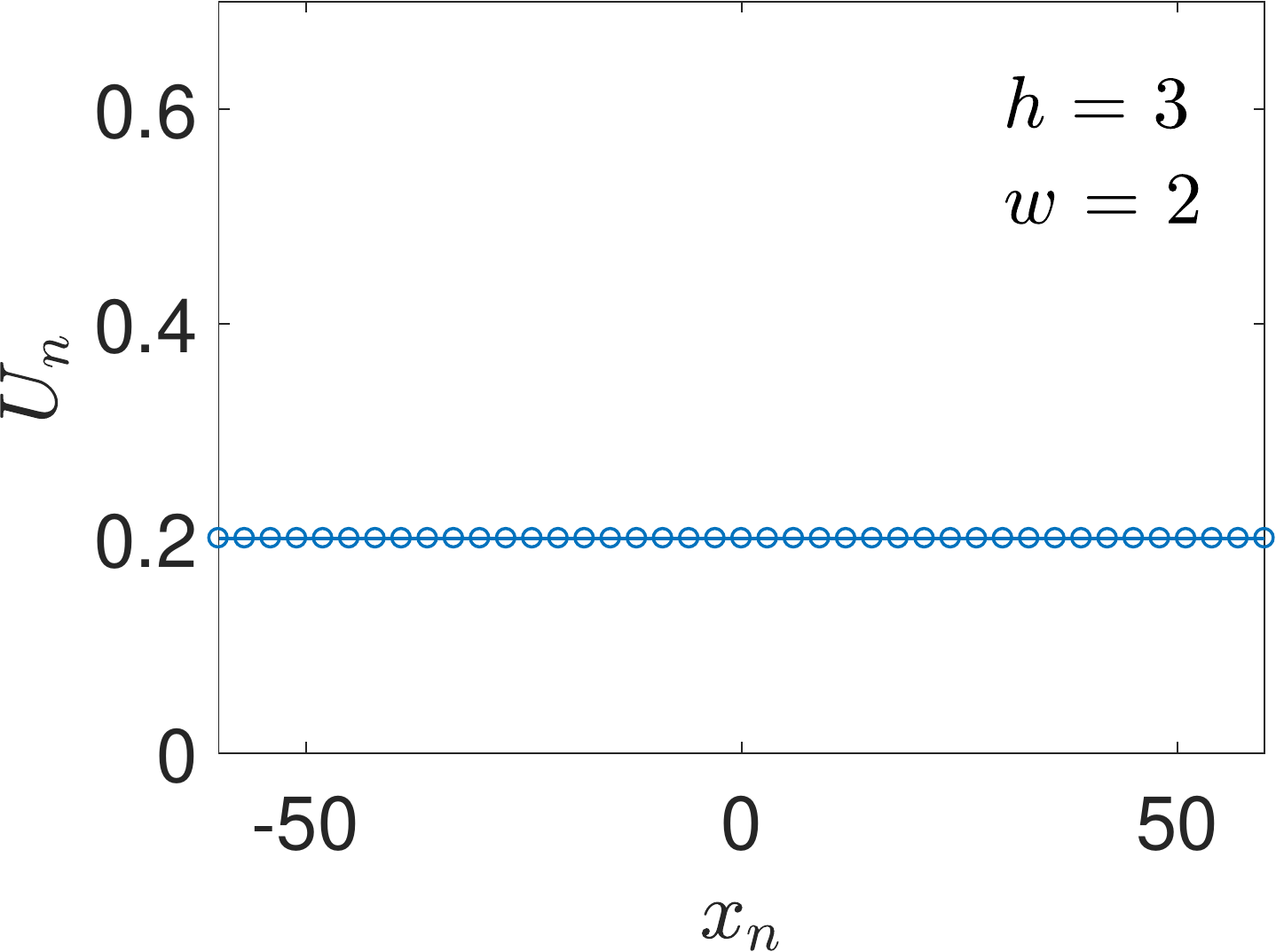}
	\end{tabular}
	\caption{Final equilibrium states of the invasion process captured in the contour-plots of Fig. \ref{fig:A2} (which is emerging from the box-profiled initial condition with a width $w=2$ and amplitude $A=0.28$), for increasing values of $h$.  Other parameters: $L=60, \gamma_{1}=0.125, \gamma_{2}= 0.5, \gamma_{3}= 0.005, \alpha=0.02, \beta=0.1$.}
	\label{fig:A1}
\end{figure}
%

An example for such an invasion process, when $\alpha= 0.02$, is captured in Figure \ref{fig:A2}, showing contour plots of the evolution of $U_n(t)$, emerging from a localized initial condition  (\ref{eq:IC1}) with amplitude $A=0.28$ and width $w=2$. 
The invasion process stops when the system attains the equilibrium solutions with profiles shown in Figure \ref{fig:A1}, for various cases of $h$,  ranging from $h=0.5$ to $h=3$. When $h=0.5$ and $h=1.5$, none of the uniform steady states is stable, since both of these values are less than $h_c=2.236$. For $h=0.5$, the system is closer to the continuous limit regime: the initial localized concentration splits,  developing invading fronts on the top of the trivial steady-state $U_s^0=0$, towards a pattern covering a triangular region (shown in the left panel of Fig. \ref{fig:A2})). The pattern is progressively formed by a  sinusoidal profiled state depicted in the left panel (a) of Fig. \ref{fig:A1} at its final stage. For $h=1$, the system rests in a more complicated equilibrium shown in the middle panel (b); a localized dip is formed at the center, surrounded by symmetric ``tent''-like spiking periodic structures. For $h=1.5$, and $h=2$, the invasion dynamics yet converge to distinct profiles: When $h=1.5$, the $5$-periodic solution shown in panel (c) forms a pattern  consisting of a repeated pair of modes of different amplitudes (shown in the middle panel of Fig. \ref{fig:A2}), attained by a single and a pair of nodes within, respectively. When $h=2$,  the  $2$-periodic state shown in panel (d) is achieved forming a pattern of repeated spiking modes (similar to the one portrayed in Fig. \ref{fig:A3} (c)). Crossing the critical value $h>h_c$, the spatially uniform state $U^+_s=0.2$ becomes asymptotically stable, in full agreement with the analytical predictions; it is the attained equilibrium shown in panels (e) and (f), corresponding to the cases $h=2.5$ and $h=3$, respectively, after the development of the uniform triangular pattern shown in the right panel of Fig. \ref{fig:A2}.
%

\begin{figure}[th!]
	\centering 
	\begin{tabular}{ccc}
		(a)&(b)&(c)\\	
		\includegraphics[scale=\figscaleee]{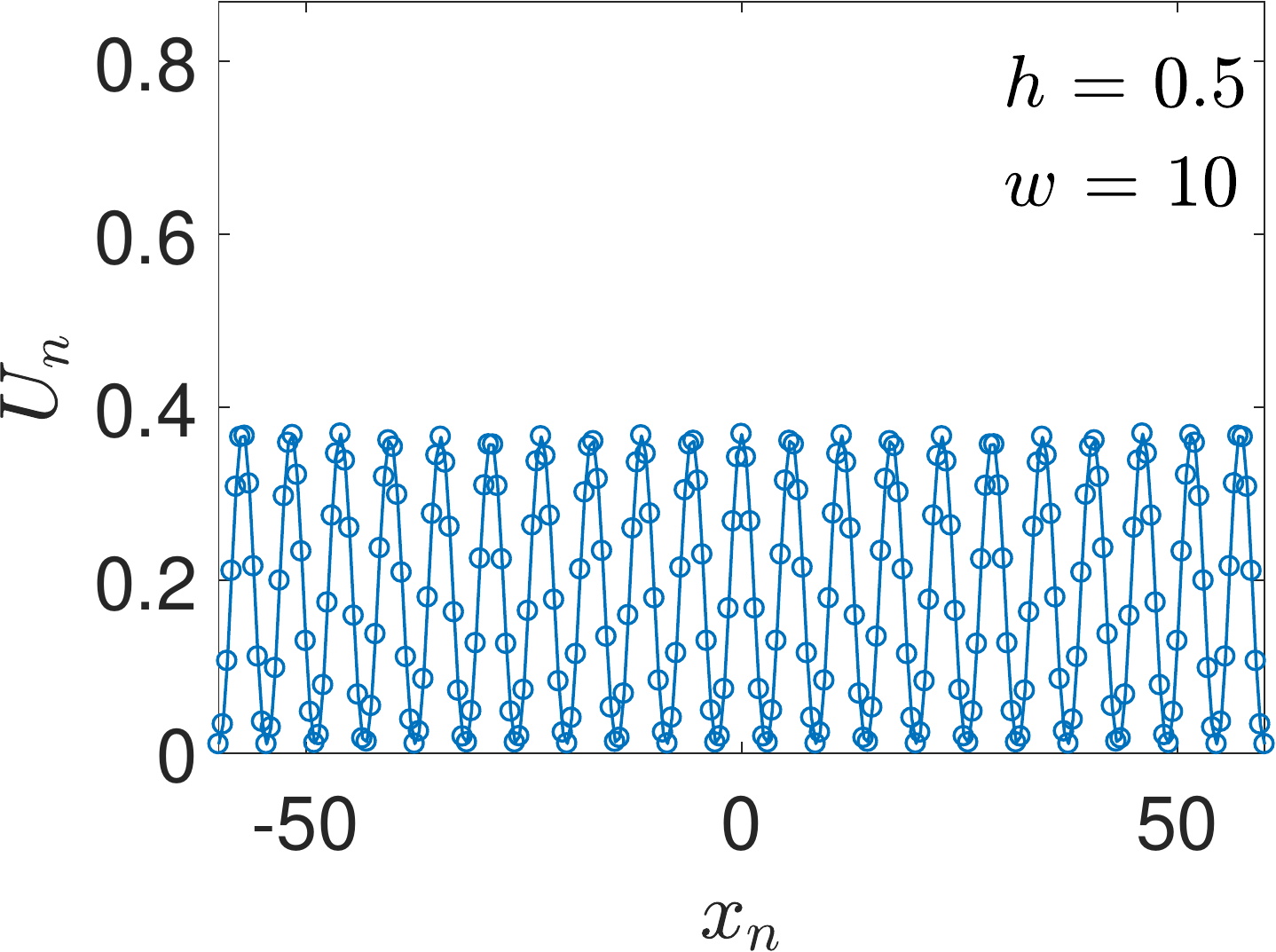}&
		\includegraphics[scale=\figscaleee]{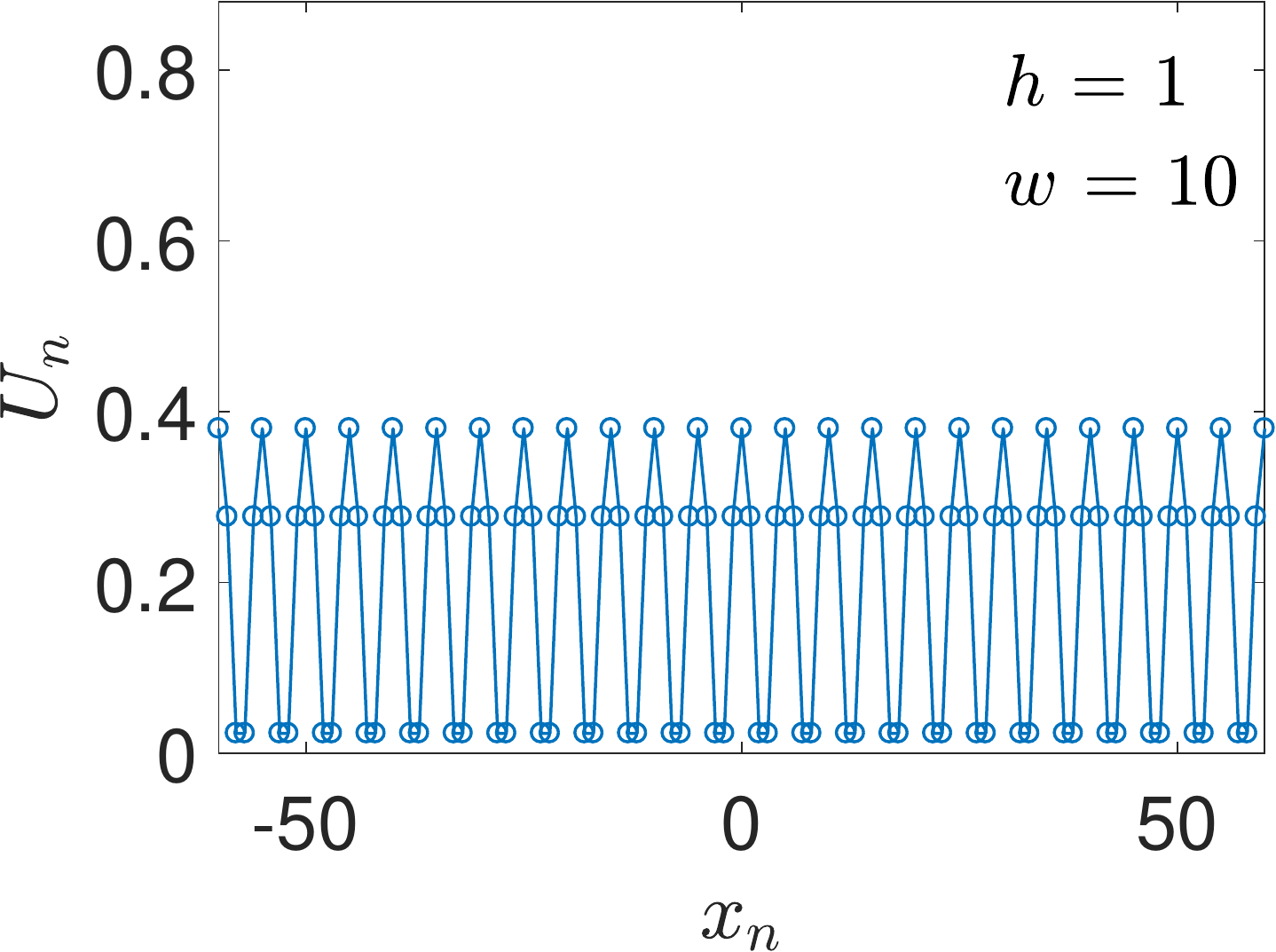}&
		\includegraphics[scale=\figscaleee]{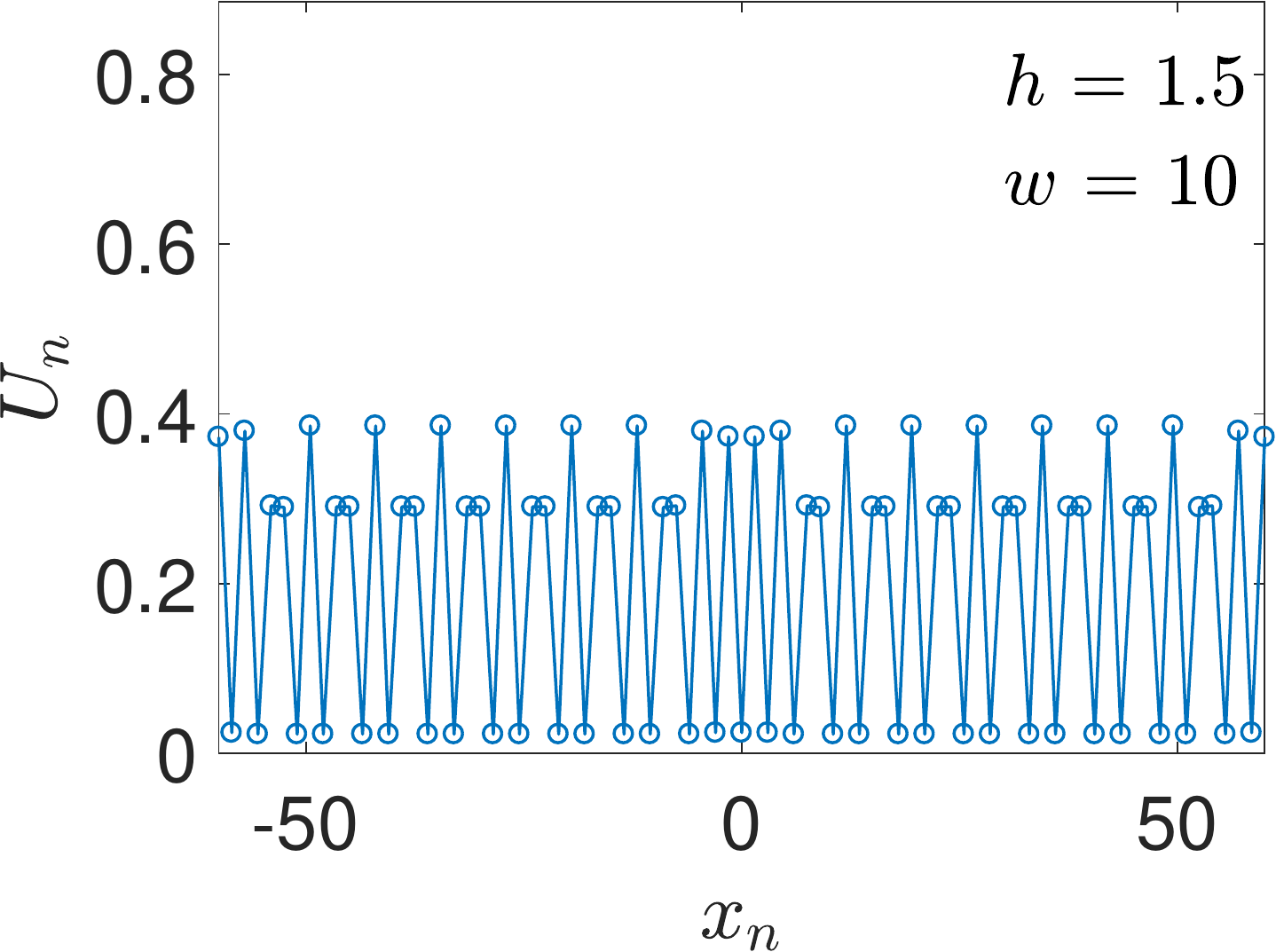}
	\end{tabular}
	\caption{Equilibrium states for the dynamics of the box-profiled initial condition of width $w=10$ and amplitude $A=0.28$, for increasing values of $h$.
		Other parameters: $L=60, \gamma_{1}=0.125, \gamma_{2}= 0.5, \gamma_{3}= 0.005, \alpha=0.02, \beta=0.1$.}
	\label{fig:C}
\end{figure}

Another interesting effect is the impact of the width of  the localized initial data in the resulting dynamics. Figure \ref{fig:C} shows representative final states of the invasion process varying $h$, when the width of the initial condition \eqref{eq:IC1} of the same amplitude $A=0.28$ is increased to $w=10$. For $h=0.5$, the final state shown in panel (a) is a phase-shift of the same sinusoidal-alike equilibrium achieved for $w=2$. Nevertheless, different states are captured for $h<h_c$ in the intermediate discrete regime, as shown in panels (b) and (c), corresponding to $h=1$ and $h=1.5$ respectively. When $h=1$, the equilibrium is made of the ``tent''-alike spikes observed in Fig. \ref{fig:A1} (b), but without the centered dip. When $h=1.5$ the repeated pair of modes observed in Fig. \ref{fig:A1} (c) are separated by the group of the four spiking modes around the centered node $U_0=0$. 
\begin{figure}[h!]
	\centering 
	\begin{tabular}{ccc}
		(a)&(b)&(c)\\	
		\includegraphics[scale=\figscaleee]{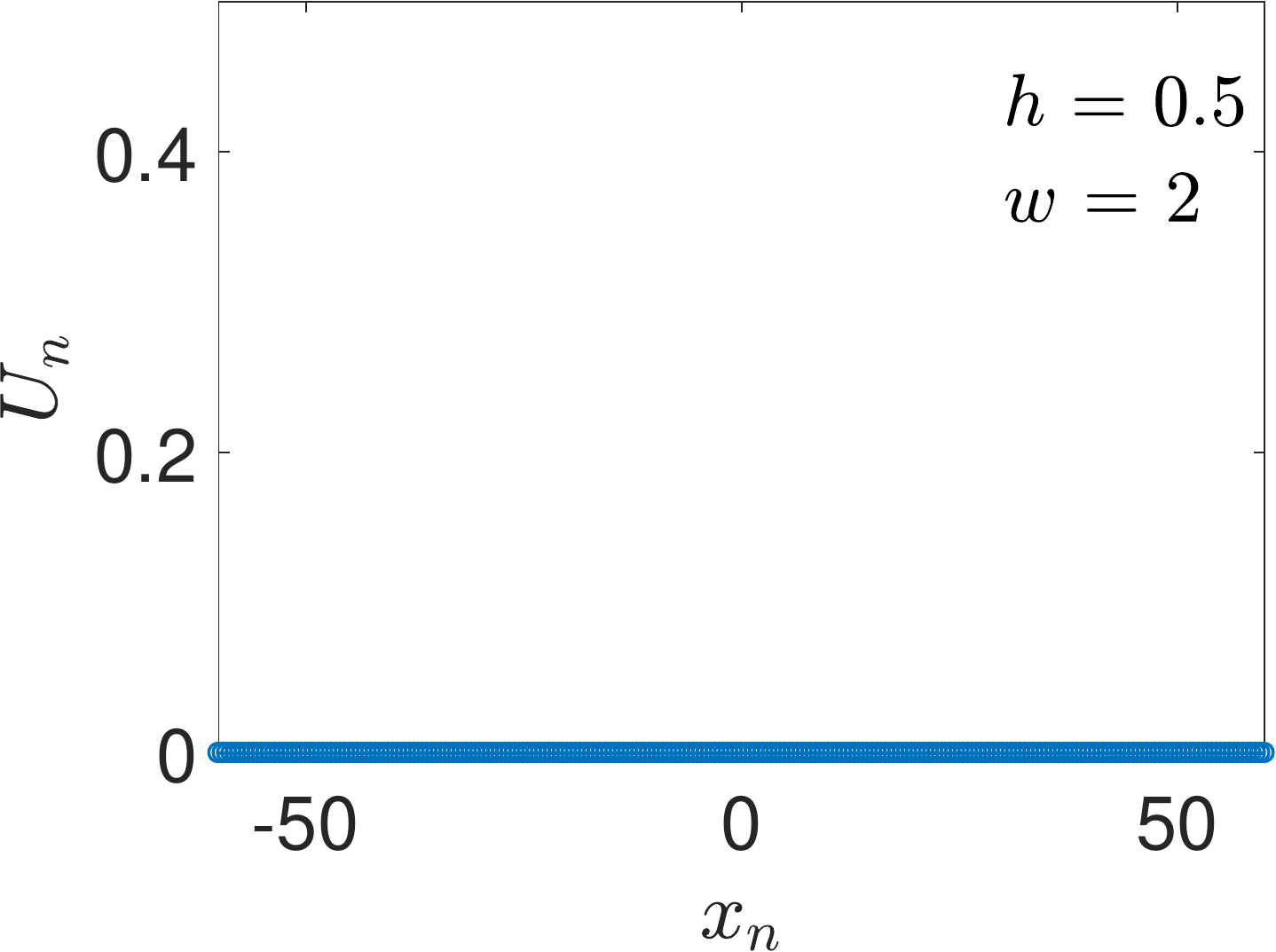}&
		\includegraphics[scale=\figscaleee]{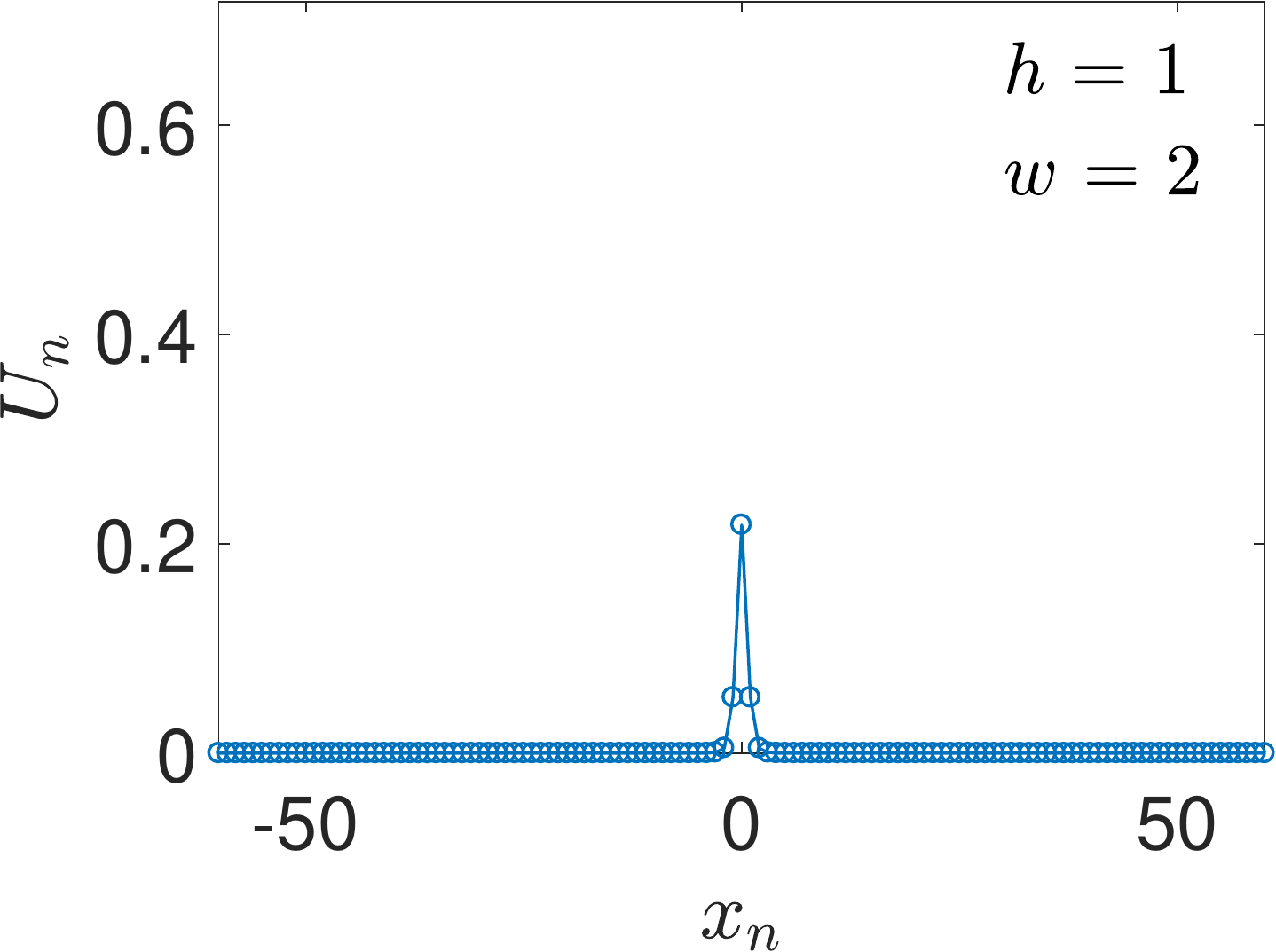}&
		\includegraphics[scale=\figscaleee]{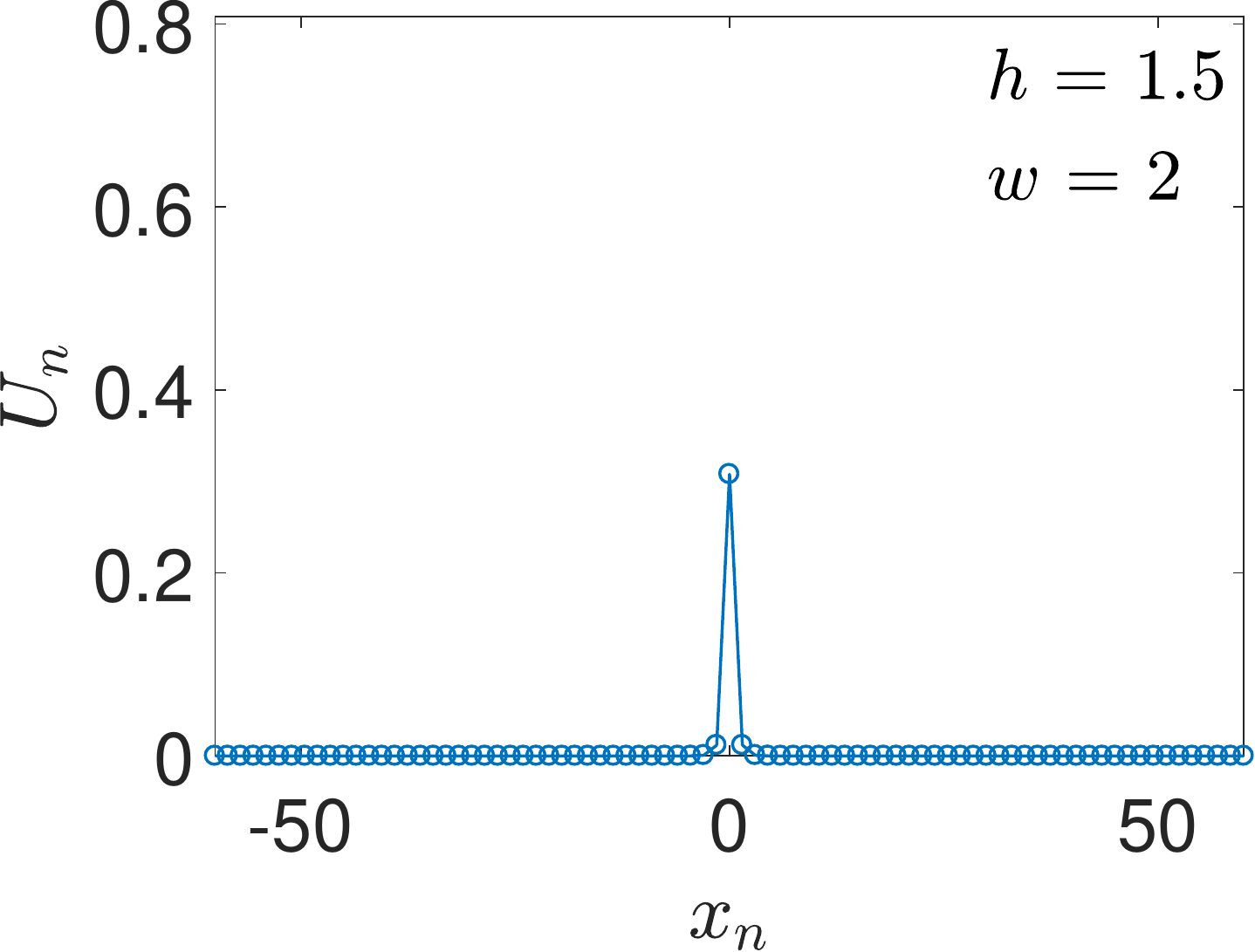}
		\\ [4ex] 
		(d)&(e)&(f)\\		
		\includegraphics[scale=\figscaleee]{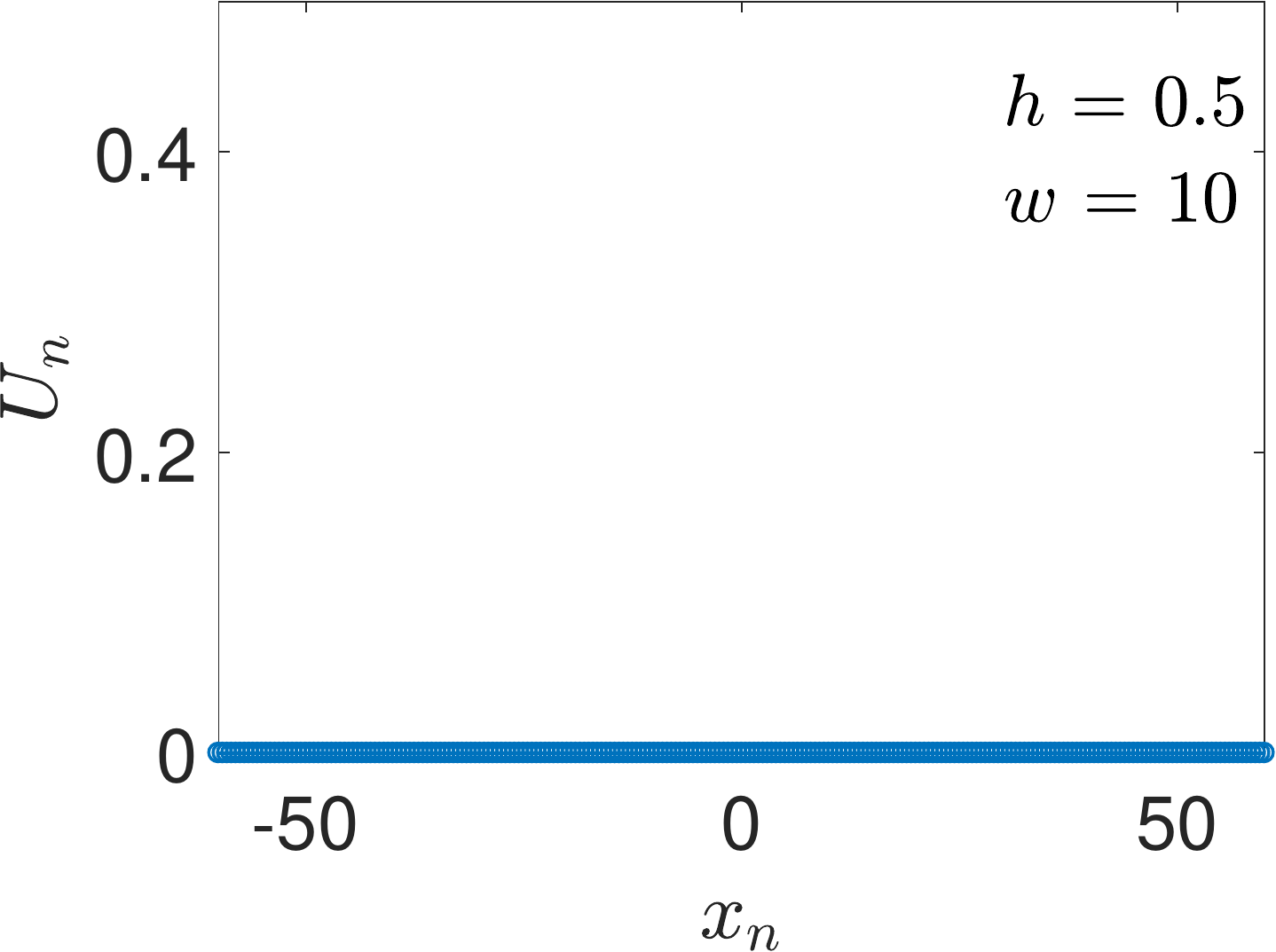}&
		\includegraphics[scale=\figscaleee]{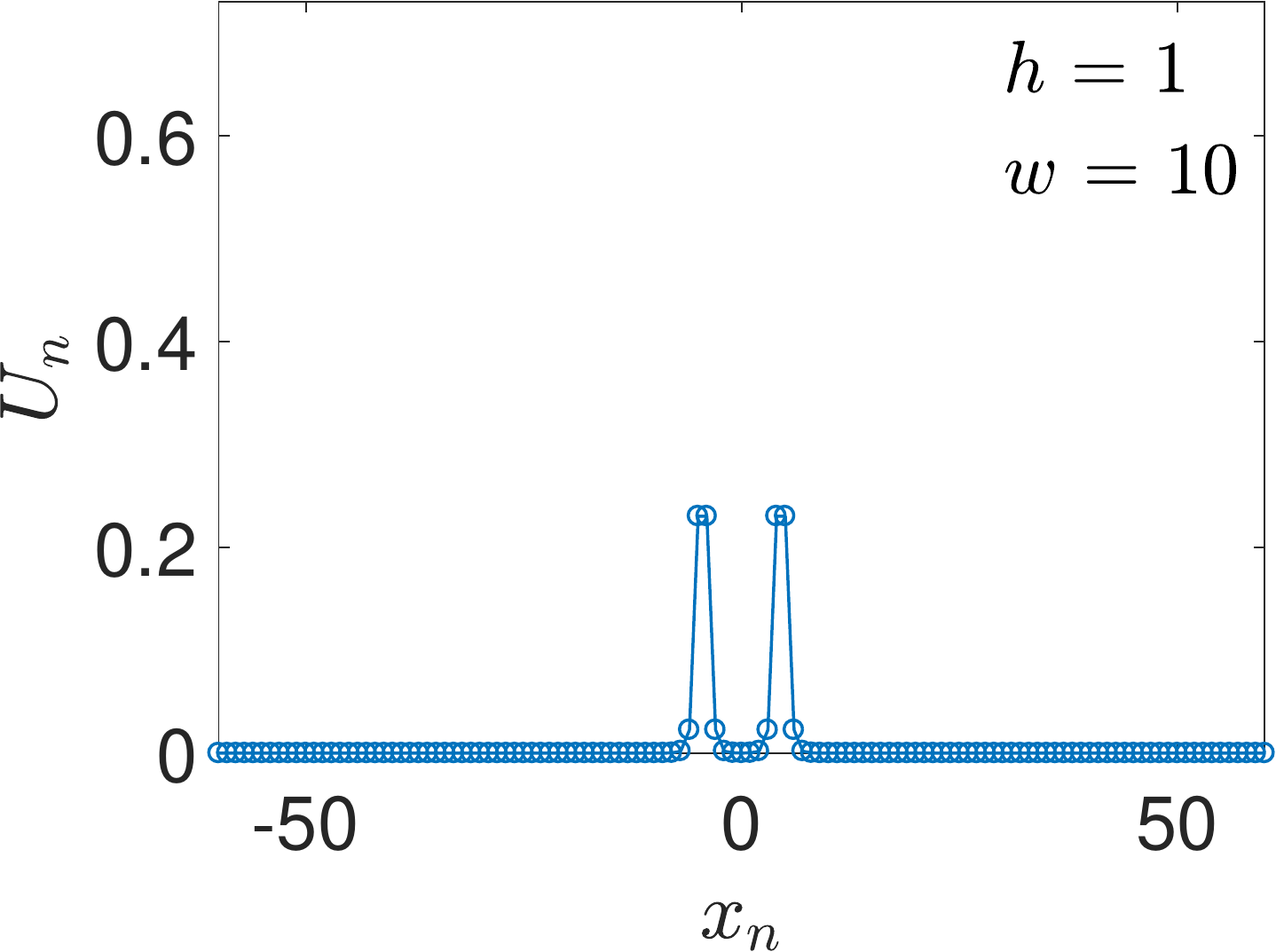}&
		\includegraphics[scale=\figscaleee]{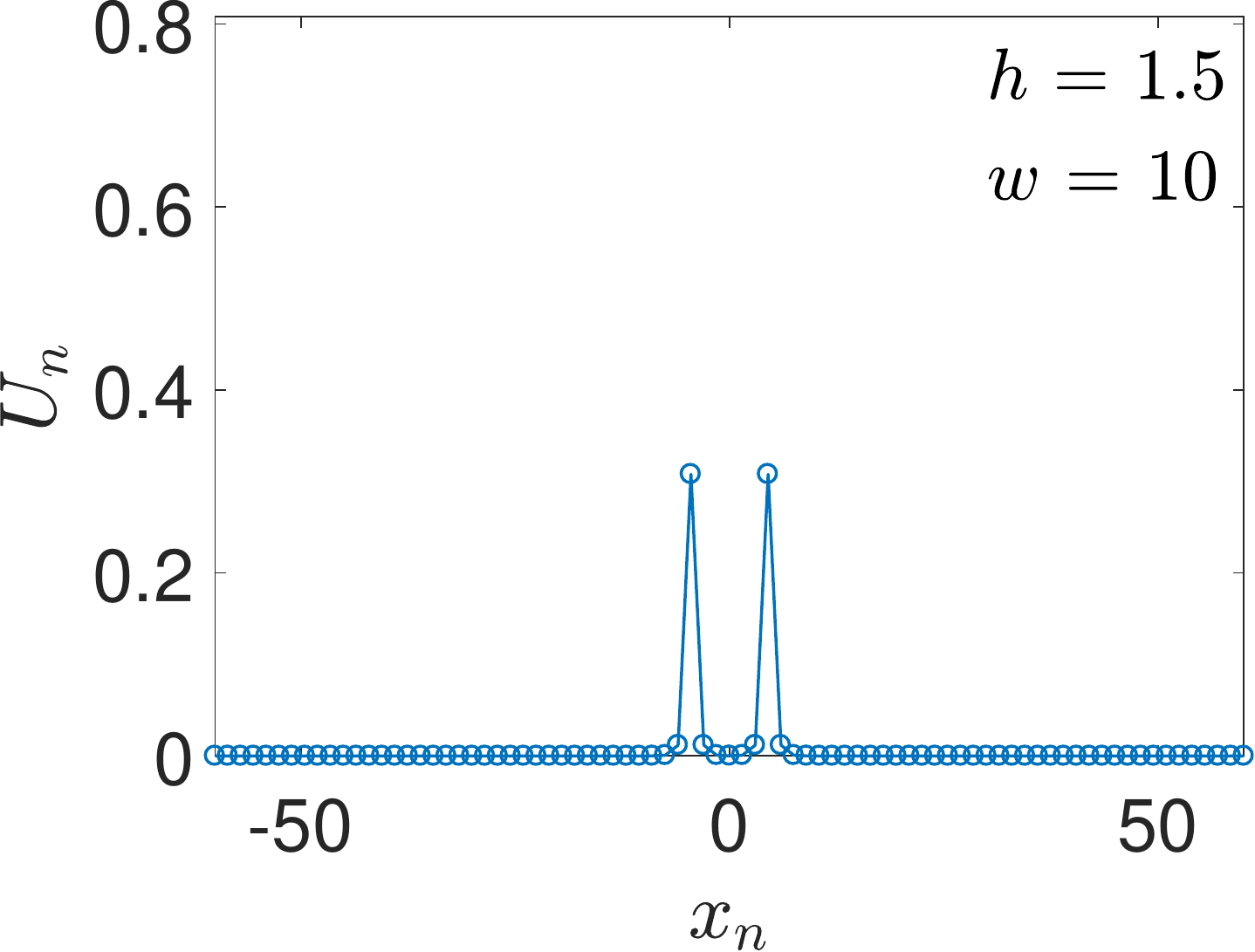}
	\end{tabular}
	\caption{Equilibrium states for the dynamics of box-profiled initial conditions with amplitude $A=0.28$, for increasing values of $h$. Top row: with a width $w=2$. 
		Bottom row: with a width $w=10$. Other parameters for both cases: $L=60, \gamma_{1}=0.125, \gamma_{2}= 0.5, \gamma_{3}= 0.005, \alpha=-0.02, \beta=0.1$.}
	\label{fig:A4}
\end{figure}

We conclude the presentation of the  numerical results with two studies in the so-called, low-productivity regime $\alpha<0$, when in addition $\beta>0$. 
In this regime, the simulations are performed for values of $\alpha$ for which the uncoupled system possesses only the trivial state, 
that is the case for $\alpha <-\beta^2/4:=\alpha_F<0$.   However, when coupling is present the dynamics is much more intrigue: while small values of $h$ still may drive the system towards the trivial  steady-state, larger values of $h$ in the discrete regime, may allow for the formation of spatially localized or periodic equilibria.  
\begin{figure}[tbh!]
	\centering 
	\includegraphics[scale=0.36]{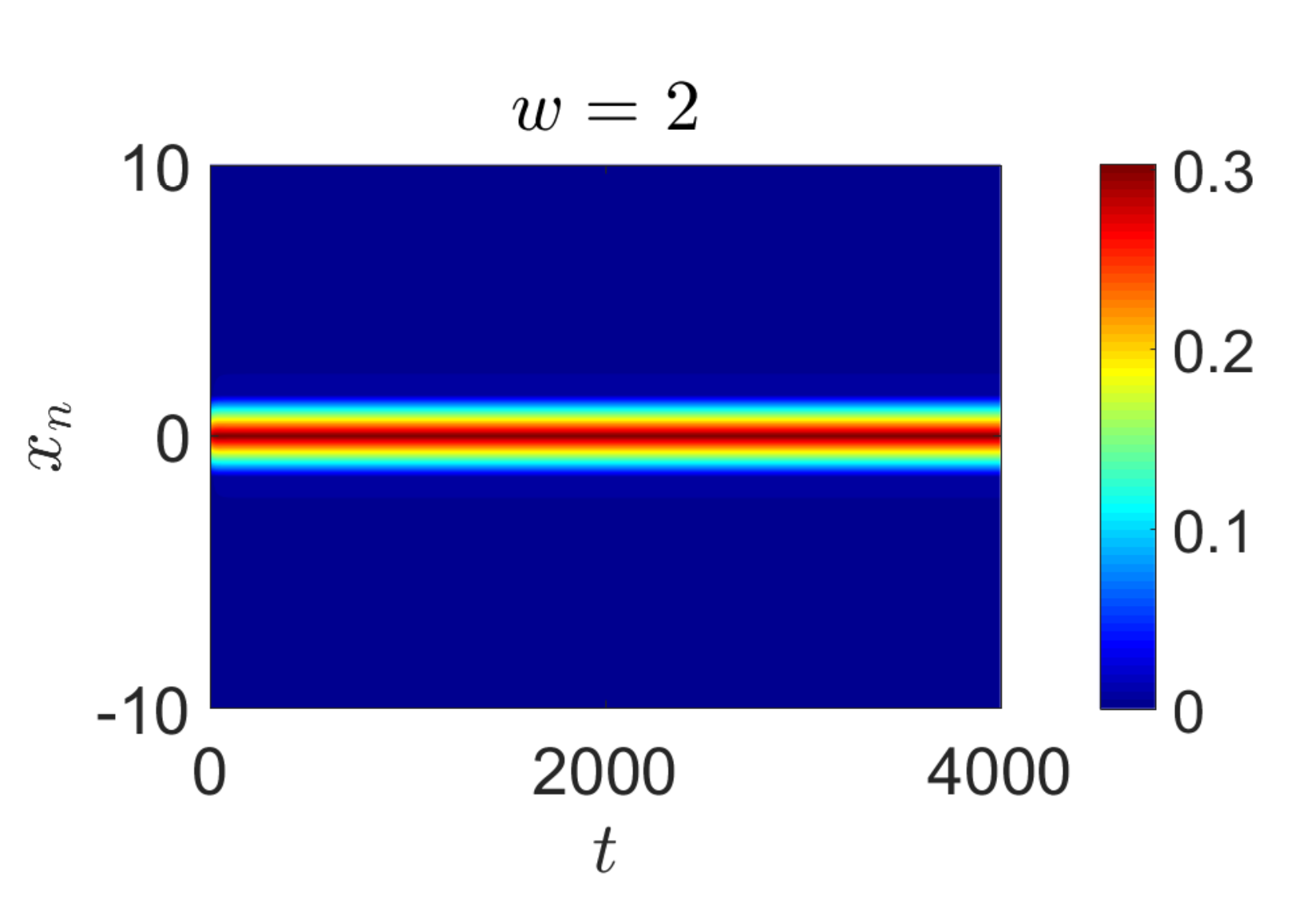}
	\includegraphics[scale=0.36]{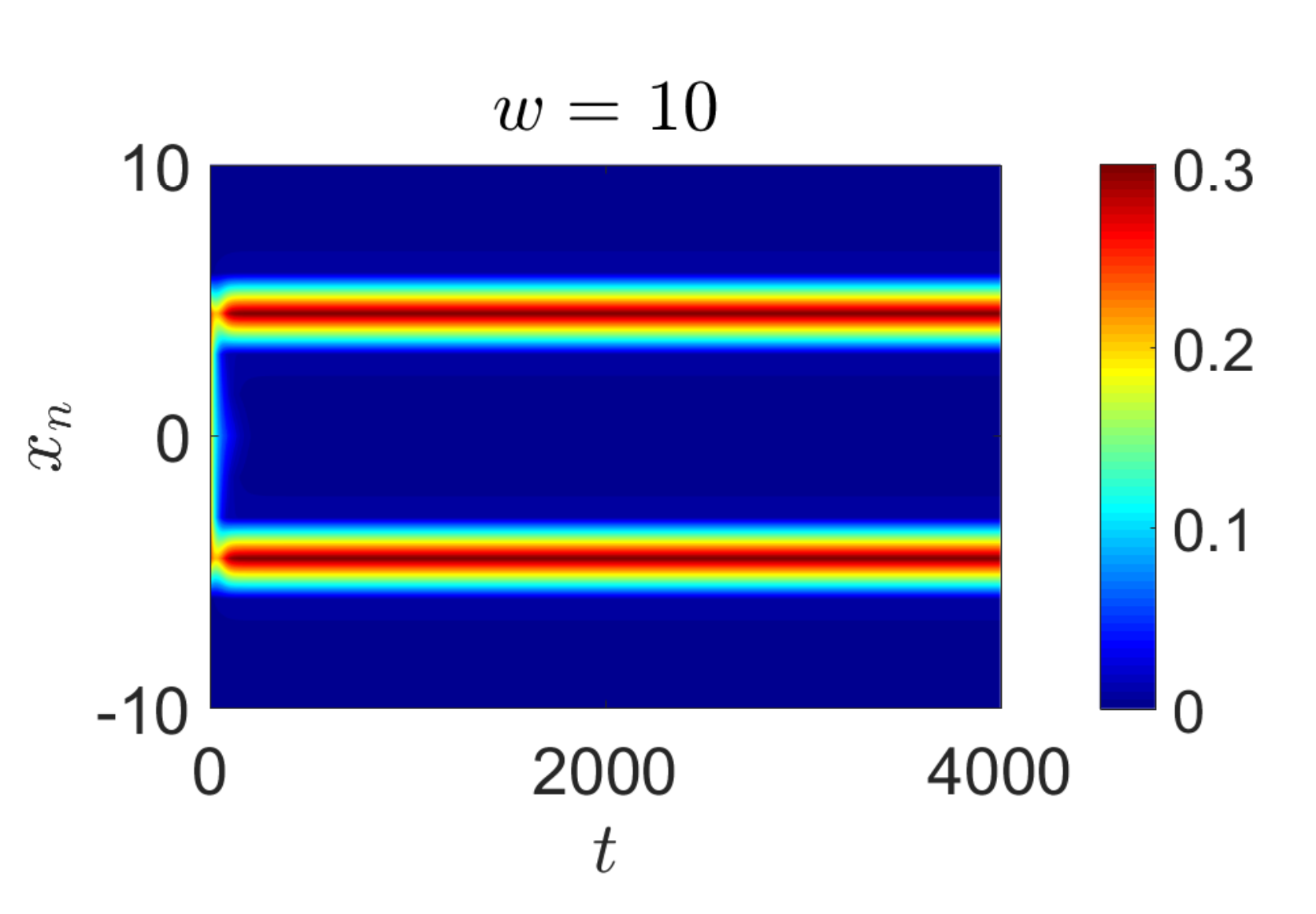}
	\caption{Contour plots of the time-evolution of the box-profiled initial condition of amplitude $A=0.28$. Widths: $w=2$ (left panel) and $w=10$ (right panel). Other parameters: $h=1.5, L=60, \gamma_{1}=0.125, \gamma_{2}= 0.5, \gamma_{3}= 0.005, \alpha=-0.02, \beta=0.1$.}
	\label{fig:a<0}
\end{figure}
 Figure \ref{fig:A4} shows examples of the steady-states, when $\alpha=-0.02$, where the dynamics of box initial conditions of amplitude $A=0.28$ converge for two cases of its width, varying $h$. The upper row of Fig.\ref{fig:A4} shows the steady states when $w=2$. For $h=0.5$ the system converges to the trivial steady state $U^0_s=0$ shown in panel (a).  For larger values of $h$, spatially localized steady-states emerge. These have the form of centered spikes when $h=1$ and $h=1.5$, depicted in panels (b) and (c), respectively; we observe that as $h$ is increasing, their amplitude is  also increasing.  The bottom row of  Fig.\ref{fig:A4} shows the steady states when $w=10$. For $h=0.5$ the system again converges to $U^0_s=0$, but for $h=1$ and $h=1.5$, pairs of twin-spikes are formed; again their amplitude increases with $h$.  The evolution of $U_n(t)$ in both cases of $w$ for $h=1.5$ is portrayed in the contour-plots of Fig. \ref{fig:a<0}. In the case $w=2$, the box localized solution changes its shape to the single-spike (left panel), while in the case $w=10$, it rapidly splits to the twin-spike mode (right panel).
	
\begin{figure}[th!]
	\centering 
	\begin{tabular}{ccc}
		(a)&(b)&(c)\\
		\includegraphics[scale=0.36]{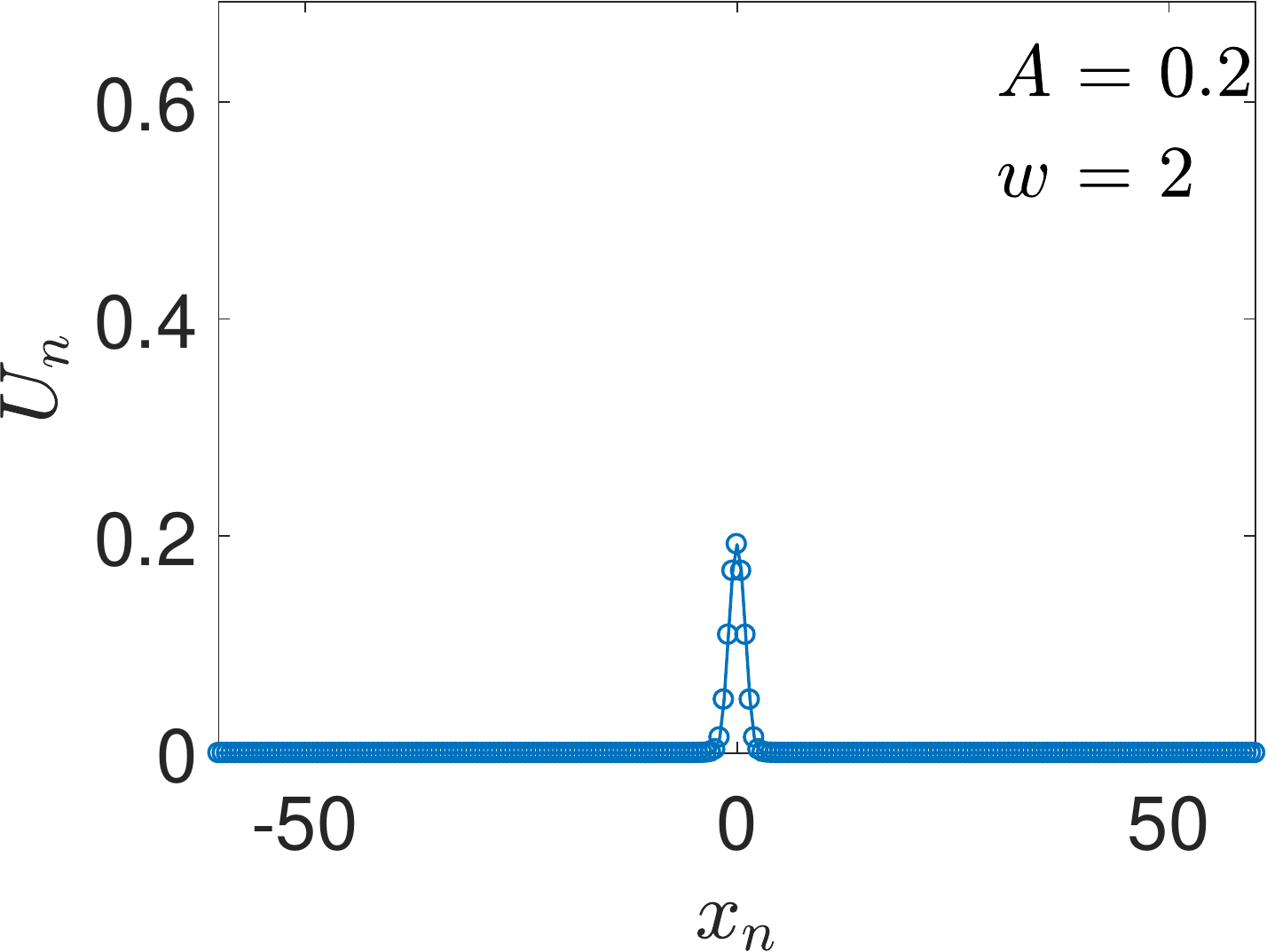}&
		\includegraphics[scale=0.36]{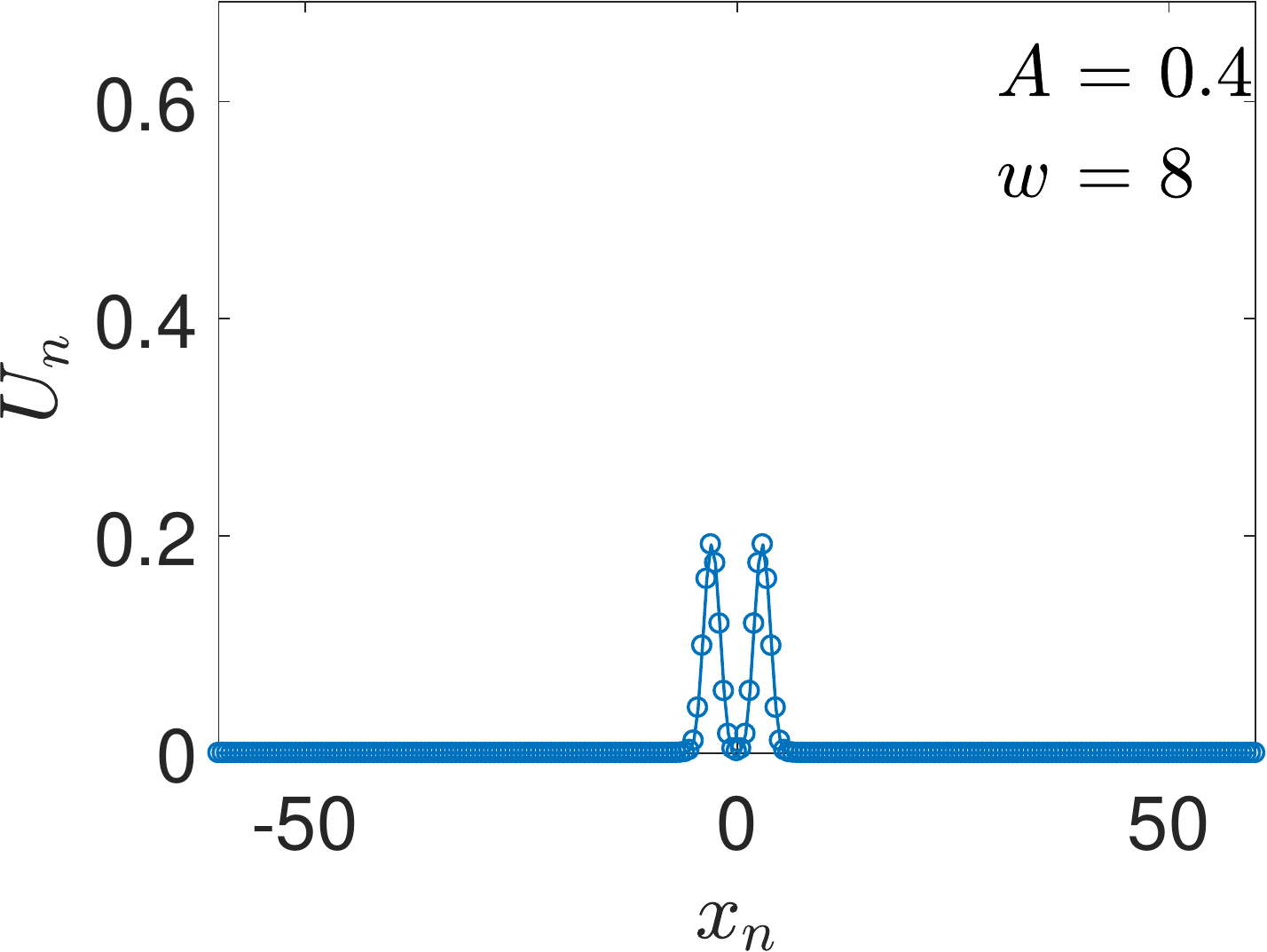}&
		\includegraphics[scale=0.36]{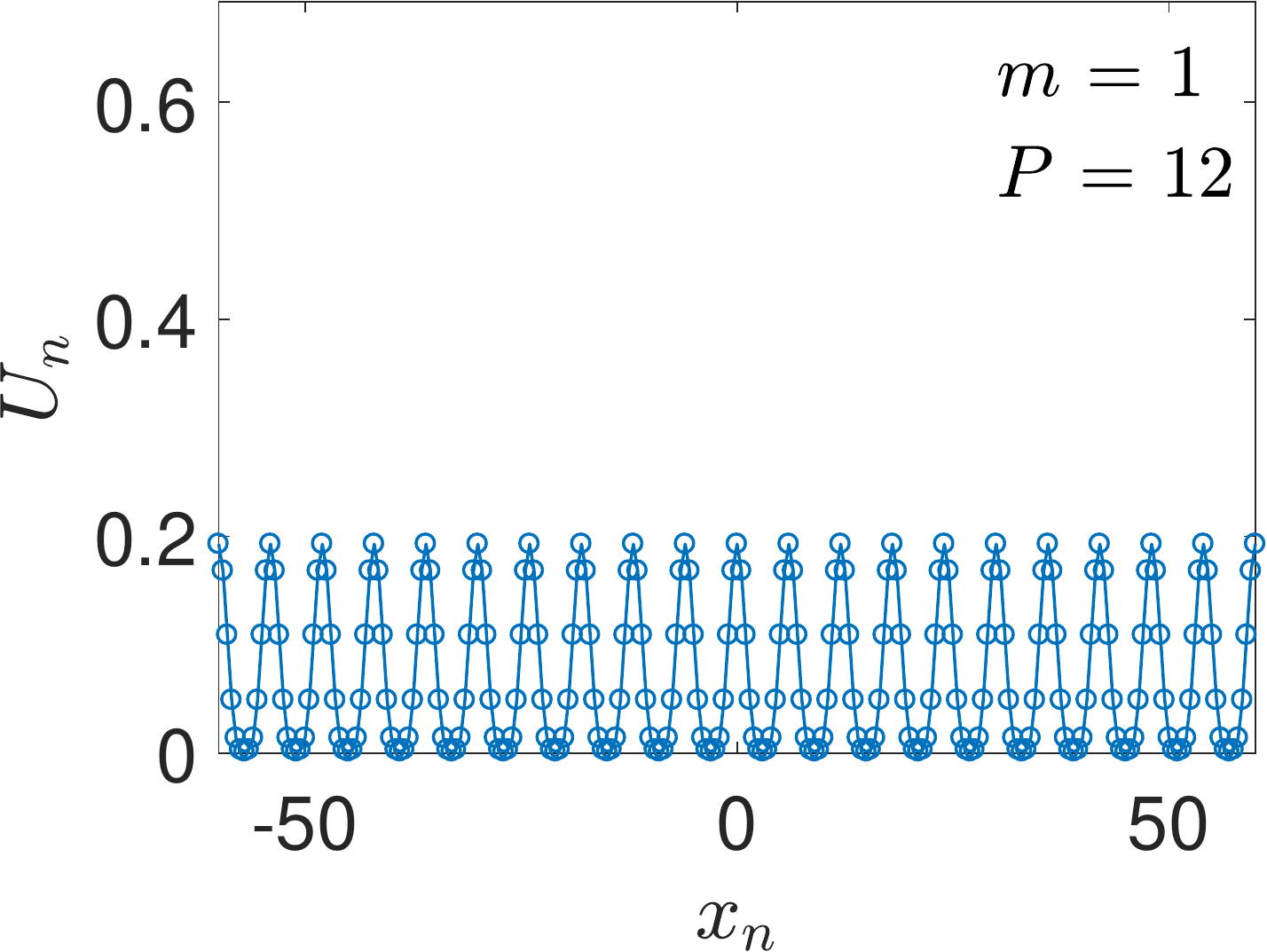}
		\\[4ex]
		(d)&(e)&(f)\\
		\includegraphics[scale=0.36]{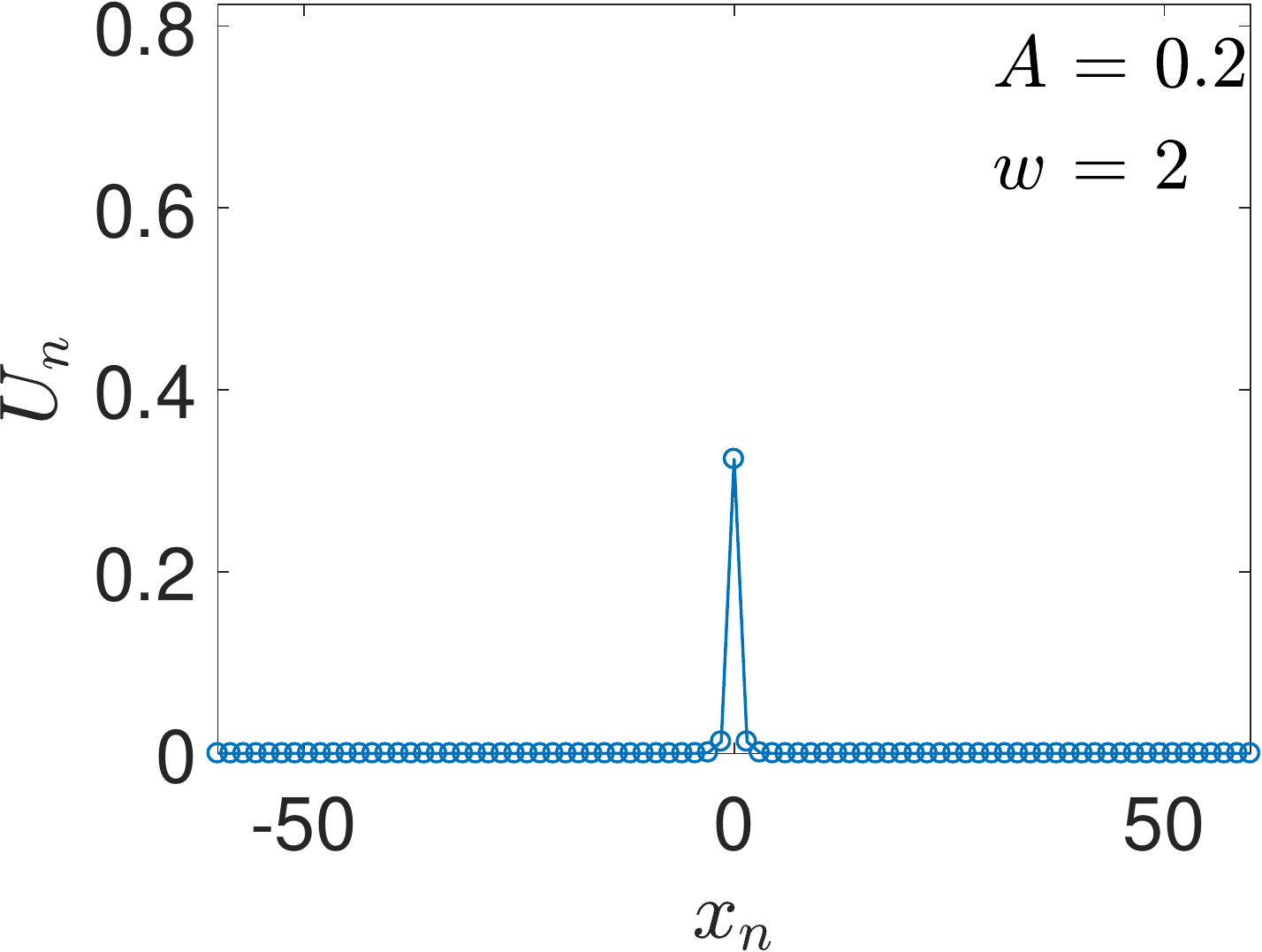}&
		\includegraphics[scale=0.36]{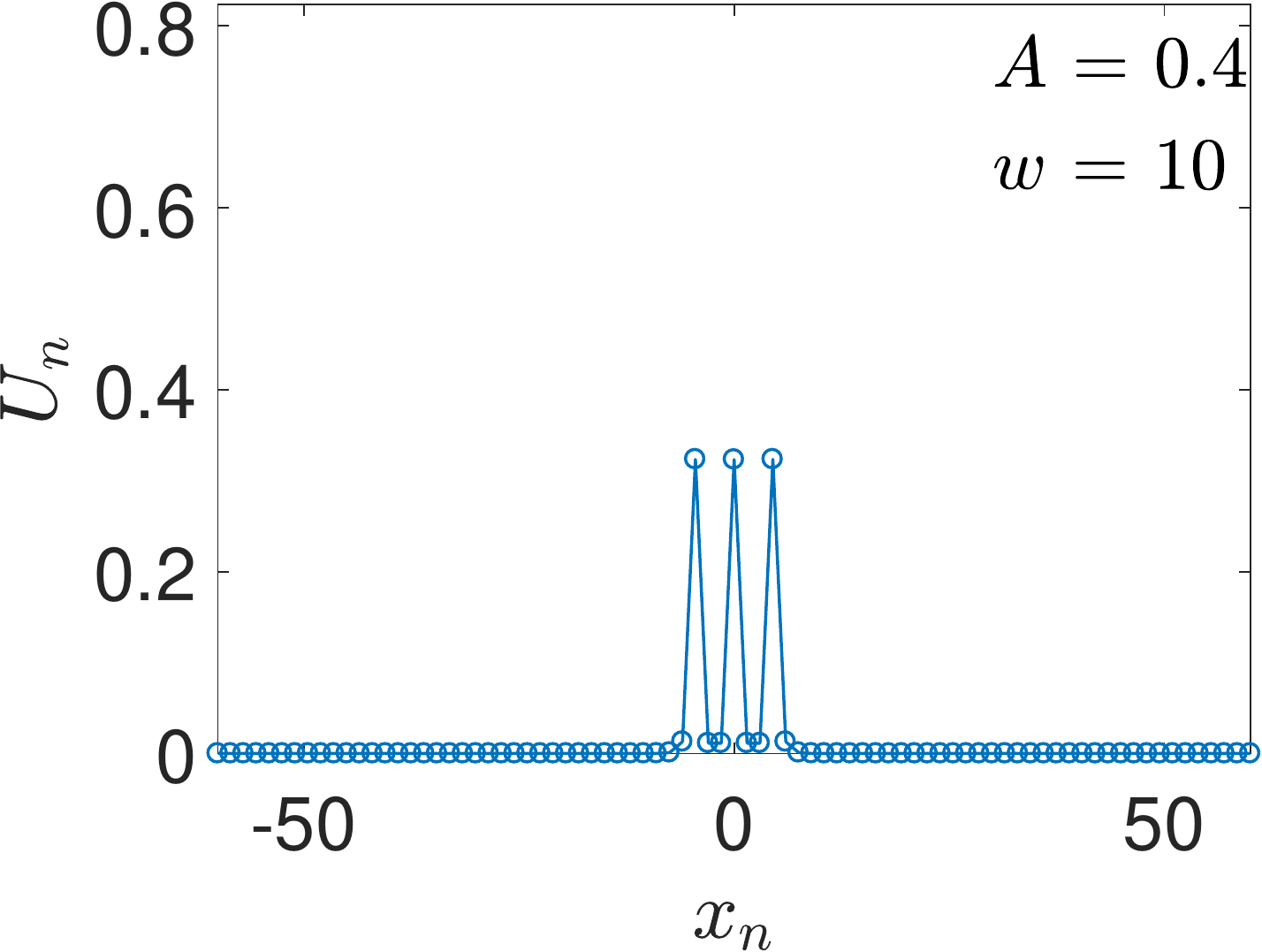}&
		\includegraphics[scale=0.36]{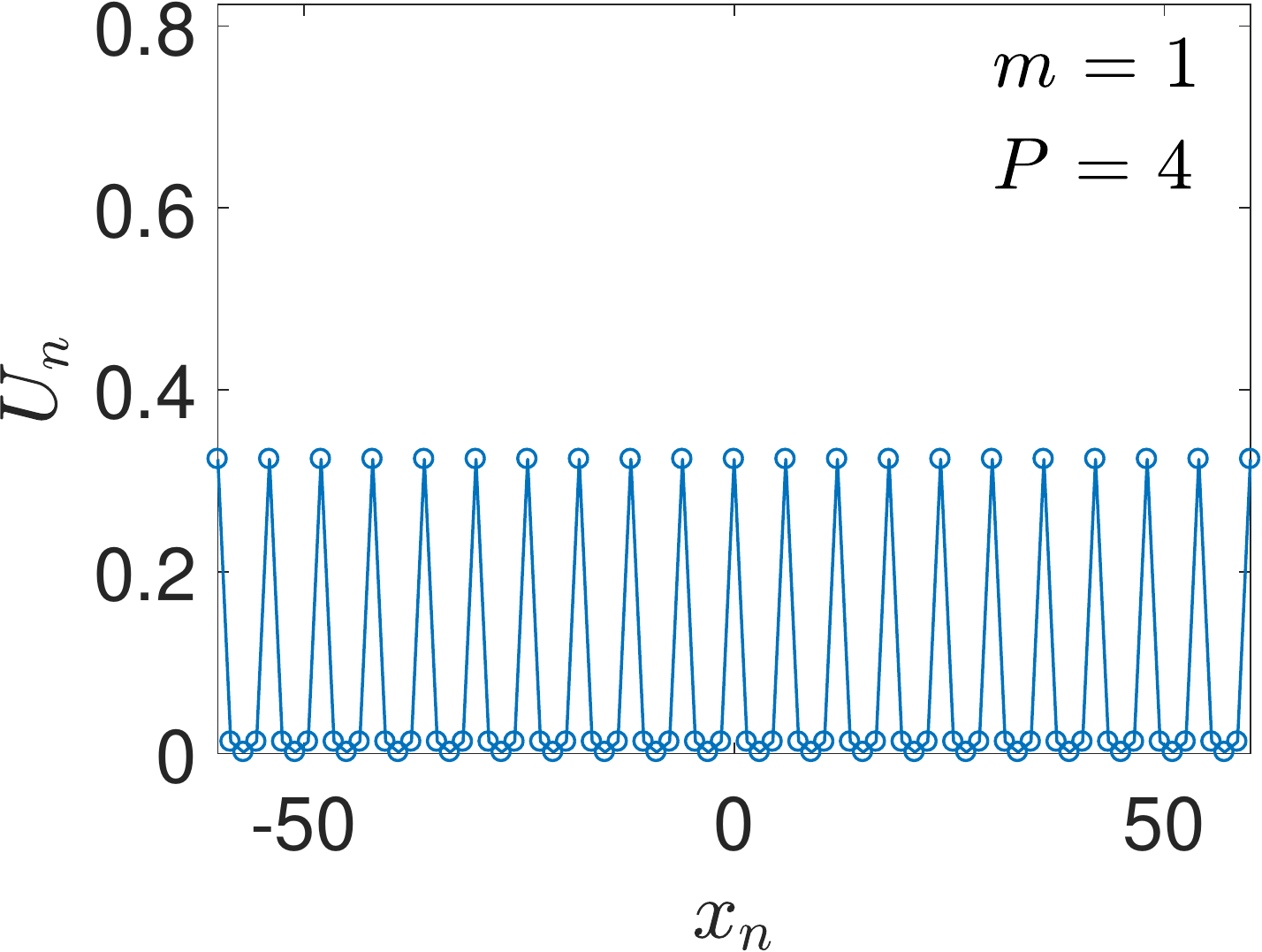}	
		\end{tabular}	
\caption{Top row, $h=0.5$: Equilibrium states for the dynamics of (a) box-profiled initial condition \eqref{eq:IC2} with width $w=2$ and amplitude $A=0.2$ 
(b) box-profiled initial condition with width $w=8$ and amplitude $A=0.4$,  (c) spatially extended initial condition with $A=B=0.2$ and $k=\pi/3$. 
Bottom row, $h=1.5$: Equilibrium states for the dynamics of (a) box-profiled initial condition \eqref{eq:IC2} with width $w=2$ and amplitude $A=0.2$ , 
(b) box-profiled initial condition with width $w=10$ and amplitude $A=0.4$, 
(c) spatially extended initial condition with $A=B=0.2$ and $k=\pi/3$.  
Other parameters: $L=60, \gamma_{1}=0.125, \gamma_{2}= 0.5, \gamma_{3}= 0.005, \alpha=-0.016, \beta=0.1$.}
	\label{fig:B}
\end{figure}
We remark that multi-spikes may also appear, as the width of the localized initial data is further increased. 
Also, spatially periodic solutions can be observed for plane wave initial data of certain frequencies.
Similar behavior is possible when the spatially continuous system is approached (smaller $h$) given that  $\alpha$ is decreased.
Examples are shown in Figure \ref{fig:B} corresponding to the case $\alpha=-0.016$. 
The top row shows the equilibrium states for $h=0.5$. 
The left panel (a) shows a centered localized equilibrium mode attained by the dynamics of a box-profiled initial condition \eqref{eq:IC2} of amplitude $A=0.2$ and width $w=2$. 
The middle panel (b) shows a double spike for  box-profiled initial condition with $A=0.4$ and $w=4$,
while the right panel (c) illustrates  long term dynamics of a spatially periodic initial condition \eqref{eq:IC1} for amplitudes $A=0.2$, $B=0.2$, and frequency $k=\pi/3$.
It is interesting to observe that the periodic steady state of panel (c) consists of multiple copies of the single localized state of panel (a). 
The same effect is verified in the case $h=1.5$. In particular, the equilibria for the  box-profiled initial condition (single spike) and for the spatially periodic initial condition
 are shown in panels (d) and (f), respectively.  
 In Figure \ref{fig:B} panel (e), a localized three-spike solution arises for box-profiled initial condition with $A=0.4$ and $w=10$. 
Similarly to what we observed in Fig. \ref{fig:A4}, for increased $h$, we notice an increase of the amplitude of the equilibrium states. 
Furthermore, in accordance to the linear stability analysis the non-uniform states disappear when $h>h_c$, and the dynamics converge to spatially uniform states.
%

The behavior described above can be roughly understood by the impact $h$ has on the coupling, 
and how this in turn affects the dynamics of the system. 
Since parameter $h$ is negatively associated to the coupling, for high values of $h$ (ultra-discrete system),
both the competition and facilitation mechanisms are negligible and each patch tends to grow independently, 
governed solely by the local dynamics induced by the non-linearity $f$.
On the contrary, approaching the spatially continuous system ($h\to 0$), 
coupling becomes stronger (non-local interactions intensify as the patches come closer to eachother).
In the latter case and for very low productivity, the patches die out as the local growth
is insufficient to support vegetation due to the weak increase in local density. 
This is also the case,  for larger $\alpha<0$ - values (higher productivity) in the absence of competition or strong supremacy of facilitation, which corresponds to large
$\gamma_3$-values. However, increasing competition compared to facilitation, that is $\gamma_3$ is small, can benefit persistence in the low productivity regime and shrink the extinction regime. 
This can be explained by the fact that some individuals compete effectively and win over the underdeveloped individuals 
by exploiting the resources of the surrounding area. 
In the same productivity regime, but for intermediate $h$-values the effect of both mechanisms is moderate,
with competition being weaker than facilitation.  
The balanced resource utilization and spatial development may lead to the expansion of the persistence range along $\alpha<0$. 
Finally, in the high productivity regime ($\alpha>0$),  a weakening of the facilitation or strengthen of competition leads to inhomogeneous distributions
in a parameter regime where the vegetation would be uniform in the absence of coupling. 
Mathematically, this can be attributed to the declined homogenization effect of the discrete-Laplacian.

\section{Discussion and Conclusions} \label{conclusions}
In this work, we considered the discrete counterpart of the Lefever-Lejeune equation describing spatial and temporal dynamics of vegetation in dry-lands, in a $1D$-dimensional lattice. 
Spatial competition and facilitation mechanisms are present in the lattice system through the coupling terms involving the discrete bi-harmonic and Laplacian operators.
First, by taking the advantage of the properties of the discrete phase space, we were able to prove decay estimates, or uniform in time bounds for the solutions, for certain parametric regimes. Both cases are of significant physical importance: the former is associated with the extinction of vegetation densities, while the latter  to the existence of attracting sets.
Since it is unclear if the system possesses a gradient structure, we performed a linear stability analysis of spatial homogeneous equillibria, to get insight on its time-asymptotic states. We identified thresholds on the discretization parameter which separate convergence to spatial homogeneous steady-states, from  the convergence to inhomogeneous ones when the discretization parameter is below this threshold. Remarkably, in the case of periodic boundary conditions, the linear stability analysis around  uniform states, apart from predicting the occurrence of periodic patterns, provided analytical criteria for determining their possible period and shape, in terms of the lattice parameters.

The analytical results corroborated with numerical simulations, revealed-in full agreement with the analytical predictions-that the facilitation effect, has a crucial impact on the extinction and pattern formation dynamics.  In the case of a finite lattice imposed by Dirichlet boundary conditions (vanishing vegetation density on the boundary), we found that extinction can even occur in the high productivity regime, when the facilitation strength is sufficiently large. On the contrary, when the facilitation strength is sufficiently small, the system may exhibit various spatial patterns as asymptotic states.  

Regarding the equilibrium set, it was revealed that it possesses a much richer structure  when the coupling parameter assumes intermediate values below its threshold value, than when it is approaching the continuous limit: space-periodic equilibria due to a Turing-like destabilization of a uniform state and  localized structures such as single spikes or multi-spikes may emerge in the low productivity regime. In the high productivity regime,  hybrid equilibria consisting of mosaics of periodic states, and localized states in a spatial periodic background, may occur as asymptotic states for the system.   Moreover, the spatially inhomogeneous and localized solutions arise for smaller
values of the productivity parameter in the lattice, than those considered in the continuous limit, shrinking the extinction regime with respect to the productivity parameter, and 
forming a wider bi-stability region between inhomogeneous in space solutions and the trivial steady-state, along the productivity gradient.  

The above results constitute a starting point for further explorations. An important, yet incomplete issue, concerns a detailed bifurcation analysis for the emergence of the aforementioned spatial-inhomogeneous states and the potential detection of snaking effects \cite{snake1,snake2,snake3}.   Of exceptional interest may be the study of the higher-dimensional lattice, as exciting pattern formation dynamics may emerge  due to the interplay of discreteness and higher-dimensionality, see \cite{hiD2,hiD1}. Another interesting direction is leading to the study of relevant coupled lattice systems \cite{sys0,hiD2,sys1,sys2,sys3}.  Studies revolving around the above themes are in progress and will be reported in future works.
\section*{Acknowledgments}
The authors gratefully acknowledge the support of the grant MIS 5004244 under the action $\mathrm{E\Delta BM34}$, funded by European Social Fund (ESF) and Hellenic General Secretariat of Research and Technology (GSRT).\\ 
\appendix
\label{App}
\section{Phase spaces and continuity properties of discrete linear operators.}
The standard sequence spaces
\begin{eqnarray}
\label{eq6}
{\ell}^p:=\left\{U=(U_n)_{n\in\mathbb{Z}}\in\mathbb{R}:\quad
\|U\|_{\ell^p}:=\left(\sum_{n\in\mathbb{Z}}|U_n|^p\right)^{\frac{1}{p}}<\infty\right\},
\end{eqnarray}
for $1\leq p\leq\infty$,  come into play in the case of the infinite lattice supplemented with the vanishing boundary conditions \eqref{vanv}. They  posses the inclusion relation 
\begin{eqnarray}
\label{eq7}
\ell^q\subset\ell^p,\quad \|U\|_{\ell^p}\leq \|U\|_{\ell^q}, \quad 1\leq q\leq p\leq\infty,
\end{eqnarray}	
which is one of the key properties for the manipulation of the higher-order nonlinearities. It is always useful to highlight that it is a reverse order inclusion relation with respect to the ordering of the exponents $q\leq p$, if compared with inclusion relation $L^p(\Omega)\subset L^q(\Omega)$ of the continuous spaces $L^p(\Omega)$ of measurable functions when the $\Omega\subseteq\mathbb{R}^N$ has finite measure.

In the case of the periodic boundary conditions, the system is considered in the spaces of periodic sequences
\begin{eqnarray}
\label{eq20}
{\ell}^p_{\mathrm{per}}:=\left\{U=(U_n)_{n\in\mathbb{Z}}\in\mathbb{R}:\quad U_n=U_{n+N},\quad
\|U\|_{\ell^p_{\mathrm{per}}}:=\left(h\sum_{n=0}^{N-1}|U_n|^p\right)^{\frac{1}{p}}<\infty\right\}, \quad 1\leq p\leq\infty,
\end{eqnarray}
while in the case of the Dirichlet boundary conditions, the system is considered in the finite dimensional subspaces of $\ell^p$
\begin{eqnarray}
\label{eq21}
\ell^p_0=\left\{U\in\ell^p\;:\;U_0=U_{N}=0\right\},
\end{eqnarray}
endowed with the same norm given as \eqref{eq20}.  Interested in a potential approximation of the continuum limit, we defined the norm ${\ell}^p_{\mathrm{per}}$, by following the standard 
numerical approximation of the $L^p(\Omega)$-norm. In this setting,
the case $h=O(1)$ corresponds to the discrete regime of the system, while, as noted above, when $h\rightarrow0$, approximates the continuous LL-counterpart \eqref{eq5}. Evidently, both cases of boundary conditions give rise to a finite dimensional dynamical system (for the periodic lattice we may restrict the dynamics on the fundamental interval $\Omega$), on $\mathbb{R}^{N+1}$. 
In both cases of the finite dimensional subspaces, instead of the inclusion relation \eqref{eq7} we shall use the equivalence of norms in $\mathbb{R}^{N+1}$:
\begin{eqnarray}
\label{eq22}
||U||_{\ell^q}\leq ||U||_{\ell^p}\leq N^{\frac{(q-p)}{qp}}||U||_{\ell^q},\;\;1\leq p\leq q<\infty,
\end{eqnarray}
using fro brevity the same symbol $||\cdot||_{\ell^p}$ for the norms in finite dimesnional cases.
\begin{lemma}
	\label{LA1}
	Let $Z={\ell^p_{\mathrm{per}}}, \ell^p_{0}$, $1\leq p\leq\infty$. The discrete Laplacian $\Delta_d:H\rightarrow \ell^p$ satisfies the inequality $\|\Delta_dU\|_{\ell^p}\leq C\|U\|_Z$, for all $U\in Z$.
\end{lemma}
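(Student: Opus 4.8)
The plan is to exploit the representation of the discrete Laplacian as a finite combination of shift operators and to reduce the estimate to the fact that shifts act isometrically on the relevant sequence spaces. Writing $S$ for the shift $\{SU\}_n=U_{n+1}$ and $S^{-1}$ for its inverse, the definition \eqref{eq3} reads $\Delta_d=S-2\,\mathrm{Id}+S^{-1}$. Since this is a decomposition into three operators, the triangle inequality for the $\ell^p$-norm immediately gives $||\Delta_dU||_{\ell^p}\leq ||SU||_{\ell^p}+2||U||_{\ell^p}+||S^{-1}U||_{\ell^p}$, and the claim follows with the explicit constant $C=4$ once we confirm that $S$ and $S^{-1}$ do not increase the norm of elements of $Z$.

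First I would treat $Z=\ell^p_{\mathrm{per}}$. Here the relevant shift is the cyclic shift on sequences of period $N$, and since the periodic norm in \eqref{eq20} is $h$ times the $\ell^p$-sum over a single period, a cyclic relabelling of the indices $n=0,\dots,N-1$ leaves it invariant by periodicity; hence $||SU||_{\ell^p_{\mathrm{per}}}=||S^{-1}U||_{\ell^p_{\mathrm{per}}}=||U||_{\ell^p_{\mathrm{per}}}$, and the combination above yields $||\Delta_dU||_{\ell^p}\leq 4||U||_{\ell^p_{\mathrm{per}}}$ directly. For $p=\infty$ the same bound holds with the supremum norm, the triangle inequality being applied pointwise.

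Next I would treat the Dirichlet subspace $Z=\ell^p_0$. The subtlety here is that a raw shift of a vector in $\ell^p_0$ need not satisfy the boundary constraints, so rather than invoking isometry I would argue directly from the tridiagonal matrix representation of $\Delta_d$ recorded in Section II: each interior row carries entries $(1,-2,1)$ and the two boundary-adjacent rows carry $(-2,1)$ or $(1,-2)$, so every row has absolute entry-sum at most $4$. A componentwise triangle inequality gives $|\{\Delta_dU\}_n|\leq |U_{n+1}|+2|U_n|+|U_{n-1}|$ at each node, and summing the $p$-th powers (via Minkowski's inequality, or taking the supremum when $p=\infty$) produces $||\Delta_dU||_{\ell^p}\leq 4||U||_{\ell^p}$ under the convention $U_0=U_N=0$.

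The main obstacle is bookkeeping rather than anything deep: all spaces in play are finite-dimensional, so boundedness of the linear map $\Delta_d$ is automatic and the real content is only to exhibit a clean, $N$-independent constant and to check that the boundary rows in the Dirichlet case do not inflate it past $4$. I would close by noting that the resulting constant is uniform in both $N$ and $p$, which is precisely what the later uniform-in-time estimates of Section III require.
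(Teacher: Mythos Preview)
Your proof is correct and follows essentially the same route as the paper: both apply the triangle inequality to the three-term decomposition $\Delta_d = S - 2\,\mathrm{Id} + S^{-1}$ and then use the boundary conditions (periodicity or $U_0=U_N=0$) to control the shifted sums. The paper carries this out by explicit index relabelling of the sums $\sum|U_{n\pm 1}|^p$, whereas you phrase it as shift-invariance of the norm; your use of Minkowski rather than the pointwise bound $|a+b+c|^p\leq C(|a|^p+|b|^p+|c|^p)$ yields the explicit, $p$-independent constant $C=4$, which is a minor improvement over the paper's $p$-dependent constant.
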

\begin{proof} By the definition of $\Delta_d$ (see \eqref{eq3}) and the definition of $\ell^p_{\mathrm{per}}$ (see \eqref{eq20}), we have:
	\begin{equation}\label{ineq:(a)}
	\begin{aligned}
	\|\Delta_dU\|_{\ell^p}^p=&h \sum_{n=0}^{N-1}\left|U_{n+1}-2 U_n+U_{n-1}  \right|^p \\
	\leq& C\left(\sum_{n=0}^{N-1}|U_{n+1}|^p+h\sum_{n=0}^{N-1}|U_n|^p+h\sum_{n=0}^{N-1}|U_{n-1}|^p\right)\\
	=& C(hS_1+hS_2+hS_3),
	\end{aligned}
	\end{equation}
	for some positive constant $C>0$ depending on $p$. For the sums $S_1$ and $S_2$ , we perform the following change of variables:
	Let $j=n+1$. Then $n=0$ implies $j=1$ and for $n=N-1$, $j=(N-1)+1=N$.
	Thus, $S_1=\sum_{j=1}^{N}|u_{j}|^2$.  Similarly, for $S_3$: let $j=n-1$. Then for $n=0$, $j=-1$ and for $n=N-1$, $j=(N-1)-1=N-2$.
	Therefore, (\ref{ineq:(a)}) is rewritten as
	\begin{equation}\label{ineq:(b)}
	\begin{aligned}
	\|\Delta_dU\|_{\ell^p}^p\leq &C\left(h\sum_{j=1}^{N}|U_{j}|^p+h\sum_{j=0}^{N-1}|U_{j}|^p+h\sum_{j=-1}^{N-2}|U_{j}|^p\right)\\
	=&C[h\left(|U_1|^p+|U_2|^p+\dots+|U_N|^p\right)+h\left(|U_0|^p+|U_1|^p \dots+|U_{N-1}|^p \right) \\
	&+  h\left(|U_{-1}|^p+|U_0|^p+\dots+|U_{N-2}|^p\right)].
	\end{aligned}
	\end{equation}
	However, due to the periodic boundary conditions $U_{-1} = U_{N-1}$ (also $U_0=U_N$),  the inequality (\ref{ineq:(b)}) becomes: 
	\begin{equation}\label{ineq:(c)}
	\begin{aligned}
	\|\Delta_dU\|_{\ell^p}^p\leq &C[h\left(|U_0|^p+|U_1|^p+|U_2|^p+\dots+|U_{N-1}|^p\right )+h\left(|U_0|^p+|U_1|^p+\dots|U_{N-1}|^p\right)\\
	&+h\left(|U_0|^p+\dots+|U_{N-2}|^p+|U_{N-1}|^p\right)]\leq Ch\sum_{j=1}^{N-1}|U_j|^p=C\|U\|_{\ell^p_{\mathrm{per}}}^p.
	\end{aligned}
	\end{equation}
	and the lemma is proved, in the case where $U=\ell^2_{\mathrm{per}}$.  In the case of $U=\ell^p_0$, we note the following: when considering the standard Dirichlet boundary conditions, 
	for the first term of (\ref{ineq:(b)}), since $u_N=0$ and $u_0=0$ we have that  $\sum_{n=1}^{N-1}|U_n|^p=\sum_0^{N} |U_n|^p= \|U\|_{\ell^p}^p$,
	while, for the second term of (\ref{ineq:(b)}), since $u_0=0$, we have $\sum_{n=1}^{N-1}| U_n|^p=\| U \|_{\ell^p}^p$.
	For the third term, we note that  $U_{-1}=0$ by definition.
	Then, the sum in the third term becomes $\sum_{n=0}^{N-2}|U_n|^p\leq \sum_{ n=0}^{N-1 }|U_n|^p$. Inserting all the above in (\ref{ineq:(a)}), we derive the claimed inequality. When the Dirichlet boundary conditions of the second kind are considered, without pre-assuming zero values for the nodes $U_{-1}$ and $U_{N+1}$, then due to the antisymmetric boundary conditions \eqref{D1}-\eqref{D2}, we have obviously only a  modification of the generic constant $C$. 
\end{proof}
\begin{lemma}
	\label{LA2}
	Let $Z={\ell^p_{\mathrm{per}}}, \ell^p_{0}$. The discrete biharmonic $\Delta_d^2:H\rightarrow \ell^p$ satisfies the inequality $\|\Delta_d^2U\|_{\ell^p}\leq C\|U\|_Z$, for all $U\in Z$.
\end{lemma}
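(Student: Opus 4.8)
The plan is to exploit the factorization $\Delta_d^2=\Delta_d\circ\Delta_d$ built into \eqref{eq4} and to reduce the claim to the already established Lemma \ref{LA1} wherever possible, reverting to a direct stencil estimate only where the boundary forces it. In the periodic case $Z=\ell^p_{\mathrm{per}}$ the reduction is immediate: the discrete Laplacian maps $N$-periodic sequences to $N$-periodic sequences, so $\Delta_d U\in\ell^p_{\mathrm{per}}$ whenever $U\in\ell^p_{\mathrm{per}}$, and two successive applications of Lemma \ref{LA1} give
\[
\|\Delta_d^2 U\|_{\ell^p}=\|\Delta_d(\Delta_d U)\|_{\ell^p}\leq C\,\|\Delta_d U\|_{\ell^p_{\mathrm{per}}}\leq C^2\,\|U\|_{\ell^p_{\mathrm{per}}},
\]
with $C=C(p)$ independent of $N$ and $h$.

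For the Dirichlet case $Z=\ell^p_0$ I would instead estimate $\Delta_d^2$ directly from its five-point stencil \eqref{eq4}, in complete parallel with the reindexing argument used to prove Lemma \ref{LA1}. First I would write, for the interior index range,
\[
\|\Delta_d^2 U\|_{\ell^p}^p = h\sum_{n}\left|U_{n+2}-4U_{n+1}+6U_n-4U_{n-1}+U_{n-2}\right|^p,
\]
and apply the elementary convexity bound $\left|\sum_{i=1}^{5}a_i\right|^p\leq 5^{p-1}\sum_{i=1}^{5}|a_i|^p$, so that the stencil weights $1,4,6,4,1$ are absorbed into a single constant $C(p)$ and the right-hand side splits into five shifted sums. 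Each shifted sum $h\sum_n|U_{n+j}|^p$ for $j\in\{-2,-1,0,1,2\}$ is then reindexed onto a translated copy of $h\sum_j|U_j|^p$, exactly as in the step from \eqref{ineq:(a)} to \eqref{ineq:(c)}.

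The main obstacle, and the only genuinely delicate point, is the fate of the two exterior nodes $U_{-1}$ and $U_{N+1}$, which are the only indices outside $\{0,\dots,N\}$ reached by the outermost shifts (the interior range $n=1,\dots,N-1$ keeps the stencil from ever reaching $U_{-2}$ or $U_{N+2}$). Here I would invoke the boundary data fixed in Section II: in the second-kind implementation the antisymmetric relations \eqref{D1}--\eqref{D2} give $|U_{-1}|=|U_1|$ and $|U_{N+1}|=|U_{N-1}|$, while in the first-kind implementation these exterior values are locked to zero; in either option every boundary contribution is either zero or equal to an interior $|U_j|^p$, so all five shifted sums are dominated by $C\|U\|_{\ell^p_0}^p$. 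Collecting the five pieces yields $\|\Delta_d^2 U\|_{\ell^p}^p\leq C(p)\|U\|_{\ell^p_0}^p$, and taking $p$-th roots gives the assertion for $1\leq p<\infty$; the case $p=\infty$ is simpler still, bounding $\|\Delta_d^2 U\|_{\ell^\infty}$ by $(1+4+6+4+1)\|U\|_{\ell^\infty}=16\|U\|_{\ell^\infty}$. As in Lemma \ref{LA1}, the choice of boundary option affects only the value of the generic constant $C$.
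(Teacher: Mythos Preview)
Your proof is correct. For the Dirichlet case it matches the paper's direct five-point stencil estimate with reindexing, and your observation that the interior range $n=1,\dots,N-1$ never forces the stencil past $U_{-1}$ or $U_{N+1}$ is in fact a bit cleaner than the paper, which sums from $n=0$ and then has to dispose of $|U_{-2}|^p$ as well.

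The periodic case is where you differ substantively. The paper repeats the full stencil computation: it expands $\|\Delta_d^2U\|_{\ell^p}^p$ into five shifted sums $h\sum|U_{n+j}|^p$ for $j\in\{-2,-1,0,1,2\}$, reindexes the outer shifts, and uses $U_{N}=U_0$, $U_{N+1}=U_1$, $U_{-1}=U_{N-1}$, $U_{-2}=U_{N-2}$ to identify each shifted sum with $\sum_{n=0}^{N-1}|U_n|^p$. You instead observe that $\Delta_d$ preserves $N$-periodicity and iterate Lemma~\ref{LA1} twice. Your route is shorter and makes the structural point that $\Delta_d^2=\Delta_d\circ\Delta_d$ does all the work when the space is invariant under $\Delta_d$; the paper's route is more self-contained and gives an explicit (though unneeded) constant. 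The reason you cannot run the same shortcut in $\ell^p_0$ is exactly the one you identify implicitly: $\Delta_dU$ need not vanish at $n=0,N$, so $\Delta_dU\notin\ell^p_0$ and Lemma~\ref{LA1} does not apply a second time.
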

\begin{proof}
	By the definition of the biharmonic operator \eqref{eq4} and of $\ell^2_{\mathrm{per}}$, we have:
	\begin{equation}\label{eq:I}
	\begin{aligned}
	\|\Delta_d^2U \|_{\ell^p}^p\leq C\left(h\sum_{n=0}^{N-1}|U_{n+2}|^p+h\sum_{n=0}^{N-1}|U_{n+1}|^p+h\sum_{n=0}^{N-1}|U_n|^2
	+h\sum_{n=0}^{N-1}|U_{n-1}|^p+h\sum_{n=0}^{N-1}|U_{n-2}|^p\right).
	\end{aligned}												
	\end{equation}
	We only need to consider the first and the last term of the right hand side of the above inequality. The rest can be treated as in the previous Lemma \ref{LA1}.
	Again we use change of variables. For the sum of the first term we have:
	\begin{equation}
	\label{aux1}
	\begin{aligned}
	\sum_{n=0}^{N-1}|U_{n+2}|^p= \sum_{j=2}^{N+1}|U_j|^p
	=\left(|U_2|^p+|U_3|^p+\dots+|U_N|^p+|U_{N+1}|^p\right)
	=\sum_{n=0}^{N-1}|U_n|^p,
	\end{aligned}
	\end{equation}
	where we have set $j=n+2$  (so that for $n=0$, we get $j=-2$ and for $n=N-1$, we get $j=N+1$), and used the periodic boundary conditions.
	Similarly, for  the sum of the last term we have:
	\begin{equation}
	\label{aux2}
	\begin{aligned}
	\sum_{n=0}^{N-1}|U_{n-2}|^p=&\sum_{j=-2}^{N-3} |U_j|^p=\left(|U_{-2}|^p +|U_{-1}|^p+|U_{0}|^p+\dots+|U_{N-4}|^p+|U_{N-3}|^p\right)\\
	=& \left( |U_{N-2}|^p+|U_{N-1}|^p+|U_0|^p+\dots |U_{N-4}|^p+|U_{N-3}|^p\right) \\
	=& \left(|U_0|^p+|U_1|^p+\dots |U_{N-4}|^p+|U_{N-3}|^p+|U_{N-2}|^p+|U_{N-1}|^p\right) \\
	=&\sum _{n=0}^{N-1}|U_n|^p.
	\end{aligned}
	\end{equation}
	Then,   by inserting equations \eqref{aux1} and \eqref{aux2} in (\ref{eq:I}), we conclude with the proof, in the case of $U=\ell^p_{\mathrm{per}}$.  In the case of Dirichlet boundary conditions where $U\in\ell^2_0$, instead of \eqref{aux2}, we have the inequality:
	\begin{equation}
	\begin{aligned}
	\sum_{n=0}^{N-1}|U_{n-2}|^p=&\sum_{j=-2}^{N-3} |U_j|^p=\left(|U_{-2}|^p +|U_{-1}|^p+|U_{0}|^p+\dots+|U_{N-4}|^p+|U_{N-3}|^p\right)\\
	&\leq \left( |U_0|^p+|U_1|^p+\dots |U_{N-3}|^p+|U_{N-2}|^p+|U_{N-1}|^p\right) \\
	&=\sum _{n=0}^{N-1}|U_n|^p,
	\end{aligned}
	\end{equation}
	and similarly we manipulate the first term of \eqref{eq:I}. When the generic antisymmetric boundary conditions \eqref{D1}-\eqref{D2} are considered, still the proof is modified up to the generic constant $C$.
\end{proof}


\end{document}